\documentclass[11pt]{article}
\usepackage{amsmath,amsfonts,amssymb,amsthm}
\allowdisplaybreaks[1]
\usepackage{graphicx,color,slashed}
\usepackage[english]{babel}
\usepackage{url}
\usepackage{booktabs}
\usepackage{comment}
\usepackage{fullpage}
\usepackage{xcolor}
\usepackage{bm}
\usepackage{fancyhdr, dsfont}
\usepackage{float}
\usepackage{url}
\usepackage{setspace}
\usepackage[nosort]{cite}
\usepackage{mathtools}
\usepackage{color}
\usepackage{hyperref}
\definecolor{darkred}{rgb}{0.8,0.1,0.1}
\hypersetup{colorlinks=true, linkcolor=darkred, urlcolor=darkred,citecolor=blue, linktoc=page}

\font\tenshuffle=shuffle10 \font\sevenshuffle=shuffle7 \font\fiveshuffle=shuffle7 at 5pt
\def\shuffle{{%
\def\Dshuffle{\mathbin{\hbox{\tenshuffle\char'001}}}%
\def\Sshuffle{\mathbin{\hbox{\sevenshuffle\char'001}}}%
\def\SSshuffle{\mathbin{\hbox{\fiveshuffle\char'001}}}%
\mathchoice{\Dshuffle}{\Dshuffle}{\Sshuffle}{\SSshuffle}}}


\newcommand{\CC}{\mathbb C}

\newcommand{\MM}{\mathbb M}

\newcommand{\GG}{\mathbb G}
\newcommand{\HH}{\mathbb H}
\newcommand{\FF}{\mathbb F}

\newcommand{\QQ}{\mathbb Q}

\newcommand{\YY}{\mathbb Y}
\newcommand{\BB}{\mathbb B}


\newcommand{\dd}{\mathrm{d}}


\newcommand{\mot}{\mathfrak m}
\newcommand{\dR}{\mathfrak{dr}}
\newcommand{\sv}{\mathrm{sv}}


\newcommand{\bea}{\begin{eqnarray}}
\newcommand{\eea}{\end{eqnarray}}
\def\beq{\begin{equation}}
\def\eeq{\end{equation}}

\newtheorem{theorem}{Theorem}[section]

\newtheorem{lemma}[theorem]{Lemma}
\newtheorem{prop}[theorem]{Proposition}
\newtheorem{remark}[theorem]{Remark}
\newtheorem{definition}[theorem]{Definition}


\begin{document}
{\flushright  
 UUITP--09/25\\}

\begin{center}

\vspace{5mm}

{\bf {\Large \sc 
Deriving motivic coactions and single-valued maps \\[2mm] at genus zero from zeta generators }}

\renewcommand{\thefootnote}{\fnsymbol{footnote}}

\vspace{8mm}
\normalsize
{\large  Hadleigh Frost$^1$,
Martijn Hidding$^2$,
Deepak Kamlesh$^3$, \\ \vskip 0.45 em
Carlos Rodriguez$^{4,5}$, 
Oliver Schlotterer$^{4,6}$,
and Bram Verbeek$^{4}$} \\ \vskip 0.45 em

\vspace{8mm}
${}^1${\it Institute for Advanced Study, 1 Einstein Dr, Princeton, NJ 08540, U.S.A.}\footnote{\href{mailto:frost@ias.edu}{frost@ias.edu}}
\vskip 0.5 em
${}^2${\it Institute for Theoretical Physics, ETH Z\"urich, 8093 Z\"urich, Switzerland}
\vskip 0.5 em
${}^3${\it Max Planck Institute for Mathematics, 53111 Bonn, Germany}\,\footnote{\href{mailto:kamlesh@mpim-bonn.mpg.de}{kamlesh@mpim-bonn.mpg.de}}
\vskip 0.5 em
${}^4${\it Department of Physics and Astronomy, Uppsala University, 75108 Uppsala, Sweden}
\vskip 0.5 em
${}^5${\it Max Planck Institute for Mathematics in the Sciences, 04103 Leipzig, Germany}\,\footnote{\href{mailto:carlos.rodriguez@mis.mpg.de}{carlos.rodriguez@mis.mpg.de}}
\vskip 0.5 em
${}^6${\it Department of Mathematics, Centre for Geometry and Physics, \\ Uppsala University, Box 480, 75106 Uppsala, Sweden}\,\footnote{\href{mailto:oliver.schlotterer@physics.uu.se}{oliver.schlotterer@physics.uu.se}}
\vskip 0.5 em

\renewcommand{\thefootnote}{\arabic{footnote}}

\vspace{8mm}

\hrule

\vspace{8mm}

\begin{tabular}{p{150mm}}

Multiple polylogarithms are equipped with rich algebraic structures including the motivic coaction and the single-valued map which both found fruitful applications in high-energy physics. In recent work \href{https://arxiv.org/pdf/2312.00697}{arXiv:2312.00697}, the current authors presented a conjectural reformulation of the motivic coaction and the single-valued map via zeta generators, certain operations on non-commuting variables in suitable generating series of multiple polylogarithms. In this work, the conjectures of the reference will be proven for multiple polylogarithms that depend on any number of variables on the Riemann sphere.

\end{tabular}

\vspace{3mm}

\hrule

\end{center}

\thispagestyle{empty}

\newpage
\setcounter{page}{1}

\numberwithin{equation}{section}

\setcounter{tocdepth}{2}
\tableofcontents


\section{Introduction}
\label{sec:1}

Multiple polylogarithms (MPLs) \cite{GONCHAROV1995197, Goncharov:1998kja, Remiddi:1999ew, Goncharov:2001iea, Vollinga:2004sn} complete the space of rational functions in one or more complex variables $z_1,z_2,\ldots,z_n$ to close under integration over any of the $z_i$ \cite{Brown:2009qja}. Hence, MPLs find a wealth of applications in any discipline where integrations over the Riemann sphere arise. MPLs are central to perturbative computations in quantum field theory \cite{Duhr:2014woa, Henn:2014qga, Abreu:2022mfk, Weinzierl:2022}, string theory \cite{Berkovits:2022ivl, Mafra:2022wml} and adjacent areas of high-energy physics. Moreover, MPLs fruitfully connect algebraic geometry and number theory due to their study in the light of motivic periods and the appearance of multiple zeta values (MZVs) \cite{Jianqiang, GilFresan} as special values of MPLs. Many of the applications of MPLs within mathematics and to physics rely on the motivic coaction \cite{Goncharov:2001iea, Goncharov:2005sla, BrownTate, Duhr:2012fh, Brown2014MotivicPA}, and the single-valued map \cite{Schnetz:2013hqa, Brown:2013gia, Brown:2018omk}.

However, the construction of MPLs from iterated integrals of rational functions limits their application to integration on the Riemann sphere as opposed to Riemann surfaces of genus $\geq 1$ or higher-dimensional varieties. Advances to extend MPLs to the torus --- so-called elliptic polylogarithms \cite{Lev, LevRac, BrownLev, Broedel:2014vla, Broedel:2017kkb, Bourjaily:2022bwx, BEFZ:2307, Berkovits:2022ivl} --- and to higher-genus Riemann surfaces \cite{Enriquez:2011, BEFZ:2110, Ichikawa:2022qfx, BEFZ:2212, DHoker:2023vax, Baune:2024biq, DHoker:2024ozn, Baune:2024ber, DHoker:2025szl, DHoker:2025dhv, EZ:2025} became a vibrant interdisciplinary research area that stimulated numerous collaborations between mathematicians and physicists. The significance of the motivic coaction and single-valued map of MPLs for quantum field theory and string theory, see e.g.\ \cite{Duhr:2012fh, Schnetz:2013hqa, Cartier:1988,  Schlotterer:2012ny, Drummond:2013vz, Abreu:2014cla, Brown:2015fyf, Panzer:2016snt, Abreu:2017enx, Abreu:2017mtm, Schnetz:2017bko, Caron-Huot:2019bsq, Abreu:2019xep, Tapuskovic:2019cpr, Gurdogan:2020ppd, Britto:2021prf, Dixon:2021tdw, Borinsky:2022lds, Dixon:2022xqh, Dixon:2023kop, Dixon:2025zwj} and \cite{Schlotterer:2012ny, Dixon:2012yy, Drummond:2012bg, DelDuca:2013lma, Stieberger:2013wea, Stieberger:2014hba, Zerbini:2015rss, DHoker:2015wxz, DelDuca:2016lad, Broedel:2016kls, Broedel:2018izr, Schlotterer:2018abc, Vanhove:2018elu, Brown:2019wna, DelDuca:2019tur, Gerken:2020xfv, Alday:2022xwz, Alday:2023jdk, Fardelli:2023fyq, Duhr:2023bku, Alday:2024ksp, Alday:2024rjs}, respectively, provides tremendous motivation in high-energy physics to make similar structures accessible for elliptic and higher-genus polylogarithms.

With this long-term motivation in mind, the present authors proposed a reformulation~\cite{Frost:2023stm} of the motivic coaction and the single-valued map of MPLs in any number of complex variables. The proposed formulas use zeta generators, which are free generators of the motivic Lie algebra\footnote{Formally, the motivic Lie algebra we consider is the graded Lie algebra of the pro-unipotent part of the motivic Galois group of the category of mixed Tate motives unramified over $\mathbb{Z}$, denoted by $\mathcal{G}_U$ in \cite{BrownTate}. For our purposes, it suffices to regard this as a free Lie algebra generated by the zeta generators, which are labeled by the odd numbers $3,5,7,\ldots$.} \cite{DG:2005, BrownTate, Brown:Anatomy, Brown:depth3, LI202038}, to recast the expressions in the literature for motivic coactions \cite{Goncharov:2001iea, Goncharov:2005sla, BrownTate, Ihara1989TheGR,  Brown:2011ik, Brown:2019jng} and single-valued MPLs \cite{DelDuca:2016lad, Broedel:2016kls, svpolylog,  Charlton:2021uhu} into a more genus-agnostic form. Indeed, by the tight interplay between zeta generators at genus zero and genus one \cite{Dorigoni:2024iyt}, Brown's single-valued iterated Eisenstein integrals \cite{Brown:mmv, Brown:2017qwo, Brown:2017qwo2} were recently generated from certain series in genus-one zeta generators
\cite{Dorigoni:2024oft} that closely mirror the construction of single-valued MPLs in \cite{Frost:2023stm}. Similar unified formulae for motivic coactions at genus zero and genus one obtained from zeta generators are under investigation \cite{Kleinschmidt:2025dtk} and aim to complement the earlier literature on coaction formulae at genus one in physics \cite{Broedel:2018iwv, Wilhelm:2022wow, Forum:2022lpz} and mathematics \cite{Tapuskovic:2023xiu}.

The goal of the present work is to prove the conjectural formulae of \cite{Frost:2023stm} for the motivic coaction and the single-valued map of MPLs. First, the proof of the coaction formulae relies on the multivariate generalization \cite{Brown:2019jng} of the Ihara coaction formula \cite{Ihara1989TheGR} for generating series of MPLs subject to Knizhnik–Zamolodchikov equations. The action of the braid group on MPLs \cite{Kohno:CFT, KZB, Britto:2021prf} plays a key role in this proof, and we make use of concrete formulas for this action. The work of Ihara on Drinfeld associators, and especially the Ihara derivations \cite{Ihara1989TheGR, Ihara:stable, Furusho2000TheMZ} is also used at a key step, to recast the Ihara coaction formula in a form that better relates to our new formula. Second, the construction of single-valued MPLs via zeta generators is proven in two different ways: one proof is based on the relation between the motivic coaction and the single-valued map \cite{Brown:2013gia, DelDuca:2016lad} and a description of the antipode of MPLs \cite{Goncharov:2001iea} given in terms of zeta generators. An alternative proof proceeds by direct matching with the construction of single-valued MPLs in \cite{DelDuca:2016lad}. Throughout all of this work, we make repeated use of the identities of free Lie algebras and free associative algebras.

\subsection{Preview of main theorems and their motivation}

More specifically, the main results of this work are the proofs of Theorems \ref{thm:main} and \ref{thm:sv} below whose statements were conjectured in \cite{Frost:2023stm}. 

Let $\mathbb G_1$ denote a generating series of MPLs defined by iterated integrals of one-forms $\dd t /(t{-}a)$ on the configuration space of points on a punctured Riemann sphere, with alphabet $a \in \{ z_2,\ldots,z_n,0,1 \}$ and $t,z_1,\ldots,z_n \in \mathbb C$. The integral is over a homotopy class of paths from $0$ to $z_1$ avoiding the points in the alphabet. Note that the MPLs are multi-valued functions in the complex variables $z_1$, \ldots, $z_n$, and the generating series $\GG_{1}$ takes values in the degree completion of the universal enveloping algebra of the infinitesimal braid group as we will see later. The iterated-integral representation of MPLs (and MZVs as their special values) lets us view them as periods. 

Our first main Theorem \ref{thm:main} reformulates the motivic coaction $\Delta$ of the MPLs in $\mathbb G_1$ \cite{Goncharov:2001iea, Goncharov:2005sla, Brown:2011ik, BrownTate} in terms of the generating-series identity
\beq
\Delta \GG_1^\mot = \left( \HH_n^\dR \right)^{-1} \, \GG_1^\mot \, \HH_n^\dR \, \GG_1^\dR  \, , \ \ \ \ \ 
\HH^\dR_n = \MM^\dR \,\GG^\dR_n\, \cdots \GG^\dR_{2}\, ,
\label{prevone}
\eeq
where $\mathbb G_{j}$ ($j \geq 2$) is the generating series of MPLs associated with the path from $0$ to $z_j$ and one-forms ${\dd t}/(t{-}a)$ in the reduced alphabet $a \in \{ z_{j+1},\ldots,z_n,0,1 \}$. The MPLs in the generating series $\mathbb G_{j}$ ($1\leq j \leq n$) are accompanied by words in braid generators subject to well-studied bracket relations \cite{jimbo1989introduction, kassel2012quantum}.
The coaction of individual MPLs is obtained by taking coefficients of the series in (\ref{prevone}) once the right-hand side is cast into the same basis of words in the braid generators as the left-hand side (an accomplishment of this paper is to explain why and how this can be done). The superscripts $\mot$ and $\dR$ in (\ref{prevone}) refer to the lift of MPLs and MZVs to motivic and de Rham periods, respectively, and distinguish the first and second entry of the tensor product produced by the motivic coaction according to $X^\mot = X^\mot \otimes 1$ and $X^\dR = 1\otimes X^\dR$. Finally, $\MM^\dR$ in (\ref{prevone}) is a generating series for de Rham MZVs in their~$f$-alphabet representation (see Section \ref{sec:2.3} and \cite{Brown:2011ik, BrownTate}) accompanied by \emph{zeta generators} (denoted by $M_{2k+1}$ with $k\in\mathbb N$), which are derivations on the infinitesimal braid group. The properties of the zeta generators proposed in \cite{Frost:2023stm} will be derived in this work.

Our second main result, Theorem \ref{thm:sv}, expresses the result of applying the single-valued map $\sv$ to the generating series $\mathbb G_1$ in a form that is very similar to (\ref{prevone}),
\beq
\sv\, \GG_1 =  \left(\sv \,\HH_{n}\right)^{-1} \,\overline{\GG_1^t}\, \left(\sv \, \HH_{n}\right) \GG_1 \, , \ \ \ \ \ 
\sv\, \HH_n = (\sv\, \MM) \, (\sv \, \GG_n) \ldots (\sv\, \GG_2) \, ,
\label{prevtwo}
\eeq
where $\overline{\GG_1^t}$ denotes the complex conjugate of the series $\GG_1$ with a reversed concatenation order of its braid generators. Upon taking the coefficients of the series (\ref{prevtwo}) (in the sense of the comments below (\ref{prevone})), one obtains a reformulation of the single-valued map of individual MPLs constructed in \cite{svpolylog, Broedel:2016kls, DelDuca:2016lad}.

The conjugations by $\HH^\dR_n$, $\sv\, \HH_n$ in (\ref{prevone}), (\ref{prevtwo}) ensure that all zeta generators in $\MM^\dR$, $\sv\,\MM$ and braid generators in $\GG_{j}$ ($j\geq 2)$ enter the formulas as nested commutators acting on $\GG_{1}$. Both the applications and the proofs of (\ref{prevone}) and (\ref{prevtwo}) rely on the bracket relations between zeta generators and braid generators which ensure that their right-hand sides are expressible solely in terms of the braid generators of $\GG_{1}$. It is worth highlighting two practical advantages of these reformulations:
\begin{itemize}
\item The fibration bases (as in Section 8.5 of \cite{Duhr:2019tlz} or Example 4.3.1 in \cite{Panzer:2015ida}) of MPLs are preserved when expanding our formulas, namely, the alphabet of forms $\dd t/(t{-}a)$ entering the MPLs of $\GG_j$ is restricted to $a \in \{ z_{j+1},\ldots,z_n,0,1 \}$, i.e.\ excluding $z_1,\ldots,z_{j}$ (see Section \ref{sec:2.2.1}).
Similarly, the $f$-alphabet representation (discussed in Section \ref{sec:2.3})  of the de Rham and single-valued MZVs in the series $\MM^\dR$ and $\sv \, \MM$ automatically incorporate their relations over~$\mathbb Q$. 
\item The series $\MM^\dR$, $\sv \, \MM$ feature zeta generators $M_{2k+1}$ along with odd Riemann zeta values $\zeta_{2k+1}$ and compose products and higher-depth instances of MZVs with products $M_{2k_1+1} M_{2k_2+1}\ldots$ according to their coaction properties. On these grounds, the conjugations by $\MM^\dR$, $\sv \, \MM$ in (\ref{prevone}), (\ref{prevtwo}) manifest how terms in $\Delta$ and $\sv$ of MPLs that involve products of MZVs or indecomposable higher-depth MZVs (say $\zeta^\dR_{3,5}$ or $\sv \, \zeta_{3,3,5}$) can be computed systematically from the terms with depth-one~MZVs,~$\zeta_{2k+1}$.
\end{itemize}
These practical aspects complement the more conceptual motivation for this work, namely that the reformulation of the motivic coaction and the single-valued map in (\ref{prevone}) and (\ref{prevtwo}) make generalizations beyond genus zero more accessible.
This is expected because zeta generators can be described more universally as acting on the fundamental groups of punctured Riemann surfaces of genus zero and genus one by recovering a nodal Riemann sphere from  degeneration limits of tori (see e.g.\ Appendix A of \cite{Dorigoni:2024iyt}). 
In this broader setting, the genus-zero results of this work give a clear conceptual foundation for progress at higher genus. In practical terms, our approach delimits the possible changes of alphabet in generating series of iterated integrals beyond genus zero, which will be necessary for coaction formulae and single-valued versions.

As a first concrete evidence for this expectation, the one-variable instance $n=1$ of the single-valued map in (\ref{prevtwo}) exhibits striking parallels to the generating series of Brown's single-valued iterated Eisenstein integrals \cite{Brown:mmv, Brown:2017qwo, Brown:2017qwo2} in Section 3 of \cite{Dorigoni:2024oft}. 
Moreover, similar genus-one uplifts were proposed for the coaction formula (\ref{prevone}) in the one-variable case \cite{Kleinschmidt:2025dtk} (along with non-trivial consistency checks) and derived in the $n = 2$ case of (\ref{prevtwo})  \cite{Schlotterer:2025qjv}. All of these genus-one advances rely on the interplay of zeta generators with the fundamental groups of punctured spheres and tori. 
Going further, one can on the long run envision analogous formulas at genus $\geq 2$ once a suitable realization of zeta generators associated with higher-genus Riemann surfaces becomes available.

\subsection*{Acknowledgments}

We are grateful to Ruth Britto, Francis Brown, Steven Charlton, Eric D'Hoker, Daniele Dorigoni, Mehregan Doroudiani, Joshua Drewitt, James Drummond, Axel Kleinschmidt, Pierre Lochak, Carlos Mafra, Lionel Mason, Sebastian Mizera, Erik Panzer, Franziska Porkert, Leila Schneps, Oliver Schnetz, Yoann Sohnle and Matija Tapuskovic for combinations of inspiring discussions and collaboration on related topics. Moreover, we thank Axel Kleinschmidt and two anonymous referees for valuable comments on earlier versions of this work.

The research of HF is supported by the U.S. Department of Energy and the Sivian Fund, with additional support from the European Union (ERC, UNIVERSE PLUS, 101118787). Views and opinions expressed are however those of the author(s) only and do not necessarily reflect those of the European Union or the European Research Council Executive Agency. Neither the European Union nor the granting authority can be held responsible for them.
The research of MH is supported in part by the European Research Council under ERC-STG-804286 UNISCAMP and in part by the Knut and Alice Wallenberg Foundation under grant KAW 2018.0116. 
The research of DK was supported in part by the European Research Council under ERC-COG-724638 GALOP and by the Max Planck Institute for Mathematics in Bonn. 
The research of CR was supported by the European Research Council under ERC-STG-804286 UNISCAMP as well as ERC, UNIVERSE PLUS, 101118787.
The research of OS was supported by the European Research Council under ERC-STG-804286 UNISCAMP and the strength area ``Universe and mathematical physics'' which is funded by the Faculty of Science and Technology at Uppsala University.
The research of BV was supported by the Knut and Alice Wallenberg Foundation under grant KAW2018.0162.

\section{Multiple polylogarithms and the motivic coaction}
\label{sec:2}
Write $G(a_1,a_2,\ldots,a_w;z)$ for the weight $w$ multiple polylogarithm (MPL) defined by the iterated integral \cite{Goncharov:2001iea}
\beq
G(a_1,a_2,\ldots,a_w;z) = \int^z_0 \frac{ \dd t}{t-a_1} \, G(a_2,\ldots ,a_w;t)
\label{coact.01}
\eeq
for some labels $a_1,\ldots,a_w\in \CC$, and an argument $z \in \CC$, adopting the convention $G(\emptyset;z) = 1$. MPLs are multi-valued complex functions that depend on the
homotopy class of the integration path in (\ref{coact.01}) from $0$ to $z$ that avoids the $a_i$.
An important property of MPLs is that they satisfy the following shuffle-product identity:
\beq\label{eq:Gshuffle}
G(A;z)G(B;z)
= \sum_{C \in A \shuffle B}G(C;z)
\eeq
for ordered sets of labels $A,B,C$ --- or words --- with entries in some fixed set, or {\it alphabet}, ${\cal A} = \{z_1,\ldots,z_n\}$. A brief reminder of the shuffle product $C \in A \shuffle B$ can be found in Appendix \ref{app:Lie}.

Note that end-point divergences in the integral \eqref{coact.01} defining $G(a_1,\ldots,a_w;z)$ arise when $a_w=0$ or $a_1=z$. For this reason, we define regularized values at weight $w=1$
\beq
G(0;z) = \log(z), \qquad G(z;z) = - \log(z)\, .
\label{regw1}
\eeq
Divergent MPLs at higher weight $w\geq 2$ are then shuffle-regularized by imposing \eqref{eq:Gshuffle} which determines their regularized values from (\ref{regw1}) inductively in $w$ (see e.g.\ \cite{Panzer:2015ida}).

MPLs are closely related to multiple zeta values (MZVs). If we restrict to considering $a_i \in \{0,1\}$ in (\ref{coact.01}), then the limit of $G(a_1,\ldots,a_w;z)$ as $z\rightarrow 1$ yields the MZV
\begin{align}
\zeta_{n_1,n_2,\ldots,n_r} &= \sum^\infty_{0<k_1 <k_2<\ldots <k_r} 
k_1^{-n_1} k_2^{-n_2}\ldots k_r^{-n_r}
\label{coact.02}\\
&= (-1)^r \lim_{z\rightarrow 1} G(\underbrace{0,\ldots,0}_{n_r-1},1,\ldots,
\underbrace{0,\ldots,0}_{n_2-1},1,\underbrace{0,\ldots,0}_{n_1-1},1;z)
\notag
\end{align}
which is said to have \emph{depth} $r$ and \emph{weight} $n_1{+}\ldots{+}n_r$. The sums and
integrals converge if~$n_r \geq 2$, and we otherwise assign shuffle-regularized values through (\ref{eq:Gshuffle}) and 
\beq
G(0;1)= G(1;1)=0\, .
\label{w1reg}
\eeq
Both of (\ref{regw1}) and (\ref{w1reg}) are obtained by shifting the endpoints of the integration path in (\ref{coact.01}) by a small quantity $0<\epsilon \ll 1$ and defining regularized values as the zeroth-order term in the expansion of the convergent, $\epsilon$-dependent integral in $\log(\epsilon)$ as reviewed for instance in \cite{Panzer:2015ida} (see \cite{Deligne1989TheGR, Abreu:2022mfk} for a discussion of regularization in the context of tangential base points).

\subsection{Motivic and de Rham periods}
\label{sec:mdr}

The iterated-integral formulas \eqref{coact.01} define MPLs as multivalued complex functions. However, for some statements, it is important to treat the MPLs in a more formal sense as periods.\footnote{\label{ftnumberf}An MPL, which is a complex function, can be regarded as a period only when evaluated at an algebraic complex number and is otherwise referred to as a period function. When the labels $a_i$ and the endpoint of integration, $z$, take entries in algebraic numbers, they generate a number field, $k = \QQ(a_1, \ldots, a_w, z)$. The iterated-integral representation of MPLs then allows us to view them as periods of a motive defined over $k$.} In particular, we often make use of the motivic versions of MPLs, which are formal symbols that encode the data of the iterated integral and satisfy algebraic relations \cite{Francislecture, Kontsevich-Zagier}. We use a superscript $G^\mot$ to denote the motivic version of an MPL $G$. We write $\cal{P}^\mot$ for the algebra of motivic MPLs, MZVs and $(i \pi)^\mot$. This is a subalgebra of the algebra of motivic periods \cite{BrownTate, Francislecture}. MPLs can also be lifted to a \emph{de Rham} version, and we write $G^\dR$ for the de Rham version of an MPL $G$. Unlike the motivic periods, the de Rham periods are defined up to the discontinuities of the MPLs, which we can express schematically by writing $(i\pi)^\dR = 0$. We write ${\cal P}^\dR$ for the algebra of de Rham MPLs and MZVs, which is a subalgebra of the algebra of de Rham periods.\footnote{Formally, MPLs are periods of the pro-unipotent fundamental groupoid of the punctured Riemann sphere, which has an associated mixed Tate motive defined over the number field $k$ in footnote \ref{ftnumberf} of \cite{DG:2005}. The category of mixed Tate motives over the number field $k$ is Tannakian and is equivalent to the category of representations of an affine group scheme, referred to as the motivic Galois group and denoted by $\mathcal{G_{MT}}(k)$. In this context, ${\cal P}^\dR$ is the affine ring of functions of the motivic Galois group $\mathcal{G_{MT}}(k)$, restricted to de Rham versions of MPLs and MZVs, and is a connected, graded Hopf algebra over $\QQ$ \cite{Goncharov:2001iea, Goncharov:2005sla}.}

We recall that the algebra of de Rham periods is a Hopf algebra over $\QQ$, and the algebra of motivic periods is a Hopf comodule over it \cite{Francislecture}. In particular, ${\cal P}^\dR$ is a graded Hopf algebra, whose coproduct $\Delta$ and antipode $S$ were worked out concretely by Goncharov \cite{Goncharov:2005sla}. (See Section \ref{sec:7.1} for the definition of $S$.) This was extended to a motivic coaction of ${\cal P}^\dR$ on ${\cal P}^\mot$ by Brown \cite{BrownTate}. Most of this paper is devoted to studying this \emph{motivic coaction}. It is a map of $\QQ$-algebras
\beq
\Delta:\,\, {\cal P}^\mot \rightarrow {\cal P}^\mot \otimes_{\QQ} {\cal P}^\dR
\eeq
and will also be denoted by $\Delta$. See Section \ref{sec:2.4} for a review of the motivic coaction.

Setting $z=1$, a motivic or de Rham MPL, $G^\mot$ or $G^\dR$, defines a motivic or de Rham version of the corresponding MZV \cite{BrownTate}. We again use superscripts, $\zeta^\mot$ and $\zeta^\dR$, to denote these periods. The identities satisfied by motivic MZVs are similar to those satisfied by de Rham MZVs.  However, note that the even zeta values are zero as de Rham periods, and we write $\zeta_{2k}^\dR = 0$. For more details about MZVs and their $f$-alphabet description, see Section \ref{sec:2.3}.

Finally, we can recover numbers and complex functions from motivic periods by evaluating them using the period map, ${\rm per}$. We denote the result of applying ${\rm per}$ by removing the superscripts and writing $G$ or $\zeta$.

\subsection{Single-valued MPLs}
\label{sec:rwsv}

An MPL $G(a_1,\ldots,a_w;z)$ exhibits monodromies as the integration path from 0 to $z$ is deformed to wind around the singular points $a_1,\ldots,a_w$ of the integrand in (\ref{coact.01}). Given such an MPL, there exists a unique complex function, ${\rm sv} \, G(a_1,\ldots,a_w;z)$, expressed in terms of MPLs, their complex conjugates, and MZVs, such that it
\begin{itemize}
\item[(i)] is single-valued on the punctured Rieman sphere (i.e.\ it has trivial monodromy),
\item[(ii)] has vanishing (regularized) limit $z\rightarrow 0$, and 
\item[(iii)] satisfies the same holomorphic differential equations as $G(a_1,\ldots,a_w;z)$, namely:
\beq
\partial_z  {\rm sv} \, G(a_1,a_2,\ldots,a_w;z)
= \frac{ {\rm sv} \, G(a_2,\ldots,a_w;z) }{z-a_1} \, .
\label{coact.09} 
\eeq
\end{itemize}
These ${\rm sv} \, G(a_1,a_2,\ldots,a_w;z)$ are called \emph{single-valued MPLs}, and it is known from \cite{Brown:2013gia, Brown:2018omk} that there is a homomorphism, ${\rm sv}$, from motivic MPLs to single-valued MPLs, that sends the motivic MPL~$G^\mot(a_1,\ldots,a_w;z)$ to~${\rm sv}\,G^\mot(a_1,\ldots,a_w;z)$.

In the one-variable case, with $a_i \in \{0,1\}$, the explicit  construction of single-valued MPLs was given by Brown \cite{svpolylog}. This was generalised to single-valued MPLs in  
two variables (say $a_i \in \{0,1,y\}$ for some $y \in \mathbb C \setminus \{0,1\}$) in \cite{Broedel:2016kls} and to an arbitrary number of variables in 
\cite{DelDuca:2016lad}. See Section \ref{sec:7} for more details about this map ${\rm sv}$ and two proofs (Section \ref{sec:7.proof} and \ref{sec:7.3}) of our new formula in Theorem \ref{thm:sv} (previewed in (\ref{prevtwo})) encoding how the ${\rm sv}$ map acts on motivic MPLs.

\subsection{Generating series of MPLs}
\label{sec:2.2}

We will obtain our results by studying generating series of MPLs. In this section, we take care to define these generating series and introduce key notation.

First, introduce noncommuting formal variables $e_{a_i}$ for each $a_i$ in some alphabet, $\cal A$, of labels. For a word $W = a_1 \cdots a_w$ in ${\cal A}^\times$, we write $e_W = e_{a_1}\cdots e_{a_w}$ for the associated concatenation product of variables $e_a$. Then we define generating series, $\GG^\dR$ and $\GG^\mot$, of MPLs with labels in $\cal A$ by writing
\beq
 \GG\left[\begin{matrix} e_{a_1} & \cdots & e_{a_n} \\ a_1 & \cdots & a_n \end{matrix} ; z \right] = \sum_W e_{W^t}\,G(W; z) \, ,
\label{defgg}
\eeq
where the sum is over words $W$ in the alphabet $\cal A$ and $W^t$ is the reversed word. The sum includes the empty word, $W=\emptyset$, associated with $G(\emptyset;z)=1$. The MPLs $G(W; z)$ in (\ref{defgg}) associated with generic words $W=a_1\cdots a_w$ are viewed as functions of multiple complex variables, namely $z$ and all the non-constant $a_i \in {\cal A}$. We write $\GG[e_{a_1},\ldots,e_{a_n};z]$ for this generating series, as a convenient abbreviation. It follows from the definition of MPLs that the generating series (\ref{defgg}) solve Knizhnik--Zamolodchikov (KZ) equations
\beq
\partial_z \GG[e_{a_1},\ldots,e_{a_n};z] = \GG[e_{a_1},\ldots,e_{a_n};z] \sum_{i=1}^n \frac{e_{a_i}}{z-a_i} \, . \label{eq:dzG}
\eeq

The simplest non-trivial instance $\GG[e_0,e_1;z]$ of the generating series (\ref{defgg}) contains all MPLs at argument $z$ with labels in $\{0,1\}$. So, by \eqref{coact.02}, which gives MZVs as the special values of MPLs at $z=1$, it is natural to consider the limit
\beq
\lim_{z\rightarrow 1} \GG[e_0,e_1;z] = \Phi(e_0,e_1) \, .
\label{nvexpphi}
\eeq
This limit naively contains divergences, which we regularize by replacing MPLs that diverge at $z=1$ with their shuffle-regularised values (see the text below \eqref{coact.02}). The series $\Phi(e_0,e_1)$ is a (group-like, cf.\ Section \ref{sec:5.1}) generating series for shuffle-regularized MZVs \cite{LeMura}, and is called the \emph{Drinfeld associator} \cite{Drinfeld:1989st, Drinfeld2}. The Drinfeld associator takes values in the degree completion of the universal enveloping algebra of the free Lie algebra generated by $e_0,e_1$, and its inverse with respect to the concatenation of $e_i$ is specified by \cite{Drinfeld2} 
\beq\label{eq:PhiPhi1}
\Phi(e_0,e_1)\Phi(e_1,e_0) = \Phi(e_1,e_0)\Phi(e_0,e_1) = 1\, .
\eeq
The first few terms of the series expansion are given by
\beq\label{eq:Phiexpand}
\Phi(e_0,e_1) = 1 + [e_0,e_1] \zeta_2 + [[e_0,e_1],e_0{+}e_1]\zeta_3 + \ldots
\eeq
with words involving $\geq 4$ letters $e_i$ in the ellipsis.
In this series, each MZV in $\Phi(e_0,e_1)$ appears multiplied by certain polynomials in $e_0, e_1$. By the variety of relations among MZVs over $\mathbb Q$ \cite{Blumlein:2009cf, Jianqiang, GilFresan}, the form of these polynomials is \emph{not} unique. In view of this, we will later introduce a related but distinct generating series of MZVs, $\MM$, with a fixed conjectural $\mathbb Q$ basis, see section \ref{sec:2.3}.

\subsubsection{The KZ equation and fibration basis}
\label{sec:2.2.1}
In this paper, we study MPLs that depend on $n$ variables, $z_i$. It is convenient to initially define these MPLs as functions of $z_i$ on the real line, with a fixed ordering
\beq
z_0 < z_1 < z_2 < \cdots < z_n < z_{n+1}\, ,
\label{eq:zorder}
\eeq
where we set $z_0 = 0, z_{n+1} = 1$. MPLs depending on these variables can be analytically continued away from these real points. This will play a key role in Section \ref{sec:4.2}, where we relate different orderings via analytic continuation.

To define generating series for these MPLs, we begin with the KZ equation for a generating series ${\cal G}_n(z_1,\ldots,z_n)$ with values in noncommuting variables $e_{i,j} = e_{j,i}$:
\begin{equation}\label{eq:dkG}
\partial_k {\cal G}_n
= {\cal G}_n\, \Omega_k^{(n)}, \qquad \Omega_k^{(n)} = \sum_{\substack{i=0 \\ i\neq k}}^{n+1} \frac{e_{k,i}}{z_{ki}},
\end{equation}
where $z_{ij} = z_i {-} z_j$ and $\partial_k$ is differentiation by $z_k$. In fact, the commutativity of partial derivatives acting on ${\cal G}_n$ implies that the $e_{i,j}$ must satisfy the \emph{infinitesimal braid relations} \cite{jimbo1989introduction, kassel2012quantum},
\beq\label{eq:infbraid}
[e_{i,j},e_{k,\ell}] = 0,\qquad [e_{i,j}+e_{j,k},e_{i,k}]=0
\eeq
for distinct $i,j,k,\ell$. Geometrically, this appearance of the braid group is not merely a consequence of imposing commutativity of derivatives: the KZ connection lives on the relevant configuration space of marked points on the punctured sphere, whose monodromy is governed by the braid group, and the $e_{i,j}$ are the corresponding infinitesimal braid generators. The $e_{i,j}$ can then be thought of as generators for the braid group.

A key idea of this paper is to decompose the series ${\cal G}_n$ into a product of generating series which feature more restricted classes of MPLs, in the spirit of the \emph{fibration decomposition}; see Section 8.5 of \cite{Duhr:2019tlz} and Example 4.3.1 of \cite{Panzer:2015ida}. To this end, for each $k=1,2,\ldots,n$, define
\beq
\GG_k[\{e_{k,i}\};z_k] = \GG\left[\begin{matrix} e_{k,0}^\ast & e_{k,k+1} & \cdots & e_{k,n} & e_{k,n+1} \\ z_0 & z_{k+1} & \cdots & z_n & z_{n+1} \end{matrix} ; z_k \right], \qquad e_{k,0}^\ast = \sum_{i=0}^{k-1} e_{k,i}\, .
\label{adaptser}
\eeq
Note that we introduce here the linear combinations of braid generators $e^\ast_{k,0}$, which  satisfy
\begin{equation}\label{eq:easteast1}
[e_{i,0}^\ast, e_{j,0}^\ast] = 0, \qquad
[e_{i,0}^\ast, e_{j,k} ] = 0 
\end{equation}
for any $i< j, k$ or $i > j,k$. (Note that $[e_{i,0}^\ast, e_{j,k}] \neq 0$ for $j < i < k$.) The series $\GG_k$ contains MPLs with argument $z_k$ and depending only on the variables $z_k,z_{k+1},\ldots,z_n$ (recall that $z_0=0$ and $z_{n+1}=1$), namely MPLs of the form
\begin{equation}\label{eq:fibrationbasisMPL}
G(z_{i_1},z_{i_2},\ldots,z_{i_r};z_k), \qquad z_{i_j}\in \{0,1,z_{k+1},\ldots,z_n\}\, .
\end{equation}
The \emph{fibration basis} for MPLs in the variables $z_1,\ldots,z_n$ consists of finite products of MPLs in this form, for $k$ ranging over $k=1,\ldots,n$. Equivalently, an expression is in the fibration basis if every factor with argument $z_k$ has labels drawn only from $\{0,1,z_{k+1},\ldots,z_n\}$.

A general MPL depending on the $z_i$ does not belong to this fibration basis. For instance an MPL $G(\ldots,z_{k-1},\ldots;z_k)$ (for some $k\geq 2$) or $G(\ldots,z_{k},\ldots;z_k)$ (for some $k\geq 1$) is not of the form in (\ref{eq:fibrationbasisMPL}). However, such an MPL can be written in terms of products of MPLs that are in the fibration basis (possibly involving $\mathbb Q$-linear combinations of MZVs as coefficients) \cite{Brown:2009qja}. At the level of generating series, we have
\begin{equation}
{\cal G}_n(z_1,\ldots,z_n) = \GG_n \cdots \GG_2\GG_1 \, .
\label{calGdef}
\end{equation}
This equation captures how each MPL appearing in ${\cal G}_n$ on the left-hand side can be written on the right-hand side as sums of products (possibly with MZV coefficients) of MPLs in the fibration basis \eqref{eq:fibrationbasisMPL}. Equivalently, this factored product can be thought of as solving the KZ equation one variable at a time. Note that by \eqref{eq:dzG}, $\GG_k$ satisfies
\beq\label{eq:dkGk}
\partial_{k} \GG_k 
= \GG_k\, \bigg( \frac{e_{k,0}^\ast}{z_{k0}} + \sum_{\ell=k+1}^{n+1} \frac{e_{k,\ell}}{z_{k\ell}} \bigg)\, .
\eeq
So $\GG_1 = {\cal G}_1$ satisfies the $n=1$ instance of the KZ equation (\ref{eq:dkG}). At higher $n\geq 2$, however, the KZ equation (\ref{eq:dkG}) of the factorized expression (\ref{calGdef}) for ${\cal G}_n$ is less obvious and can be deduced from the flatness of the KZ connection $\sum_{k=1}^n \dd z_k \, \Omega_k^{(n)}$ (see for instance \cite{Britto:2021prf}): The general solution to (\ref{eq:dkG}) through a path-ordered exponential of $\sum_{k=1}^n \dd z'_k \, \Omega_k^{(n)}(z_1',\ldots,z_n')$ specializes to (\ref{calGdef}) when the endpoints of the integration path are taken to be $(z_1',\ldots,z_n')=(0,\ldots,0)$ and $(z_1',\ldots,z_n')=(z_1,\ldots,z_n)$. By flatness of the connection, the path-ordered exponential for these endpoints can be evaluated on the particular path composed of the straight lines between $(z_1',\ldots,z_n')=(0,\ldots,0,0) \rightarrow (0,\ldots,0,z_n) \rightarrow (0,\ldots,z_{n-1},z_n) \rightarrow \ldots \rightarrow (0,z_2,\ldots,z_n) \rightarrow (z_1,z_2,\ldots,z_n) $. Each of the path segments contributes one of the factors of $\GG_n \cdots \GG_2\GG_1$ to the path-ordered exponential between $(z_1',\ldots,z_n')=(0,\ldots,0)$ and $(z_1',\ldots,z_n')=(z_1,\ldots,z_n)$ which solves the KZ equation (\ref{eq:dkG}) by construction.
Note in particular that ${\cal G}_n(z_1,\ldots,z_n)$ satisfies the initial condition
\beq
\lim_{z_1\rightarrow 0}\lim_{z_2\rightarrow 0}\ldots \lim_{z_n\rightarrow0}{\cal G}_n(z_1,\ldots,z_n) = 1 \, ,
\eeq
where all limits are to be understood as shuffle-regularized MPLs. This characterizes \eqref{calGdef} to be the unique solution to the KZ equation with this initial condition.

The introduction of these generating series $\GG_k$ realizing the fibration basis decomposition plays an important role throughout the paper, for example in deriving the shuffle-regularized pinching limit of the series ${\cal G}_n$ in Lemma~\ref{lem:bigphiprod}.

Note that we often suppress the explicit dependence on the variables $z_i \neq 0,1$ in the notation for $\GG_k$. In particular, $\GG_1 = \GG_1[\{e_{1,i}\};z_1]$ depends on all $n$ variables, $z_1,\ldots, z_n$.

For each of the above generating series, we will use superscripts $\GG^\mot,\GG_k^\mot,{\cal G}_n^\mot$ or $\GG^\dR,\GG_k^\dR,{\cal G}_n^\dR$ when passing to the motivic and de Rham versions of the MPLs in the expansion (\ref{defgg}),~respectively.

\subsection{The motivic coaction for MPLs} 
\label{sec:2.4}

The motivic coaction on MPLs is traditionally obtained from the Goncharov--Brown formula \cite{Goncharov:2001iea, Goncharov:2005sla, Brown:2011ik, BrownTate}
\begin{align}
\Delta I^\mot(a_0;a_1,a_2,\ldots,a_w;a_{w+1}) &= \sum_{k=0}^w \sum_{0 = i_0 < i_1 < i_2 < \ldots < i_k < i_{k+1}= n+1}
I^\mot(a_0;a_{i_1}, a_{i_2},\ldots , a_{i_k};a_{n+1})
\notag \\
&\quad \times \prod_{p=0}^k I^\dR(a_{i_p};a_{i_p+1},\ldots ,a_{i_{p+1}-1};a_{i_{p+1}})\, ,
\label{GBcoact}
\end{align}
where the terms on the right-hand side are often visualized by inscribing polygons into a semi-circle.\footnote{The products of terms $I^\dR$ in the nested sums of (\ref{GBcoact}) can alternatively be compactly written using a modified integration contour that encircles the singular points of the integrand \cite{Goncharov:2001iea, DelDuca:2016lad, Abreu:2017mtm}.} The iterated integrals $I$ are recursively defined by
\beq
I(a_0;a_1,a_2,\ldots,a_w;a_{w+1})
= \int^{a_{w+1}}_{a_0} \frac{\dd t}{t-a_w} I(a_0;a_1,a_2,\ldots,a_{w-1};t)
\eeq
with $I(a_0;\emptyset;a_{w+1})=1$ and can always be reduced to the MPLs (\ref{coact.01}) at $a_0=0$  
\beq
I(0;a_1,a_2,\ldots,a_w;a_{w+1})
= G(a_w,\ldots,a_2,a_1;a_{w+1})
\label{iitogg}
\eeq
using the composition-of-paths formula for iterated integrals. The motivic and de Rham versions of (\ref{iitogg}) identify
the iterated integrals $I^\mot$ and $I^\dR$ on the right-hand side of (\ref{GBcoact}) as motivic and de Rham versions of MPLs. We take advantage of these superscripts to distinguish the first and second entry of the coaction to skip the $\otimes$ symbol of the notation $I^\mot= I^\mot\otimes 1$ and $I^\dR= 1\otimes I^\dR$ seen in many other references.

The coaction formula (\ref{GBcoact}) applies to arbitrary alphabets $a_i \in {\cal A}$, but the right-hand side does not preserve the fibration bases of the left-hand side (as defined below (\ref{eq:fibrationbasisMPL})). For instance, the coaction of MPLs $G^\mot(\ldots,a_i,\ldots;z_1)$ with $a_i \in \{0,1,z_2\}$ may yield terms of the form $G^\dR(\ldots,a_i,\ldots;1)$ which can be eventually expressed in the fibration basis of $G^\dR(\ldots,b_i,\ldots;z_2)$ with $b_i \in \{0,1\}$ (and $\mathbb Q$-linear combinations of MZV as coefficients) after some extra work.

\subsubsection{Multivariate Ihara formula}\label{sec:multivariate}
An alternative representation of the motivic coaction of MPLs in the mathematics literature is furnished by the Ihara coaction formula \cite{Ihara1989TheGR} and its multivariate generalization in Proposition 8.3 of \cite{Brown:2019jng}. The motivic associators ${\cal Z}^{k,\mot}$ and ${\cal Z}^{k,\dR}$ in \cite{Brown:2019jng}, with $k$ referring to the endpoint $z_k$ of the integration path, reduce to our generating series $\GG^\mot_k$ and $\GG^\dR_k$ (Section \ref{sec:2.2}) by suppressing the letters $e_2, \ldots, e_k$ in the expansion of ${\cal Z}^{k,\mot}$ and ${\cal Z}^{k,\dR}$.\footnote{\label{zfootnt}This is done to ignore the extra terms containing $\frac{\dd t}{t-z_{j}}$ for $j < k$ that appear in $\mathcal{Z}^{k}$.} Hence, Proposition 8.3 of \cite{Brown:2019jng} implies, in terms of our generating series,
\begin{equation}\label{eq:multi}
\Delta \GG^\mot_1[\{e_{1,i}\};z_1] = \GG^\mot_1[\{e'_{1,i}\};z_1] \GG^\dR_1[\{e_{1,i}\};z_1]\, ,
\end{equation}
where the letters $e'_{1,i}$ at $i\neq 0$ entering 
$\GG^\mot_1$ on the right-hand side are series by themselves
\begin{equation}
e'_{1,0} = e_{1,0} \, , \ \ \ \
e'_{1,k} = Z^\dR_k \,e_{1,k} \,(Z^\dR_k)^{-1}\, , \ \ \ \ k=2,3,\ldots,n{+}1\, .
\label{ykconj}
\end{equation}
The associators $Z^\dR_k$ on the right-hand side are defined as the shuffle-regularized limits\footnote{We intentionally use a different notation, $Z^\dR_{\ell}$, for the 
shuffle-regularized limits of $\GG_1^\dR$ as $z_1=z_\ell$ to indicate that these generating series incorporate the restriction of the ${\cal Z}^{k,\mot}$ in \cite{Brown:2019jng} as mentioned in footnote \ref{zfootnt}.}
\beq
Z^\dR_k(z_2,\ldots,z_n) = \lim_{z_1 \rightarrow z_k} \GG^\dR_1 \, .
\label{zkconj}
\eeq
Similar to the Goncharov--Brown formula (\ref{GBcoact}), the de Rham MPLs on the right-hand side of the multivariate Ihara formula (\ref{eq:multi}) which enter via the series $Z^\dR_k$ (equation (\ref{zkconj})) do not preserve the fibration bases of MPLs (defined around (\ref{calGdef})). When applying the Ihara formulas, the appearance of MPLs outside the fibration bases
introduces redundancies into the coaction formulae for specific MPLs inherited from the generating series.

Our main result, Theorem \ref{thm:main} below, is a formula for $\Delta \GG^\mot_1$ that removes this redundancy by preserving the fibration bases. Moreover the formula removes the redunancies that follow from the $\mathbb Q$ relations among the de Rham MZVs. This is achieved by forming a generating series for de Rham MZVs using the $f$-alphabet to be reviewed in Section \ref{sec:2.3}.

\subsubsection{Motivic coaction for MZVs}\label{sec:iharaMZV}
The form of the Goncharov--Brown formula (\ref{GBcoact}) does not depend on the size of the alphabet ${\cal A}$ used to define the MPLs on the left-hand side. By contrast, for the multivariate Ihara formula (\ref{eq:multi}), the equation is sensitive to $n$, the number of $z_i$ variables that appear in the MPLs in the series $\GG_1 = \GG_1[\{e_{1,i}\};z_1]$ (defined in (\ref{adaptser})). In the special case of $n=1$, the multivariate Ihara formula becomes simply\footnote{Setting $n=1$ and writing $e_0$ for $e_{0,1}$, $e_1$ for $e_{1,2}$, and $z$ for $z_1$, then $\GG^\mot_1[e_{0,1}^\ast,e_{1,2};z_1] = \GG^\mot[e_0,e_1;z]$.}
\beq\label{eq:1ihara}
\Delta \GG^\mot[e_0,e_1;z] = \GG^\mot[e_0,e_1';z] \GG^\dR[e_0,e_1;z] \, ,
\eeq
where now
\begin{equation}\label{eq:eprimephi}
e_1' = \Phi^\dR(e_0,e_1)\,e_1 \left(\Phi^\dR(e_0,e_1)\right)^{-1} \, .
\end{equation}
This formula can be used to obtain a formula for the coaction of motivic MZVs. Taking the (shuffle regularized) limit as $z\rightarrow 1$ of equation \eqref{eq:1ihara} gives
\begin{equation}\label{eq:iharaphi}
\Delta \Phi^\mot(e_0,e_1) = \Phi^\mot(e_0,e_1') \Phi^\dR(e_0,e_1)\, .
\end{equation}
In the same way as the coaction formula (\ref{GBcoact}) for MPLs is not the most practical way to extract simplified expressions for $\Delta G^\mot(a_1,\ldots,a_w;z)$, it may take significant work to simplify the expressions for $\Delta \zeta^\mot_{n_1,\ldots,n_k}$that follow from the coefficients of (\ref{eq:iharaphi}). This is because both $e_1'$ and the right-multiplicative series $\Phi^\dR(e_0,e_1)$ introduce de Rham MZVs, and it remains to mod out by their relations over $\mathbb Q$ when using the expansion of the Drinfeld associator that follows from (\ref{nvexpphi}). 

\subsection{$f$-alphabet and generating series $\MM$ of MZVs} 
\label{sec:2.3}

We recall the $f$-alphabet representation for motivic MZVs \cite{Brown:2011ik, BrownTate}. The $f$-alphabet is the set of noncommuting formal variables $\{f_3,f_5,f_7,\ldots\}$, indexed by the odd numbers larger than $1$. For notational convenience take ${\cal W} = \{3,5,7,\ldots\}^\times$ the set of all words in the odd numbers, $3,5,7,\ldots$, treated as symbols. For a word $W=ab\ldots c$ in ${\cal W}$, we write $f_W = f_af_b\ldots f_c$ for the corresponding term in the graded Hopf algebra $\mathbb{Q} \langle f_3, f_5, f_7, \ldots \rangle$ freely generated by the $f$-alphabet. The grading is given by assigning each $f_{2i+1}$ ($i\geq 1$) degree $2i+1$. The product on $\mathbb{Q} \langle f_3, f_5, f_7, \ldots \rangle$ is the shuffle product 
\beq
f_A \, \shuffle \, f_B  = f_{A\shuffle B}
\eeq
and the
coproduct is the deconcatenation coproduct given by
\begin{equation}\label{eq:fcoproduct}
\Delta \big( f_{W} \big) = \sum_{W=AB}   f_A  \otimes  f_B
\end{equation}
for words $A,B \in {\cal W}$. Consider also the graded commutative algebra $\mathbb{Q}[f_2] \otimes_{\mathbb{Q}} \mathbb{Q} \langle f_3, f_5, f_7, \ldots \rangle$ with the commuting symbol $f_2$ in degree 2. The product looks similar to before,
\beq
\big( f_2^k f_A\big) \shuffle \big( f_2^{\ell} f_B \big)  = f_2^{k+\ell} \, f_{A\shuffle B} \, .
\label{shuffprod}
\eeq
The algebra $\mathbb{Q}[f_2] \otimes_{\mathbb{Q}} \mathbb{Q} \langle f_3, f_5, f_7, \ldots \rangle$ is also a Hopf comodule over $\mathbb{Q} \langle f_3, f_5, f_7, \ldots \rangle$, with the trivial coaction $\Delta f_2 = f_2 \otimes 1$ on $f_2$. And in general, the coaction looks like
\begin{equation}\label{eq:fcoaction}
\Delta \big(f_2^k f_{W}\big) = f_2^k \sum_{W=AB}   f_A \otimes f_B
\end{equation}
for words $A,B,W \in {\cal W}$. Here the sum is over all words $A,B$ (including empty words) whose concatenation is $AB=W$. As we will see, the Hopf algebra on odd $f$-letters will model the $\mathbb{Q}$-algebra of de Rham MZVs and the Hopf comodule over it on all $f$-letters will model the $\mathbb{Q}$-algebra of motivic MZVs. 
Thus, going forward, we will introduce the $\mot$,  $\dR$ superscript on the $f$-letters to distinguish them and drop the tensor $\otimes$ notation in the coproduct or coaction formulae to be consistent with the rest of the article.

The $\mathbb{Q}$-algebra of de Rham MZVs (resp.\ motivic MZVs) can be mapped to the Hopf algebra (resp.\ Hopf-algebra comodule) on odd $f$-letters (resp.\ all $f$-letters) by a choice of some isomorphism $\phi$ subject to three defining properties \cite{Brown:2011ik, BrownTate}: 
\begin{itemize}
\item[(i)] The generators $f_a$ are normalized to be the images of Riemann zeta values, 
\beq
  \phi(\zeta^{\bullet}_a) = f_a^{\bullet} \, , \ \ \ \ a \geq 2\, ,
\label{normf}
\eeq
with $f_{2k}^{\mot} \coloneqq ({\zeta_{2k}} / {\zeta_2^k}) (f_2^{\mot})^k \in \mathbb Q (f_2^{\mot})^k$ in case of even weight $2k\geq 4$. Here and in the remainder of this section, we will employ the superscript $\bullet$ on the $f$-letters, MZVs and generating series thereof to indicate that the respective equations apply to both motivic versions ($\bullet \rightarrow \mot$)
and de Rham versions ($\bullet \rightarrow \dR$).
\item[(ii)] The isomorphism $\phi$ maps products of MZVs to shuffle products\footnote{This is automatic from the algebra morphism but is included here to emphasise that we do not use the concatenation product in the algebra on odd $f$-letters.} (\ref{shuffprod}) in the $f$-alphabet (not to be confused with the shuffle product (\ref{eq:Gshuffle}) of iterated integrals)
\begin{equation}
\label{eq:fshuffle}
\phi(\zeta^{\bullet}_{n_1,\ldots,n_r} \cdot \zeta^{\bullet}_{m_1,\ldots,m_s}) = \phi(\zeta^{\bullet}_{n_1,\ldots,n_r}) \shuffle \phi( \zeta^{\bullet}_{m_1,\ldots,m_s})
\end{equation}
for arbitrary $n_i,m_i \in \mathbb N$ with $n_r,m_s\geq 2$. 
\item[(iii)] The isomorphism $\phi$ maps the coproduct on de Rham MZVs (resp.\ coaction on motivic MZVs) in Section \ref{sec:2.4} to the deconcatenation coproduct (\ref{eq:fcoproduct}) (resp.\ coaction (\ref{eq:fcoaction}))
\beq\label{eq:fdelta}
\Delta\big( \phi(\zeta^{\bullet}_{n_1,\ldots,n_r}) \big) = \phi \big( \Delta(\zeta^{\bullet}_{n_1,\ldots,n_r}) \big)
\eeq
for arbitrary $n_i$ as above. 
\end{itemize}
In the absence of additional structure, the isomorphism $\phi$ is \emph{non-canonical}. This is because the images, $\phi(\zeta^\mot_{n_1,\ldots,n_r})$, of indecomposable\footnote{\label{indecfoot}Motivic MZVs $\zeta^
\mot_{n_1,\ldots,n_r}$ at depth $r\geq 2$ are referred to as {\it indecomposable} if they cannot be expressed in terms of $\mathbb Q$-linear combinations of products of $\zeta^\mot_w$ with $w\geq 2$ and indecomposable MZVs of lower weight. The analogous $\zeta_{n_1,\ldots,n_r}$ without the $\mot$ superscript are said to be indecomposable as well.} MZVs at higher depth ($r\geq 2$) can be shifted by rational multiples of $f_{n_1+\ldots +n_r}$, while preserving the defining properties (\ref{normf}), (\ref{eq:fshuffle}) and (\ref{eq:fdelta}). This gives a large amount of freedom in the definition of $\phi$, but see the end of this section for a possible choice.

For a given choice of $\phi$, we can now define a formal generating series of (motivic and de Rham) MZVs. Introduce a new alphabet of noncommuting variables, $M_3,M_5,M_7,\ldots$, known as \textit{zeta generators}. (Note that we are not considering any generators $M_a$ associated with even $a$ in this work.) These $M_{2k+1}$ will play an important role throughout this work and can be regarded as generators of the motivic Lie algebra of \cite{DG:2005, BrownTate}. We define a generating series $\MM$ as
\begin{align}
\MM^{\bullet} = \sum_{W} \phi^{-1}(f^{\bullet}_W)\, M_W \, ,
\label{defmser}
\end{align}
where $M_W = M_a M_b \ldots M_c$ for words $W=ab\ldots c$, and we sum over all words $W$ in ${\cal W}$ (including the empty word, with $M_\emptyset = 1$). Note that the commuting image $f_2$ of $\zeta_2$ does not occur in (\ref{defmser}), and we will mostly use the de Rham version $\MM^\dR = 1 + \sum_{k=1}^
\infty \zeta^\dR_{2k+1} M_{2k+1}+\ldots$ in our main results below with words $f_a f_b\ldots$ comprising $\geq 2$ odd subscripts $a,b$ in the ellipsis. As will be detailed in Proposition \ref{prop:1thm} and Remark \ref{imprmk}, the choice of $\phi$ in the definition (\ref{defmser}) of $\mathbb M^\bullet$ will not affect our main results since a change of the isomorphism will induce a compensating redefinition of the zeta generators, also see Remark \ref{rmk22}.

The coaction (\ref{shuffprod}) of $f_2^k f_W$ and the property \eqref{eq:fdelta} of the $\phi$ isomorphism imply~that
\beq \label{eq:Mcoaction}
\Delta \MM^\mot = \MM^\mot  \MM^\dR
\eeq
which does not feature any analogue of the change of alphabet in the coaction formula for the Drinfeld associator, see \eqref{eq:iharaphi}.

\begin{remark}
It is possible to make a canonical choice for the isomorphism $\phi$ by introducing some additional constraints. In \cite{Keilthy, Dorigoni:2024iyt}, an inner product on words in braid generators is used to induce a canonical choice of $\phi$. For example, up to weight $11$, $\phi$ is parametrized by
\begin{align}
\phi( \zeta^\mot_{3,5}) &= - 5 f_3 f_5 + q_8 f_8 \, , \label{appMZV.g6} \\
\phi( \zeta^\mot_{3,7}) &= - 14 f_3 f_7 - 6 f_5 f_5 + q_{10} f_{10}\, , \notag \\
\phi( \zeta^\mot_{3,3,5}) &= - 5 f_3 f_3 f_5 - 45 f_2 f_9 - \frac{6}{5} f_2^2 f_7  + \frac{4}{7} f_2^3 f_5 +q_{11} f_{11} \, 
\notag
\end{align} 
for \emph{some} free rational coefficients, $q_8,q_{10},q_{11}$. The method in \cite{Dorigoni:2024iyt} fixes these coefficients to be $q_8=\frac{100471}{35568}$, $q_{10}=\frac{408872741707}{40214998720}$ and $q_{11}= \frac{1119631493}{14735232}$. Numerous earlier references (including \cite{Brown:2011ik, Schlotterer:2012ny}) chose instead $q_8=q_{10}=q_{11}=0$ and at higher weight take the coefficients of $f_{n_1+\ldots +n_r}$ in $\phi(\zeta^\mot_{n_1,\ldots,n_r})$ to be zero, for all indecomposable MZVs (see footnote \ref{indecfoot}) in a conjectural reference basis (for instance the one in the MZV datamine \cite{Blumlein:2009cf}).
\end{remark}

\begin{remark}
\label{rmk22}
We will mostly consider de Rham MZVs in this paper. For de Rham MZVs, the choice of even weight coefficients, like $q_8$ and $q_{10}$ in (\ref{appMZV.g6}), drops out by $\zeta^\dR_{2k}=0$. Indeed, upon inverting (\ref{appMZV.g6}) 
\begin{align}
\phi^{-1}(f_3 f_5) &= - \frac{1}{5} \zeta^\mot_{3, 5} +\frac{q_8}{5}\zeta^\mot_8 \, , \label{phiinv}\\
\phi^{-1}(f_3 f_7) &=  - \frac{1}{14} \zeta^\mot_{3,7}   - \frac{ 3}{14} (\zeta^\mot_5)^2 
+\frac{q_{10}}{14}\zeta^\mot_{10} \, ,
\notag \\
\phi^{-1}(f_3 f_3 f_5) &= -\frac{1}{5} \zeta^\mot_{3, 3, 5} +
\frac{1}{2}\zeta^\mot_5 \zeta^\mot_6 - 
  \frac{3}{5}  \zeta^\mot_7 \zeta^
  \mot_4 -
  9 \zeta^\mot_9 \zeta^\mot_2+
  \frac{q_{11}}{5} \zeta^\mot_{11}  \, ,\notag
\end{align}
and passing to de Rham periods, $\zeta_8^\dR$ and $\zeta_{10}^\dR$ are $0$, and lead to $\phi^{-1}(f^\dR_3 f^\dR_5) = - \frac{1}{5} \zeta^\dR_{3, 5}$ and $\phi^{-1}(f^\dR_3 f^\dR_7) =  - \frac{1}{14} \zeta^\dR_{3,7}   - \frac{ 3}{14} (\zeta^\dR_5)^2 $.

However, the choice of odd-weight coefficients does still matter for de Rham MZVs. For instance, the coefficient $q_{11}$ in (\ref{appMZV.g6}) and (\ref{phiinv}) first appears in $\MM^\dR$ in the coefficient of $\zeta_{11}^\dR$: $\frac{1}{5} q_{11}[M_3, [M_3,M_5]] \zeta^\dR_{11}$. Still, different choices of $\phi$ can be absorbed into redefinitions of the zeta generator $M_{a}$ at odd $a\geq 11$ by (rational multiples of) nested brackets of $M_b$ with $b< a{-}5$, for instance redefinitions of $M_{11}$ by $[M_3, [M_3,M_5]]$. This is an example of the well-known freedom to redefine zeta generators by nested brackets of their lower-weight counterparts unless an inner-product structure is used to identify canonical choices of zeta generators  \cite{Keilthy, Dorigoni:2024iyt}. We will see in Proposition \ref{prop:1thm} below that our results apply to \emph{any} choice of the isomorphism $\phi$ along with an adapted action of the zeta generators. In particular, these adapted actions implement the redefinition of $M_{11}$ by rational multiples of $[M_3, [M_3,M_5]]$ if the choice of $q_{11}$ in (\ref{appMZV.g6}) is modified.
\end{remark}

\section{Overview of results}
\label{sec:3}

It is the purpose of this paper to prove a new formula for the motivic coaction on MPLs which has computational advantages and facilitates generalizations beyond genus zero. In Section \ref{sec:6.2}, we will define a family of adjoint-like actions of the zeta generators $M_k$ (for $k=3,5,7,\ldots$) on the braid generators $e_{i,j}$ that takes the form
\beq
[e_{1,0}, M_k] = 0
\, , \ \ \ \
[e_{1,\ell}, M_k] = \sum_{r=1}^{2\ell - 3}[P_k^{(r)}, e_{1,\ell}] \, , \ \ \ \ \ell = 2,3,\ldots,n{+}1
\eeq
for certain Lie polynomials $P^{(r)}_k$ of degree $k$ in the braid generators.\footnote{In the notation of \cite{Frost:2023stm}, $P^{(r)}_k = W_k(E_0^{(r)}, E_1^{(r)})$ for the combinations $E_i^{(r)}$ of braid generators in (\ref{eq:E0E1}) below. 
The Lie polynomials $g_k$ of \cite{Dorigoni:2024iyt} with odd $k$ are obtained from the $W_k$ in this work by replacing $(e_0,e_1) \rightarrow (x,-y)$.} With this understood, our result is\footnote{Our conventions here differ from those of \cite{Frost:2023stm}, which fixes the opposite ordering, $z_{n+1}=0<z_n<\cdots < z_1 < z_0 = 1$, from the one used here. With that ordering, $\GG_k$ here becomes $\GG_{n-k+1}$ in \cite{Frost:2023stm}, and, in Theorem \ref{thm:main}, the series $\HH_{n+1}$ becomes the series $\FF_n = \MM \GG_1 \GG_2 \ldots \GG_n$ in \cite{Frost:2023stm}.}

\begin{theorem}\label{thm:main}
The motivic coaction acts on the generating series $\GG_1$ as
\begin{equation}
\Delta \GG_1^\mot = \left( \HH_n^\dR \right)^{-1} \, \GG_1^\mot \, \HH_n^\dR \, \GG_1^\dR \, ,\qquad \text{with}\qquad \HH^\dR_{n} = \MM^\dR \,\GG^\dR_n\, \cdots \GG^\dR_{2} \, .
\label{thm2.1eq}
\end{equation}
\end{theorem}
This formula (\ref{thm2.1eq}) for the action of $\Delta$ on our generating series (\ref{adaptser}), first conjectured in \cite{Frost:2023stm}, makes it easier to extract simplified formulas for the coaction of specific MPLs by organising the computation as conjugation by the series $\HH_n^\dR$. Note that, for $n=1$, the conjugating series specialises to $\HH_1^\dR = \MM^\dR$ defined in (\ref{defmser}), and Theorem \ref{thm:main} reduces to a formula for the coaction of the generating series $\GG_1 \, |_{n=1} = \GG[e_0,e_1;z]$ in (\ref{eq:dzG}) of MPLs in one variable $z=z_1$:
\beq
\Delta  \GG^\mot[e_0,e_1;z]
=  (\MM^\dR)^{-1} \, \GG^\mot[e_0,e_1;z] \, \MM^\dR \,  \GG^\dR[e_0,e_1;z]
\label{thm1var}
\eeq
One may view the conjugation by $\HH_n^\dR$, $\MM^\dR$ in (\ref{thm2.1eq}), (\ref{thm1var}) as interpolating between the deconcatenation coproduct $a_1 a_2 \cdots a_r \mapsto \sum_{j=0}^r a_1  \cdots a_j\otimes a_{j+1}  \cdots a_r$ on words in the alphabet $a_i \in \{ z_2,\ldots,z_n,0,1\}$ of $\mathbb G_1$ and the motivic coaction of MPLs. The non-trivial contributions of $\HH_n^\dR = 1+\ldots$ to the expansion of (\ref{thm2.1eq}) determine the corrections involving de Rham MPLs in fewer variables and de Rham MZVs to the terms
\beq
\Delta G^\mot(a_1,a_2,\ldots,a_r;z_1) = \sum_{j=0}^r G^\mot(a_{j+1},\ldots,a_r;z_1) \, G^\dR(a_1,\ldots,a_j;z_1) +\ldots
\label{naivedel}
\eeq
which correspond to a naive deconcatenation coaction.

As an application of Theorem \ref{thm:main}, we also give a new formula for the generating function of \emph{single-valued} MPLs. In Section \ref{sec:7}, we show that Theorem \ref{thm:main} implies our second main theorem:

\begin{theorem}\label{thm:sv}
The single-valued map acts on the generating series $\GG_1$ as
\begin{equation}
\sv \, \GG_1^\mot = \left(\sv \,\HH_{n}^\mot\right)^{-1} \,\overline{\GG_1^t}\, \left(\sv \, \HH_{n}^\mot\right) \GG_1 \, ,
\label{thm2.2eq}
\end{equation}
where $\overline{\GG_1^t}$ is obtained from the complex conjugate of $\GG_1$ by reversing the concatenation order in the braid generators. The series $\HH_n$ is as in Theorem \ref{thm:main}, and
\beq
\sv\, \HH_n^\mot = (\sv\, \MM^\mot) \, (\sv \, \GG_n^\mot) \ldots (\sv\, \GG_2^\mot) \, . 
\label{defsvH}
\eeq
Moreover,\footnote{As discussed in Section \ref{sec:rwsv}, $\sv$ is a map from motivic MPLs to single-valued MPLs, which we apply term-wise in the series expansion of $\HH_n$.}
\beq
\sv\, \HH_n^\mot = \overline{\HH_n^t} \, \HH_n \, .
\eeq
\end{theorem}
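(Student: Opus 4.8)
The plan is to deduce Theorem~\ref{thm:sv} from Theorem~\ref{thm:main} together with the standard factorization of the single-valued map through the motivic coaction \cite{Brown:2013gia, DelDuca:2016lad}. Recall that $\sv$ is a homomorphism that can be computed by applying $\Delta$, evaluating the motivic (first) tensor entry with the period map ${\rm per}$, and sending the de Rham (second) entry to a conjugation realization $c$; because complex conjugation of an MPL reverses the orientation of the integration path, this second map is governed by the antipode $S$ of ${\cal P}^\dR$ \cite{Goncharov:2001iea}, which on our group-like generating series is realized by word reversal (transposition) together with the series inverse. Crucially, $c$ is an \emph{anti}-homomorphism, and this is what will produce both the reversed series $\overline{\GG_1^t}$ and the order-swap between the motivic and de Rham pieces relative to Theorem~\ref{thm:main}. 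The first thing I would do is isolate the content of the theorem in the single clean identity $\sv\,F^\mot = \overline{F^t}\,F$ for a suitably \emph{complete} generating series $F$, and then derive all three claims from it.

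Concretely, I would establish $\sv\,F^\mot = \overline{F^t}\,F$ for $F=\HH_n$ and for $F=\HH_n\GG_1 = \MM\,{\cal G}_n$, these being ``complete'' in the sense that they already incorporate the full MZV series $\MM$ so that their coaction closes without an external conjugation. For $\MM$ this uses the clean coaction \eqref{eq:Mcoaction}, $\Delta\MM^\mot=\MM^\mot\MM^\dR$; for the full composite it uses Theorem~\ref{thm:main} and the multiplicativity of $\Delta$ to assemble the coaction of $\MM\,{\cal G}_n$. In each case, inserting the coaction into $\sv = m\circ({\rm per}\otimes c)\circ\Delta$ and using the group-like (shuffle) property of $\MM$ and of the $\GG_k$ to identify the antipode with the transpose produces exactly $\overline{F^t}F$: the de Rham factor becomes its conjugate-reversed partner and the anti-homomorphism property of $c$ reverses the multiplicative order. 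The bracket relations \eqref{eq:infbraid}, \eqref{eq:easteast1} between braid generators, together with the adjoint action of the $M_k$ on the $e_{1,\ell}$, are what guarantee that the conjugations collapse onto the braid generators of $\GG_1$ and that no letters outside the fibration basis survive.

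With the clean identity in hand, the three claims follow formally. The factorization \eqref{defsvH} is immediate from $\sv$ being multiplicative. The ``Moreover'' identity $\sv\,\HH_n^\mot=\overline{\HH_n^t}\,\HH_n$ is the clean formula for $F=\HH_n$. For \eqref{thm2.2eq}, I would combine multiplicativity with the clean formula for $F=\HH_n\GG_1$: the latter reads $\sv\,(\HH_n\GG_1)^\mot = \overline{(\HH_n\GG_1)^t}\,(\HH_n\GG_1) = \overline{\GG_1^t}\,\overline{\HH_n^t}\,\HH_n\,\GG_1 = \overline{\GG_1^t}\,(\sv\,\HH_n^\mot)\,\GG_1$, where the last step uses the ``Moreover'' identity; comparing with $\sv\,(\HH_n\GG_1)^\mot = (\sv\,\HH_n^\mot)(\sv\,\GG_1^\mot)$ and solving for $\sv\,\GG_1^\mot$ reproduces $\sv\,\GG_1^\mot = (\sv\,\HH_n^\mot)^{-1}\,\overline{\GG_1^t}\,(\sv\,\HH_n^\mot)\,\GG_1$, which is \eqref{thm2.2eq}.

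The main obstacle I anticipate is the middle step: controlling the interplay of the antipode (reversal plus inverse), the conjugation realization $c$ (which conjugates coefficients, reverses the path to produce $\overline{\GG_1^t}$ with conjugated arguments, and reverses multiplicative order), and the zeta-generator action on the braid letters, so that the single-valued prefactor is precisely $\sv\,\HH_n$ and the result remains a series in the braid generators of $\GG_1$ alone. To guard against sign and ordering errors, I would run the independent second proof of \cite{DelDuca:2016lad}: verify directly that the right-hand side of \eqref{thm2.2eq} is single-valued and solves the holomorphic KZ equation \eqref{coact.09} in $z_1$ (together with its antiholomorphic partner), matching the normalization at a tangential base point. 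The single-valuedness reduces to the statement that the monodromy of $\overline{\GG_1^t}$ is cancelled by that of $\GG_1$ after conjugation by $\sv\,\HH_n$, which is again a consequence of the braid and zeta-generator bracket relations.
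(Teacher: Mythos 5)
Your proposal is correct, and it follows the same overall route as the paper's own first proof (Section \ref{sec:7.proof}): factor $\sv$ through the motivic coaction as in \eqref{sv:sv}, use Theorem \ref{thm:main} to control the antipode of the generating series, convert series inverses into conjugate--reversed series via the signed-conjugation relations \eqref{sv:tildeGinv} and \eqref{sv:tildeMinv}, and finish with multiplicativity of $\sv$. What you do differently is the bookkeeping. You first prove the clean identity $\sv\, F^\mot = \overline{F^t}\,F$ for the ``complete'' series $F=\HH_n$ and $F=\HH_n\GG_1=\MM\,{\cal G}_n$, after observing that their coactions telescope to the group-like comodule form $\Delta F^\mot = F^\mot F^\dR$ (this needs Theorem \ref{thm:main} relabeled for every $\GG_k$ with $k=2,\ldots,n$, together with \eqref{eq:Mcoaction} and multiplicativity of $\Delta$ --- the same inputs the paper uses), and then you obtain \eqref{thm2.2eq} by dividing. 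The paper instead computes $\sv\,\GG_1^\mot = (\overline{\HH_n^t}\HH_n)^{-1}\,\overline{\GG_1^t}\,(\overline{\HH_n^t}\HH_n)\,\GG_1$ directly from Lemma \ref{lem:SGHGH}, and then identifies $\overline{\HH_n^t}\HH_n$ with $\sv\,\HH_n^\mot$ by the joint descending induction over the reduced products $\HH_{n,k}$ of \eqref{defhnk}, seeded by Lemma \ref{lem:svMMM}. The two arguments are logically equivalent --- your telescoping is the paper's induction run in the opposite direction --- but your packaging isolates a structural fact the paper leaves implicit, namely that the complete series $\MM{\cal G}_n$ satisfies $\Delta(\MM{\cal G}_n)^\mot = (\MM{\cal G}_n)^\mot(\MM{\cal G}_n)^\dR$, from which all three claims of the theorem drop out formally.

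One point to tighten in a full write-up: you place the conjugation realization $c$ on the \emph{second} (de Rham) tensor entry, whereas the paper's \eqref{sv:sv} applies $\widetilde{S}\circ\Pi^\dR$ to the \emph{first} entry and leaves the second entry alone. Because the braid-generator series do not commute, $\overline{F^t}\,F \neq F\,\overline{F^t}$, so the side on which the conjugate-reversed factor lands is not a matter of convention --- it must be the left, as in Lemma \ref{lem:svMMM} ($\sv\,\MM^\mot = \MM^t\MM$, not $\MM\MM^t$). Your final formulas do have the correct order, but the step where ``the anti-homomorphism property of $c$ reverses the multiplicative order'' is precisely where a sign or ordering slip would enter; making it rigorous amounts to proving, for your complete series $F$, the analogues of the paper's Lemmas \ref{lem:SGHGH} and \ref{lem:SM}, i.e.\ $S F^\dR = (F^\dR)^{-1}$ from the antipode identity applied to the clean coaction, followed by $\widetilde{(F^\dR)^{-1}} = \overline{(F^\dR)^t}$ from \eqref{sv:tildeGinv} and \eqref{sv:tildeMinv}.
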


The formula (\ref{thm2.2eq}) in Theorem \ref{thm:sv} appeared as a conjecture in \cite{Frost:2023stm}. Note that, for $n=1$ with $\sv\, \HH_1^\mot=\sv\, \MM^\mot = \mathbb M^t \mathbb M$, Theorem \ref{thm:sv} specializes to the generating series
\beq
\sv \, \GG^\mot[e_0,e_1;z]
=  (\sv \, \MM^\mot)^{-1} \, \overline{ \GG^t[e_0,e_1;z] } \, (\sv \, \MM^\mot) \,  \GG[e_0,e_1;z]
\label{svin1vr}
\eeq
of single-valued MPLs in one variable $z_1=z$.

Similarly to the comment below (\ref{thm1var}), the conjugation by $\sv\, \HH^\mot_n$, $\sv \mathbb M^\mot$, in (\ref{thm2.2eq}), (\ref{svin1vr}) connects the single-valued map of MPLs with the simple operation $a_1 a_2 \ldots a_r  \rightarrow \sum_{j=0}^r \overline{a_j\ldots a_1} \shuffle a_{j+1}\ldots a_r$ on words in the alphabet $a_i \in \{ z_2,\ldots,z_n,0,1\}$ of $\mathbb G_1$. The non-trivial contributions of $\sv\, \HH^\mot_n$ to the expansion of (\ref{thm2.2eq}) determine the corrections involving single-valued MZVs and MPLs in fewer variables to the particularly simple terms
\beq
\sv\, G^\mot(a_1,a_2,\ldots,a_r;z_1) = \sum_{j=0}^r \overline{G(a_{r},\ldots,a_{j+2},a_{j+1};z_1)} \, G(a_1,a_2,\ldots,a_j;z_1) +\ldots\, .
\label{nvsvmp}
\eeq
The rest of this section gives a structured overview of how these two main theorems are proved.

\subsection{Connection to the multi-variate Ihara formula}
\label{sec:3.1}
The key idea towards proving the new coaction formula, Theorem \ref{thm:main}, is to start from the multivariate Ihara formula of \cite{Brown:2019jng} (reviewed in Section \ref{sec:multivariate}). Indeed, equation \eqref{thm2.1eq} of Theorem \ref{thm:main} can be rewritten to closely resemble to the Ihara formula. Consider expanding the generating series $\GG_1^\mot$ in the braid generators ($e_{1,0}^{\ast}=e_{1,0}$ or $e_{1,\ell}$, with $\ell = 2,\ldots,n{+}1$). By inserting $1= \HH^\dR_{n}(\HH^\dR_{n})^{-1}$ between any pair of braid generators, \eqref{thm2.1eq}~of the theorem becomes
\beq
\Delta \GG^\mot_1[e_{1,0} ,\{e_{1,\ell}\};z_1] =   \GG_1^\mot\big[(\HH^\dR_n)^{-1}e_{1,0} \HH^\dR_{n} \, ,\,\{(\HH^\dR_n)^{-1}e_{1,\ell} \HH^\dR_{n}  \} ; z_1 \big] \, \GG_1^\dR[e_{1,0} ,\{e_{1,\ell}\};z_1]\, .
\label{outprf.03}
\eeq
More explicitly, given a term $e_{1,i_1}e_{1,i_2}\cdots e_{1,i_r}\, G^\mot(z_{i_r},\ldots,z_{i_2},z_{i_1};z_1)$ in the expansion (\ref{defgg}) of $\GG_1^\mot$, then its contribution to $(\HH^\dR_{n})^{-1}\GG_1^\mot\HH^\dR_{n}$ becomes
\beq
(\HH^\dR_{n})^{-1}e_{1,i_1}\HH^\dR_{n}(\HH^\dR_{n})^{-1}e_{1,i_2}\HH^\dR_{n}(\HH^\dR_{n})^{-1}\cdots \HH^\dR_{n}(\HH^\dR_{n})^{-1}e_{1,i_r}\HH^\dR_{n}\, G^\mot(z_{i_r},\ldots,z_{i_2},z_{i_1};z_1)\, .
\eeq
In terms of the generating series, the conjugation by the series $\HH^\dR_n$ then becomes a \emph{change of alphabet} in the braid generators, and we write
\beq
\tilde{e}_{1,\ell} = (\HH^\dR_n)^{-1} e_{1,\ell} \HH^\dR_{n} \, .
\label{prekey}
\eeq
The letter $e_{1,0}$ is left unchanged, $\tilde{e}_{1,0} = e_{1,0}$, as it commutes both with the series $\GG_{2},\ldots,\GG_{n}$ as well as with the series $\MM$.\footnote{See \eqref{eq:adaction} in Section \ref{sec:6.2}, where $e_{1,0} = E_0^{(1)}$ in the notation of that section.} So \eqref{outprf.03} may be written as
\beq
\Delta \GG^\mot_1[e_{1,0},\{e_{1,\ell}\} ;z_1] =   \GG_1^\mot [ e_{1,0}   , \{\tilde{e}_{1,\ell}\} ; z_1 ] \, \GG_1^\dR[e_{1,0} ,\{e_{1,\ell}\}; z_1]\, .
\label{outprf.04}
\eeq
This version of the formula is not as suitable for practical calculations as \eqref{thm2.1eq} (Theorem \ref{thm:main}), since we now have to carefully expand each of the series $\tilde{e}_{1,\ell}$ while also expanding $\GG_1^\mot$. However, we will use this form, \eqref{outprf.04}, to prove the Theorem.

Note that \eqref{outprf.04} resembles the multivariate Ihara formula, equation \eqref{eq:multi} (Section \ref{sec:multivariate}). To prove our Theorem from the Ihara formula, we need to show that the series $e_{1,\ell}' = Z^\dR_{\ell}
e_{1,\ell} (Z^\dR_{\ell})^{-1}$ that appear in the Ihara formula (see \eqref{ykconj}) are equal to the series $\tilde{e}_{1,\ell} = (\HH^\dR_n)^{-1} e_{1,\ell} \HH^\dR_{n}$ that appear in \eqref{outprf.04}. Equivalently, we prove that
\beq
(\MM^\dR)^{-1} e_{1,\ell} \MM^\dR = (\GG^\dR_n\cdots\GG^\dR_2 Z^\dR_\ell)\, e_{1,\ell}\, (\GG^\dR_n\cdots\GG^\dR_2 Z^\dR_\ell)^{-1} 
\label{keyeq}
\eeq
for all of $\ell=2,\ldots,n{+}1$.
This central identity is Lemma \ref{lem:DIcon} in Section \ref{sec:6.2}, and is the key bridge between the Ihara formula and Theorem \ref{thm:main}.

\subsection{Proof strategy for the coaction formula}
\label{sec:3.3}

There are three main steps in our proof of equation (\ref{keyeq}) (i.e.\ Lemma \ref{lem:DIcon} which is the principal work required to prove Theorem \ref{thm:main}):
\begin{itemize}
\item[(i)] The left-hand side of (\ref{keyeq}) is manifestly constant with respect to the variables $z_i$. Whereas the individual series $\GG^\dR_k$ and $Z^\dR_\ell$ appearing on the right-hand side do depend on $z_2,\ldots,z_n$. Our first step, Lemma \ref{lem:YeY}, is to prove that the right-hand side of (\ref{keyeq}) is in fact independent of the variables $z_i$. 
\item[(ii)] Not only is the right-hand side of (\ref{keyeq}) independent of $z_i$, we also show that it can be written in terms of Drinfeld associators. In Lemma \ref{lem:bigphiprod} we show that the right-hand side of \eqref{keyeq} may be written as
\begin{align}
(\GG^\dR_n\cdots\GG^\dR_2 Z^\dR_\ell)\, e_{1,\ell}\, (\GG^\dR_n\cdots\GG^\dR_2 Z^\dR_\ell)^{-1}
= \left(\Phi^{(1)}  \ldots \Phi^{(2\ell-3)}\right) e_{1,\ell} \left(\Phi^{(1)} \ldots \Phi^{(2\ell-3)}\right)^{-1} \, ,\label{simpkey}
\end{align}
where the Drinfeld associators, $\Phi^{(r)} = \Phi^\dR(E_0^{(r)}, E_1^{(r)})$, are given in terms of certain sums, $E^{(r)}_0$ and $E^{(r)}_1$, of the braid generators $e_{i,j}$, specified in \eqref{eq:E0E1} below. These particular combinations arise due to the braid action on MPLs, which is reviewed in Section \ref{sec:4.2}.
\item[(iii)] Finally, we show in Lemma \ref{lem:DIcon} that \eqref{simpkey} does indeed equal $(\MM^\dR)^{-1} e_{1,\ell} \MM^\dR$. This requires a number of results about Drinfeld associators, that we prove in Sections \ref{sec:5.1} and \ref{sec:6.1}. The proof of Theorem \ref{thm:main} then follows in Section \ref{sec:6.2}.
\end{itemize}

\subsection{Proof strategy for the single-valued map formula}
\label{sec:3.4}
The new formula for the single-valued map, Theorem \ref{thm:sv}, is proved in Section \ref{sec:7}. To better connect our result to earlier literature, we present two equivalent proofs of this formula:
\begin{itemize}
\item[(i)] 
The first proof is based on a purely combinatorial description of the single-valued map which is determined by the motivic coaction \cite{Brown:2013gia, DelDuca:2016lad}. Section \ref{sec:7.1} studies how the antipode acts on our generating series of MPLs and MZVs, and the link to the single-valued map is presented in Section \ref{sec:7.2}. Then Section \ref{sec:7.proof} proves our single-valued formula as a corollary of Theorem~\ref{thm:main}.
\item[(ii)] The second proof is based on the formulas for single-valued MPLs in any number of variables given in \cite{DelDuca:2016lad} (see also \cite{svpolylog} and \cite{Broedel:2016kls} for the similar formulas in one and two variables). The formulas in \cite{DelDuca:2016lad} use generating series identical to ours and a change of alphabet akin to that in the multivariate Ihara formula (equation (\ref{eq:multi})). See Section \ref{sec:7.3} for further details.
\end{itemize}

\section{Identities for generating series of MPLs}
\label{sec:4}

In this section, we carry out the first two steps (i) and (ii) of Section \ref{sec:3.3} towards proving Theorem \ref{thm:main}. Step (i) is Lemma \ref{lem:YeY} and step (ii) is Lemma \ref{lem:bigphiprod}.

\subsection{Differential equations} 
\label{sec:4.1}
In Section \ref{sec:2.2}, we defined the generating series ${\cal G}_n(z_1,\ldots,z_n) = \GG_n \cdots \GG_1$ of MPLs in $n$ variables which satisfies the KZ equation, $\partial_k {\cal G}_n = {\cal G}_n \Omega_k^{(n)}$ (see equation \eqref{eq:dkG}). For $k\neq \ell$, we have the commutativity of partial derivatives, $[\partial_k, \partial_\ell] {\cal G}_n = 0$, as a consequence of the \emph{infinitesimal braid relations}, equation \eqref{eq:infbraid}. In this section, we study the generating series defined by ($\ell = 2,\ldots,n{+}1$)
\beq
\YY_\ell = \GG_n\ldots \GG_2 Z_\ell = \lim_{z_1 \rightarrow z_\ell} {\cal G}_n \, ,
\label{def:yell}
\eeq
with the shuffle-regularized limit $Z_\ell = \lim_{z_1 \rightarrow z_\ell} \GG_1$. As outlined in Section \ref{sec:3}, the differential equations satisfied by $\YY_\ell$ play an important role in our proofs.

\begin{lemma}\label{lem:dkY}
For $k\neq \ell$,
\beq
\partial_k\YY_\ell = \YY_\ell \bigg(\frac{e_{k,1}}{z_{k\ell}} + \sum^{n+1}_{\substack{i= 0\\ i\neq 1, k}}\frac{e_{k,i}}{z_{ki}}\bigg)\, ,\qquad
\partial_\ell \YY_\ell = \YY_\ell  \sum_{\substack{i=0\\i\neq \ell}} \frac{e_{1,i}+e_{\ell,i}}{z_{\ell i}}  \, .
\eeq
\end{lemma}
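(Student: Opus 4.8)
The plan is to obtain both differential equations by taking the confluent limit $z_1 \to z_\ell$ of the KZ equation \eqref{eq:dkG}, $\partial_k {\cal G}_n = {\cal G}_n \Omega_k^{(n)}$, while treating the transverse derivatives $\partial_k$ (with $k \neq \ell$) and the longitudinal derivative $\partial_\ell$ separately. The technical backbone is a local analysis of ${\cal G}_n$ near the diagonal $z_1 = z_\ell$: since the only singular term of $\Omega_1^{(n)}$ as $z_1 \to z_\ell$ is $e_{1,\ell}/z_{1\ell}$, with residue $e_{1,\ell}$, the KZ equation forces a factorization ${\cal G}_n = A\,(z_{1\ell})^{e_{1,\ell}}$ with $A$ holomorphic at $z_1 = z_\ell$, and the shuffle-regularized limit extracts $\YY_\ell = A|_{z_1 = z_\ell}$ by discarding the singular factor $(z_{1\ell})^{e_{1,\ell}}$. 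I would establish this factorization first, using the infinitesimal braid relations \eqref{eq:infbraid}.

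For the transverse equation ($k \neq \ell$, hence $k \neq 1$), the derivative $\partial_k$ acts neither on $z_1$ nor on the singular factor $(z_{1\ell})^{e_{1,\ell}}$, so $\partial_k$ commutes with the regularized limit. Taking $z_1 \to z_\ell$ in $\partial_k {\cal G}_n = {\cal G}_n \Omega_k^{(n)}$ then affects only the term $i=1$ of $\Omega_k^{(n)}$, where $z_{k1} = z_k - z_1 \to z_k - z_\ell = z_{k\ell} \neq 0$. This limit is manifestly non-singular and reproduces $\YY_\ell\,\bigl(e_{k,1}/z_{k\ell} + \sum_{i \neq 1, k} e_{k,i}/z_{ki}\bigr)$, which is the first claimed identity.

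The longitudinal equation is the crux, because the limit $z_1 \to z_\ell$ and the derivative $\partial_\ell$ both involve $z_\ell$. Here I would invoke the chain rule in the form $\partial_\ell \YY_\ell = \lim_{z_1\to z_\ell}(\partial_1 + \partial_\ell){\cal G}_n$, combining the two KZ equations into ${\cal G}_n\,(\Omega_1^{(n)} + \Omega_\ell^{(n)})$. The decisive cancellation is that the two potentially singular terms, $e_{1,\ell}/z_{1\ell}$ from $\Omega_1^{(n)}$ and $e_{\ell,1}/z_{\ell 1}$ from $\Omega_\ell^{(n)}$, cancel identically using $e_{1,\ell} = e_{\ell,1}$ and $z_{\ell 1} = -z_{1\ell}$; equivalently, the singular factor satisfies $(\partial_1 + \partial_\ell)(z_{1\ell})^{e_{1,\ell}} = 0$ since $(\partial_1+\partial_\ell) z_{1\ell} = 0$. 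Consequently $(\partial_1 + \partial_\ell){\cal G}_n$ is regular on the diagonal, the regularized limit reduces to an ordinary limit, and the surviving terms (for $i \neq 1, \ell$, so the $i=1$ term is absent) collapse via $z_{1i} \to z_{\ell i}$ to $\YY_\ell \sum_{i \neq 1, \ell} (e_{1,i} + e_{\ell,i})/z_{\ell i}$, which is the second claimed identity.

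I expect the main obstacle to be rigorously justifying differentiation under the shuffle-regularized limit, namely that $\partial_\ell \YY_\ell$ genuinely equals $\lim_{z_1\to z_\ell}(\partial_1+\partial_\ell){\cal G}_n$ rather than differing by a contribution from the regularization prescription. The factorization ${\cal G}_n = A\,(z_{1\ell})^{e_{1,\ell}}$ resolves this cleanly: since $\partial_1 + \partial_\ell$ annihilates the factor $(z_{1\ell})^{e_{1,\ell}}$ entirely, the regularized limit of $(\partial_1+\partial_\ell){\cal G}_n$ equals $(\partial_1 A + \partial_\ell A)|_{z_1=z_\ell} = \partial_\ell\bigl(A|_{z_1=z_\ell}\bigr) = \partial_\ell \YY_\ell$. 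Establishing the holomorphy of $A$ and the validity of this local expansion from \eqref{eq:infbraid} is the step that will require the most care.
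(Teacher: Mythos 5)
Your overall strategy coincides with the paper's. The transverse equation ($k\neq\ell$) is obtained exactly as in the paper, by taking the regularized limit of the KZ equation; and the longitudinal equation is reduced, in both your proposal and the paper, to the identity $\partial_\ell\YY_\ell=\lim_{z_1\to z_\ell}(\partial_1+\partial_\ell){\cal G}_n$ combined with the cancellation of the poles $e_{1,\ell}/z_{1\ell}+e_{\ell,1}/z_{\ell 1}=0$. The difference is how that identity is justified: the paper writes $\YY_\ell={\cal G}_n(z_1,\ldots,z_n)+\int_{z_1}^{z_\ell}\dd z\,\partial_z{\cal G}_n(z,z_2,\ldots,z_n)$ and differentiates this (shuffle-regularized) integral representation, which needs no local analysis at the diagonal, whereas you route the justification through a local normal form of ${\cal G}_n$ at the regular singular point $z_1=z_\ell$.

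The gap is that your normal form is false as stated, and its holomorphy claim is precisely what carries your final step. Because the KZ form multiplies ${\cal G}_n$ on the \emph{right}, there is no factorization ${\cal G}_n=A\,(z_{1\ell})^{e_{1,\ell}}$ with $A$ holomorphic at $z_1=z_\ell$. Concretely, at $n=1$ near $z=1$: if $B$ is any series in $e_1$ alone with constant term $1$ (in particular $B=(1-z)^{e_1}$), then right-multiplication by $B^{-1}$ cannot change the coefficient of the word $e_1e_0$, so that coefficient in $\GG[e_0,e_1;z]\,B^{-1}$ equals its value in $\GG$ itself, namely $G(0,1;z)=-\mathrm{Li}_2(z)$ (by the reversal convention in (\ref{defgg})); this function is continuous at $z=1$ but \emph{not} holomorphic there, having a $(1-z)\log(1-z)$ branch behaviour. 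The correct statement sandwiches the singular factor: ${\cal G}_n=C\,(z_{1\ell})^{e_{1,\ell}}A$ with $\partial_1C=0$, $A$ holomorphic at $z_1=z_\ell$, and $A|_{z_1=z_\ell}=1$; it follows by applying the standard regular-singular-point theorem (non-resonant, since $e_{1,\ell}$ acts nilpotently in each graded degree) to ${\cal G}_n^{-1}$, which obeys the form-on-the-left equation $\partial_1{\cal G}_n^{-1}=-\Omega_1^{(n)}{\cal G}_n^{-1}$. With this ordering the shuffle-regularized limit gives $\YY_\ell=C$, and your argument closes even more cleanly than you wrote it: $(\partial_1+\partial_\ell){\cal G}_n=(\partial_\ell C)\,(z_{1\ell})^{e_{1,\ell}}A+C\,(z_{1\ell})^{e_{1,\ell}}(\partial_1+\partial_\ell)A$, whose regularized limit is $\partial_\ell C+C\,\partial_\ell\bigl(A|_{z_1=z_\ell}\bigr)=\partial_\ell\YY_\ell$ because $A|_{z_1=z_\ell}\equiv1$. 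The remaining steps (pole cancellation, the collapse $z_{1i}\to z_{\ell i}$, and the absence of the $i=1$ term) are correct and agree with the paper.
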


\begin{proof}
For $k \neq \ell$, we take the $z_1 \rightarrow z_\ell$ limit of the KZ equation \eqref{eq:dkG} to find:
\beq
\partial_k \YY_\ell = \lim_{z_1 \rightarrow z_\ell} \big( {\cal G}_n \Omega_k^{(n)} \big) = \YY_\ell \lim_{z_1 \rightarrow z_\ell} \Omega_k^{(n)}\, ,
\eeq
which gives the first part of the Lemma. For the derivative $\partial_\ell \YY_\ell$, we have to be more careful. It is helpful to write $\YY_\ell$ as a (shuffle regularized) integral:
\beq
\YY_\ell = {\cal G}_n(z_1,z_2,\ldots,z_n) + \int_{z_1}^{z_\ell} \dd z \, \partial_z {\cal G}_n(z,z_2,\ldots,z_n) \, .
\eeq
Then,
\beq
\partial_\ell \YY_\ell = \partial_\ell \, {\cal G}_n(z_1,z_2,\ldots,z_n) + \lim_{z_1\rightarrow z_\ell} \partial_1 \, {\cal G}_n(z_1,z_2,\ldots,z_n) + \int_{z_1}^{z_\ell} \dd z \, \partial_z \partial_\ell \, {\cal G}_n(z,z_2,\ldots,z_n)\, .
\eeq
Using the KZ equation, \eqref{eq:dkG}, and integrating over the total derivative
\beq
\partial_\ell \YY_\ell = \YY_\ell \lim_{z_1 \rightarrow z_\ell} \left(\Omega_1^{(n)} + \Omega_\ell^{(n)} \right),
\label{elln1}
\eeq
which gives the second part of the Lemma.
\end{proof}

\begin{lemma}\label{lem:YeY}
For $2\leq k \leq n$ and $2\leq \ell \leq n{+}1$, i.e.\ including the case of $k=\ell$, we have
\begin{equation}
\partial_k \left( \YY_\ell\, e_{1,\ell} \, (\YY_\ell)^{-1} \right) = 0 \, .
\end{equation}
\end{lemma}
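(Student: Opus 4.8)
The plan is to differentiate $\YY_\ell\, e_{1,\ell}\,(\YY_\ell)^{-1}$ directly, using Lemma \ref{lem:dkY} to supply the logarithmic derivative of $\YY_\ell$, and to reduce the claim to a single commutator identity among the braid generators. Writing $\Omega_k := (\YY_\ell)^{-1}\partial_k \YY_\ell$ for the connection coefficient read off from Lemma \ref{lem:dkY}, and using $\partial_k (\YY_\ell)^{-1} = -(\YY_\ell)^{-1}(\partial_k \YY_\ell)(\YY_\ell)^{-1}$ together with the fact that the formal variable $e_{1,\ell}$ does not depend on $z_k$, the product rule collapses to
\beq
\partial_k\big(\YY_\ell\, e_{1,\ell}\,(\YY_\ell)^{-1}\big) = \YY_\ell\,[\Omega_k,\,e_{1,\ell}]\,(\YY_\ell)^{-1}\, .
\eeq
Thus it suffices to prove $[\Omega_k, e_{1,\ell}]=0$, and I would treat the cases $k\neq \ell$ and $k=\ell$ separately since Lemma \ref{lem:dkY} provides a different $\Omega_k$ in each.

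For $k\neq\ell$, Lemma \ref{lem:dkY} gives $\Omega_k = \frac{e_{k,1}}{z_{k\ell}} + \sum_{i\neq 1,k}\frac{e_{k,i}}{z_{ki}}$. Expanding $[\Omega_k, e_{1,\ell}]$ term by term, I would first discard all contributions that vanish by the first infinitesimal braid relation $[e_{p,q},e_{r,s}]=0$: indeed $[e_{k,i},e_{1,\ell}]=0$ whenever the index pairs $\{k,i\}$ and $\{1,\ell\}$ are disjoint, which covers $i=0$ (as $k,0,1,\ell$ are distinct) and every $i\in\{2,\ldots,n+1\}\setminus\{k,\ell\}$. The only surviving contributions are the $i=\ell$ term of the sum and the distinguished term $e_{k,1}/z_{k\ell}$, and the key point is that both carry the common denominator $z_{k\ell}$, so that they combine into $\frac{1}{z_{k\ell}}[e_{1,k}+e_{k,\ell},\,e_{1,\ell}]$ after using $e_{k,1}=e_{1,k}$. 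This vanishes by the second braid relation $[e_{i,j}+e_{j,m},e_{i,m}]=0$ with $(i,j,m)=(1,k,\ell)$.

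For $k=\ell$, the connection coefficient from Lemma \ref{lem:dkY} is $\Omega_\ell = \sum_{i\neq 1,\ell}\frac{e_{1,i}+e_{\ell,i}}{z_{\ell i}}$, and here the argument is even more direct: each summand already vanishes individually, since $[e_{1,i}+e_{\ell,i},\,e_{1,\ell}]=0$ is precisely the second braid relation applied to the distinct triple $(1,i,\ell)$. Hence $[\Omega_\ell,e_{1,\ell}]=0$ termwise, completing both cases. I expect the only real subtlety to be the bookkeeping in the case $k\neq\ell$: one must recognise that the extra term $e_{k,1}/z_{k\ell}$, produced by the $z_1\to z_\ell$ limit's deformation of the KZ connection, is exactly what pairs with the $i=\ell$ term to assemble the braid relation; separately these two contributions are nonzero, and only their specific combination cancels.
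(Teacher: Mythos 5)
Your proposal is correct and follows essentially the same route as the paper: differentiate the conjugation using Lemma \ref{lem:dkY}, reduce to the vanishing of $[\Omega_k,e_{1,\ell}]$, and apply the infinitesimal braid relations --- in particular, for $k\neq\ell$, pairing the $e_{k,1}/z_{k\ell}$ term with the $i=\ell$ term of the sum to form $[e_{1,k}+e_{k,\ell},e_{1,\ell}]=0$, and for $k=\ell$ observing termwise vanishing of $[e_{1,i}+e_{\ell,i},e_{1,\ell}]$.
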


\begin{proof}
Consider first when $k\neq \ell$. Then, by Lemma \ref{lem:dkY},
\beq
\partial_k \left( \YY_\ell\, e_{1,\ell} \, (\YY_\ell)^{-1} \right) = \YY_\ell \left[\frac{e_{k,1}}{z_{k\ell}} + \sum^{n+1}_{\substack{i= 0\\ i\neq 1, k}}\frac{e_{k,i}}{z_{ki}} , e_{1,\ell} \right] (\YY_\ell)^{-1} \, .
\label{rlem42.1}
\eeq
But this vanishes by the infinitesimal braid relations \eqref{eq:infbraid}, more specifically by $[e_{k,i},e_{1,\ell}]=0$ along with $z_{ki}^{-1}$ with $i\neq \ell$ and by $[e_{k,1}{+}e_{k,\ell},e_{1,\ell}]=0$ along with $z_{k\ell}^{-1}$.

In the case that $k=\ell$, Lemma \ref{lem:dkY} gives
\beq
\partial_\ell \left( \YY_\ell\, e_{1,\ell} \, (\YY_\ell)^{-1} \right) = \YY_\ell \left[ \sum^{n+1}_{\substack{i=0\\i\neq 1,\ell}} \frac{e_{1,i}+e_{\ell,i}}{z_{\ell i}}, e_{1,\ell} \right] (\YY_\ell)^{-1}\, ,
\label{rlem42.2}
\eeq
where the $i=1$ term of the sum is absent since the pole $z_{\ell 1}^{-1}$ does not occur in the limit $z_1 \rightarrow z_\ell$ of (\ref{elln1}).
But we again have the braid relation $[e_{1,i} + e_{\ell,i}, e_{1,\ell}] = 0$, for each $i \neq 1,\ell$, so the right-hand side of (\ref{rlem42.2}) vanishes just like that of (\ref{rlem42.1}) which implies the Lemma.
\end{proof}

\subsection{The braid group action} 
\label{sec:4.2}
For $n\geq 2$, write $B_{n}$ for the braid group on $n$ strands, which is generated by the simple braids $\sigma_{i} = \sigma_{i, i+1}$ (with $1 \leq i \leq n{-}1$) modulo the relations \cite{Kohno:CFT}
\begin{align}
\sigma_{i} \, \sigma_{j} &= \sigma_{j} \, \sigma_{i}  \, , &&\text{for} \; \; \vert i {-} j \vert \geq 2 \;\; \text{and} 
\label{brds.01}\\
\sigma_{i} \, \sigma_{i+1} \, \sigma_{i} &= \sigma_{i+1} \, \sigma_{i} \, \sigma_{i+1} \, , &&\text{for} \; \; 1 \leq i \leq n{-}2 \, .
\notag
\end{align}
There exists a canonical projection map $\tau: B_n \rightarrow S_n$ from $B_n$ to the symmetric group $S_n$ that acts~as
\beq
\tau : \sigma_{i, \, i+1} \mapsto (i,i{+}1) \, ,
\eeq
mapping $\sigma_{i,i+1}$ to the transposition $(i,i{+}1)$.

The braid group acts on solutions, ${\cal G}_{n}^\dR$, to the KZ differential equation, \eqref{eq:dkG}. Its elements $\sigma \in B_{n}$ act on ${\cal G}_{n}^\dR$ by the corresponding permutation on the indices of both $z_i$ and~$e_{i,j}$:
\beq\label{eq:permG1}
\sigma {\cal G}^\dR_{n} = \sigma (\GG_{n}^\dR) \, \ldots \, \sigma (\GG_{1}^\dR)\, ,
\eeq
where (noting that the permutations of interest leave $\sigma(0) = 0$ and $\sigma(n{+}1)= n{+}1$ inert)
\beq\label{eq:permG2}
\sigma (\GG_{k}^\dR) = \GG^\dR\left[\begin{matrix} \sum_{j=0}^{k-1} e_{{\sigma(k)}, {\sigma(j)}} & e_{{\sigma(k)}, {\sigma(k+1)}} & \cdots & e_{{\sigma(k)},{\sigma(n)}} & e_{{\sigma(k)},{ n+1}} \\ z_0 & z_{{\sigma(k+1)}} & \cdots & z_{{\sigma(n)}} & z_{n+1} \end{matrix} ; z_{{\sigma(k)}} \right] \, .
\eeq
In slight abuse of notation, we write $\sigma(k)$ for action of the permutation $\tau_\sigma$ on the indices $k=1,\ldots,n$. However, since ${\cal G}^\dR_{n}$ and $\sigma {\cal G}^\dR_{n}$ satisfy the same linear differential equation, they are related to each other by some constant factor:
\begin{equation}\label{eq:braid:action}
\sigma {\cal G}^\dR_n = \BB^\dR(\sigma) {\cal G}^\dR_n
\end{equation}
for some series $\BB(\sigma)$ in the $e_{i,j}$. In particular, the action of a transposition is given by 
\begin{equation}\label{eq:braid:trans}
\BB^\dR(\sigma_{i,i+1}) = \Phi^\dR\left(\sum_{j=0}^{i-1} e_{j,i+1}\, ,\, e_{i,i+1}\right) \Phi^\dR\left(e_{i,i+1}\, , \, \sum_{j=0}^{i-1} e_{j,i}\right)\, ,
\end{equation}
where the analogous formula for $\BB^\mot(\sigma_{i,i+1})$ features an additional factor of $\exp(i\pi e_{i,i+1})$ in between the associators \cite{KZB, Britto:2021prf}. Moreover, it follows from \eqref{eq:braid:action} that a product $\sigma\sigma' \in B_n$ acts on ${\cal G}^\dR_n$ according to
\begin{equation}
\BB^\dR(\sigma \sigma') = \tau_\sigma \big(\BB^\dR(\sigma') \big) \BB^\dR(\sigma) \, ,
\end{equation}
where $\tau_\sigma$ acts on $\BB^\dR(\sigma')$ by permutation of the indices of both $z_i$ and $e_{i,j}$.

\begin{lemma}\label{lem:braid:cycle}
The braid $\sigma_{(a,b)} = \sigma_{a,a+1} \cdots \sigma_{b-1,b}$, which affects the cyclic permutation
\beq
\tau(a,a{+}1,\ldots,b{-}1,b) = (a{+}1,\ldots,b{-}1,b,a)
\eeq
acts on ${\cal G}_n^\dR$ by
\begin{equation}\label{eq:braid:cycle}
\BB^\dR(\sigma_{(a,b)}) = \prod_{i=b-1}^a \Phi^\dR\left( \sum_{\substack{j=0\\j\neq a}}^{i-1} e_{j,i+1}\, , \, e_{a,i+1} \right) \Phi^\dR\left( e_{a,i+1}\, , \, \sum_{\substack{j=0\\j\neq a}}^{i-1} e_{j,a} \right) \, ,
\end{equation}
where the order of multiplication is left-to-right from $i = b{-}1$ to $i=a$. 
\end{lemma}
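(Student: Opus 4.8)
The plan is to prove Lemma \ref{lem:braid:cycle} by induction on the length $b-a$ of the cyclic permutation, using the composition rule
\beq
\BB^\dR(\sigma \sigma') = \tau_\sigma\big(\BB^\dR(\sigma')\big)\,\BB^\dR(\sigma)
\label{mycomp}
\eeq
together with the known transposition formula \eqref{eq:braid:trans}. First I would write $\sigma_{(a,b)} = \sigma_{(a,b-1)}\,\sigma_{b-1,b}$, so that in \eqref{mycomp} I set $\sigma = \sigma_{(a,b-1)}$ and $\sigma' = \sigma_{b-1,b}$. The base case $b = a+1$ is just the single transposition $\sigma_{a,a+1}$, for which \eqref{eq:braid:cycle} collapses to the $i=a$ factor and matches \eqref{eq:braid:trans} verbatim (note the sum $\sum_{j\neq a,\, j\le a-1}e_{j,a+1} = \sum_{j=0}^{a-1}e_{j,a+1}$, since the excluded index $a$ never appears in the range $j \le a-1$).

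For the inductive step I would assume \eqref{eq:braid:cycle} holds for $\sigma_{(a,b-1)}$, which supplies the product of associators running over $i = b-2$ down to $i=a$. Then \eqref{eq:braid:trans} gives
\[
\BB^\dR(\sigma_{b-1,b}) = \Phi^\dR\Bigl(\sum_{j=0}^{b-2} e_{j,b}\,,\, e_{b-1,b}\Bigr)\Phi^\dR\Bigl(e_{b-1,b}\,,\, \sum_{j=0}^{b-2} e_{j,b-1}\Bigr),
\]
and the remaining task is to apply the permutation $\tau_\sigma$ induced by $\sigma = \sigma_{(a,b-1)}$ to $\BB^\dR(\sigma_{b-1,b})$ and check that the result is exactly the $i = b-1$ factor in \eqref{eq:braid:cycle}, left-multiplying the inductive product (so that the full product runs from $i=b-1$ down to $i=a$, in the stated left-to-right order). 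The permutation $\tau(a,\ldots,b-1) = (a+1,\ldots,b-1,a)$ sends index $a \mapsto a+1$, each $j$ with $a < j \le b-1$ to $j-1$, and fixes $b$ and all $j < a$ or $j>b-1$. Tracking these index shifts, the letter $e_{b-1,b}$ is unchanged (since $b-1$ maps to $b-2$? — here care is needed) and the argument sums transform into $\sum_{j\neq a}e_{j,b}$ and $\sum_{j\neq a}e_{j,a}$ respectively, reproducing the $i=b-1$ factor.

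The main obstacle I expect is precisely this index-bookkeeping under $\tau_\sigma$: I must verify that applying the cyclic permutation to the arguments $\sum_{j=0}^{b-2}e_{j,b}$ and $e_{b-1,b}$ (and the symmetric pair) yields the shifted sums $\sum_{j\neq a}^{\,b-2}e_{j,b}$ and $e_{a,b}$ that appear in the claimed $i=b-1$ term, with the exclusion of the index $a$ arising naturally from the cyclic relabeling. Because the permutation acts on both the $z$-indices and the $e_{i,j}$-indices simultaneously, I would set up a careful table of where each index lands and confirm that the distinguished ``anchor'' index $a$ (the one cycled to the end) is consistently the one that gets pulled out of the summation range. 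A secondary subtlety is the ordering convention: since each new transposition is conjugated by the accumulated permutation and placed to the \emph{left} via \eqref{mycomp}, I should confirm the resulting product is ordered from $i=b-1$ at the left down to $i=a$ at the right, matching the statement. Once the index transformations are pinned down, the identity follows directly from \eqref{mycomp}, \eqref{eq:braid:trans}, and the inductive hypothesis, with no further associator identities required.
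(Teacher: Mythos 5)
Your strategy is viable, and it is genuinely the mirror image of the paper's own proof: the paper peels off the \emph{first} transposition, writing $\sigma_{(a-1,b)}=\sigma_{a-1,a}\,\sigma_{(a,b)}$ and inducting downward in $a$, so that a single transposition $\tau_{a-1,a}$ acts on the entire inductive product while the new associator pair $\BB^\dR(\sigma_{a-1,a})$ is appended on the right; you instead peel off the \emph{last} transposition, $\sigma_{(a,b)}=\sigma_{(a,b-1)}\,\sigma_{b-1,b}$, inducting upward in $b$, so that the full accumulated cycle acts on the single new transposition factor, which is prepended on the left. Both inductions close, and the ordering works out exactly as you anticipate; your version concentrates all the bookkeeping into one application of the cycle per step, whereas the paper's spreads a trivial relabelling over every factor of the inductive product.

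However, the index bookkeeping---the step you yourself flag as the crux---is wrong as written, and if your intermediate claims were true the induction would fail. The permutation effected by $\sigma_{(a,b-1)}$ is $\tau(j)=j+1$ for $a\le j\le b-2$ and $\tau(b-1)=a$, with all other indices fixed (this is the paper's convention, cf.\ the use of $\tau(\ell-1)=1$ in the proof of Lemma \ref{lem:bigphiprod}). Your description ``$a\mapsto a+1$, each $j$ with $a<j\le b-1$ to $j-1$'' is not a permutation at all (it mixes $\tau$ with $\tau^{-1}$), and your tentative conclusion that $e_{b-1,b}$ is \emph{unchanged} is false: one has $e_{b-1,b}\mapsto e_{\tau(b-1),\tau(b)}=e_{a,b}$, and this is precisely the mechanism that produces the anchor letter $e_{a,b}$ of the $i=b-1$ factor---were $e_{b-1,b}$ fixed, the step could not close. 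With the correct $\tau$ the computation does go through: $\tau$ maps $\{0,\ldots,b-2\}$ bijectively onto $\{0,\ldots,b-1\}\setminus\{a\}$, so $\sum_{j=0}^{b-2}e_{j,b}\mapsto\sum_{j\in\{0,\ldots,b-1\}\setminus\{a\}}e_{j,b}$ and $\sum_{j=0}^{b-2}e_{j,b-1}\mapsto\sum_{j\in\{0,\ldots,b-1\}\setminus\{a\}}e_{j,a}$. Note that this also shows the sums in \eqref{eq:braid:cycle} must be read as running over $j\in\{0,\ldots,i\}\setminus\{a\}$, which is how the paper itself uses them when identifying these factors with $\Phi^\dR(E_0^{(r)},E_1^{(r)})$ via \eqref{eq:E0E1} in the proof of Lemma \ref{lem:bigphiprod}; your stated target sums with upper limit $b-2$, taken literally, cannot be reached by either induction. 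Once these two points are repaired, your inductive step reproduces exactly the $i=b-1$ factor in the correct leftmost position, and the proof is complete.
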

\begin{proof}
The formula certainly holds for $\sigma = \sigma_{a,a+1}$, by \eqref{eq:braid:trans}. Moreover, $\sigma_{(a-1,b)} = \sigma_{a-1,a} \sigma_{(a,b)}$, such that
\begin{equation}
\label{eq:braid:trans:induction}
\BB^\dR(\sigma_{(a-1,b)}) = \tau_{a-1,a}\big( \BB^\dR(\sigma_{(a,b)})\big)\, \BB^\dR(\sigma_{a-1,a})\, .
\end{equation}
Combining equations \eqref{eq:braid:trans} and \eqref{eq:braid:trans:induction}, the Lemma then follows by induction.
\end{proof}

\subsection{Emergence of the Drinfeld associators in the
product (\ref{simpkey})}
\label{sec:4.3}

We have already seen that the Drinfeld associator arises from our generating functions of MPLs in one variable as the shuffle-regularized limit, for example,
\beq
\lim_{z_n \rightarrow 1} \GG_n[e_{n,0},e_{n,n+1};z_n] = \Phi(e_{n,0},e_{n,n+1})\, .
\eeq
Several other important limits give rise to Drinfeld associators.

\begin{lemma}\label{lem:12limit}
The following shuffle-regularized double limit of $\GG_1$ at arbitrary $n\geq 1$ gives a Drinfeld associator:
\beq\label{eq:12limit}
\lim_{\substack{z_j\rightarrow 0\\ j \geq 2}} \lim_{z_1 \rightarrow z_2} \GG_1[ \{e_{1,i}\}; z_1] = \Phi (e_{1,0},e_{1,2})
\eeq
\end{lemma}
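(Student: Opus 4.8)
The plan is to reduce the double limit to the one-variable definition of the Drinfeld associator, $\lim_{u\to1}\GG[e_0,e_1;u]=\Phi(e_0,e_1)$, by using the scaling behaviour of the KZ generating series together with the fact that a puncture receding to infinity drops out of the series along with its letter. The KZ connection obeyed by $\GG_1$ is invariant under the simultaneous rescaling $z\mapsto z/z_2$ of the argument and $a\mapsto a/z_2$ of all punctures; comparing the two solutions through their tangential-base-point normalisation $\GG\to z^{e_{1,0}}$ at the origin gives the exact identity $\GG_1[\{e_{1,i}\};z_1]=z_2^{\,e_{1,0}}\,\GG[\{e_{1,i}\};u]$ with $u=z_1/z_2$, where the series on the right has punctures $\{0,\,1,\,z_3/z_2,\ldots,z_n/z_2,\,1/z_2\}$ and $e_{1,i}$ remains attached to the rescaled image of $z_i$. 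Thus $e_{1,0}$ sits at $0$, the letter $e_{1,2}$ now sits at $1$, and the ``far'' letters $e_{1,3},\ldots,e_{1,n+1}$ sit at the images $z_3/z_2,\ldots,z_n/z_2,1/z_2$.

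Next I would take the two limits in the stated order. The inner limit $z_1\to z_2$ is simply $u\to1$; since the central prefactor $z_2^{\,e_{1,0}}$ is independent of $u$ and the only letter that can participate in the $u\to1$ regularisation is $e_{1,2}$ (the one sitting at $u=1$), we get $Z_2=z_2^{\,e_{1,0}}\,\widetilde Z_2$ with $\widetilde Z_2=\lim_{u\to1}\GG[\{e_{1,i}\};u]$ (shuffle-regularised). Here the coefficient of any word built only from the surviving letters $e_{1,0},e_{1,2}$ is already the corresponding (regularised) multiple zeta value, so the $\{0,1\}$-part of $\widetilde Z_2$ equals $\Phi(e_{1,0},e_{1,2})$ for every $z_2$. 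I would then take $z_2\to0$ as the innermost of the remaining limits, i.e.\ with $z_2/z_j\to0$ for $j\geq3$: every far image $z_j/z_2$ and $1/z_2$ recedes to infinity, and because $\dd t/(t-a)\to0$ as $a\to\infty$ each word containing a far letter has vanishing coefficient in the limit. Hence $\widetilde Z_2\to\Phi(e_{1,0},e_{1,2})$, while the prefactor $z_2^{\,e_{1,0}}=\exp(e_{1,0}\log z_2)$ carries the only $z_2$-divergence and is stripped off by the shuffle regularisation of the $z_2\to0$ limit (its $\log z_2$-independent part is $1$). This yields the claimed value $\Phi(e_{1,0},e_{1,2})$. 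Note that the far letters genuinely \emph{drop out} rather than merging into $e_{1,0}$ at the origin: this relies on the hierarchy $z_2\ll z_3\ll\cdots\ll z_n\ll1$ enforced by sending $z_2\to0$ fastest, and would fail for a uniform rescaling $z_j=\lambda c_j$, for which the ratios $z_j/z_2=c_j/c_2$ stay finite.

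The step I expect to require the most care is showing that the receding punctures delete their letters cleanly --- with no residual finite contributions and, crucially, without generating spurious $\log(z_j/z_2)$ divergences that would spoil the separation of the $z_2^{\,e_{1,0}}$ prefactor. This is where I would spend the effort: because the $z_1\to z_2$ regularisation only ever involves $e_{1,2}$, the dependence of $\widetilde Z_2$ on the far punctures enters solely through convergent MPL coefficients that are analytic in those punctures as long as they stay away from $\{0,1\}$; a far puncture can never be the leading or trailing letter responsible for an endpoint divergence, so no $\log(z_j/z_2)$ is produced, and sending it to infinity is an honest limit of analytic functions in which each factor $\dd t/(t-a)$ vanishes uniformly on the compact path from $0$ to $u$. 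The base case of the argument is $n=1$, where $z_2=z_{n+1}=1$, there are no far punctures, and the statement is exactly the defining limit $\lim_{z_1\to1}\GG[e_{1,0},e_{1,2};z_1]=\Phi(e_{1,0},e_{1,2})$.
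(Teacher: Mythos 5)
Your proof is correct and follows essentially the same route as the paper's: rescaling by $z_2$ (the paper implements this as the change of variables $u = t/z_2$ inside the iterated integrals) sends $z_2 \mapsto 1$ and the remaining punctures to $z_j/z_2,\,1/z_2 \to \infty$, whose kernels $\dd u \, z_2/(z_2 u - z_j)$ vanish so that their letters drop out, leaving precisely the MZV generating series $\Phi(e_{1,0},e_{1,2})$. Your version adds two refinements that the paper leaves implicit --- the exact prefactor identity $\GG_1 = z_2^{e_{1,0}}\,\GG[\{e_{1,i}\};z_1/z_2]$ tracking the regularized trailing-zero logarithms, and the observation that the limits must be taken hierarchically (so that $z_2/z_j \to 0$) rather than by uniform rescaling --- both of which are consistent with the paper's shuffle-regularization conventions.
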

\begin{proof}
In the limit as $z_1 \rightarrow z_2$, the integration kernels in the definition (\ref{coact.01}) of MPLs (to be integrated from $0$ to $z_1$ in case of $\GG_1[\ldots;z_1]$) can be reparameterised by a change of variables from $t$ to $u = t/z_2$, where now $u$ is integrated from $0$ to $1$. The kernels corresponding to $e_{1,0}= e^\ast_{1,0}$ and $e_{1,2}$ become
\beq
\frac{\dd t}{t} = \frac{\dd u}{u} \, ,\qquad \frac{\dd t}{t{-}z_2} = \frac{\dd u}{u{-}1}\, ,
\eeq
respectively. And the kernels corresponding to $e_{1,j}$ (with $j>2$) become
\beq
\frac{\dd t}{t{-}z_j} = \frac{ z_2\, \dd u}{z_2 u {-} z_j}
\eeq
which vanish in the limit $z_2 \rightarrow 0$. In other words, all terms in $\GG_1$ with $e_{1,j}$ ($j>2$) go to zero. It follows that, in the limit of \eqref{eq:12limit}, $\GG_1$ becomes the generating series for MZVs, with $\dd u/u$ and $\dd u/(1{-}u)$ accompanied by $e_{1,0}$ and $e_{1,2}$, respectively: this is precisely $\Phi(e_{1,0},e_{1,2})$.
\end{proof}

For the next Lemma, we introduce particular linear combinations of the braid generators, $E_0^{(r)}$ and $E_1^{(r)}$, for $r = 1,2, \ldots, 2n{-}1$. These are given by
\begin{align}
E_0^{(2a-1)} &= e_{1,0}+\sum_{i=2}^a e_{1,i}\, , & E_0^{(2a)} & = e_{1,a+1}  \, ,\label{eq:E0E1} \\
E_1^{(2a-1)} &= e_{1,a+1} \, ,& E_1^{(2a)} &= e_{0,a+1}+\sum_{i=2}^a e_{i,a+1} \, .\notag  
\end{align}
These linear combinations arise from the braid action, Lemma \ref{lem:braid:cycle}, for braids of the form $\sigma_{(1,\ell)}$. Using these we find the following lemma on more general shuffle-regularized double limits.

\begin{lemma}\label{lem:bigphiprod}
Fix some $\ell$ (with $2\leq \ell \leq n{+}1$). The following shuffle-regularized limit is given~by
\beq
\lim_{\substack{z_i\rightarrow 0\\ i \geq 2}} \YY^\dR_\ell = \lim_{\substack{z_i\rightarrow 0\\ i\geq 2}} \, \lim_{z_{1} \rightarrow z_{\ell}} \, {\cal G}^\dR_{n} = \Phi^{(1)} \Phi^{(2)} \cdots \Phi^{(2\ell-3)}\, ,
\label{keyrst}
\eeq
where $\Phi^{(r)} = \Phi^\dR(E_{0}^{(r)}, \, E_{1}^{(r)})$.
\end{lemma}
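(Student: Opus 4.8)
The plan is to evaluate the iterated shuffle-regularized limit $\lim_{z_i\to0,\,i\ge2}\lim_{z_1\to z_\ell}{\cal G}^\dR_n$ directly, by tracking how the KZ solution ${\cal G}^\dR_n$ degenerates as $z_1$ collides with $z_\ell$ and the remaining punctures collapse to the origin. Two building blocks are already available: Lemma~\ref{lem:12limit} shows that a single nearest-neighbour collision of punctures produces exactly one Drinfeld associator after the rescaling $t=z_\ell u$, while Lemma~\ref{lem:braid:cycle} computes the braid holonomy $\BB^\dR(\sigma_{(1,\ell)})$ encoding the fact that, in the chamber $0<z_1<\cdots<z_n$, the point $z_1$ cannot reach $z_\ell$ without first passing the intermediate punctures $z_2,\ldots,z_{\ell-1}$. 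The combinations $E^{(r)}_0,E^{(r)}_1$ in \eqref{eq:E0E1} are tailored to be precisely the linear combinations of braid generators that survive at each stage of this degeneration, so the target product $\Phi^{(1)}\cdots\Phi^{(2\ell-3)}$ should appear as the ordered product of associators, one for each codimension-one wall crossed on the way to the corner $z_2=\cdots=z_n=0$.

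Concretely, I would argue by induction on $\ell$. The rightmost factor is an anchor that can be read off directly: under $t=z_\ell u$ the kernels of $e_{1,2},\ldots,e_{1,\ell-1}$ merge into that of $e_{1,0}$ (since $z_j/z_\ell\to0$ for $j<\ell$) while those of $e_{1,j}$ with $j>\ell$ vanish, so that the top-scale resolution carried by $Z^\dR_\ell$ contributes $\Phi^\dR(e_{1,0}+\sum_{i=2}^{\ell-1}e_{1,i},\,e_{1,\ell})=\Phi^{(2\ell-3)}$; this is also the base case $\ell=2$. For the inductive step I would exploit the factorization $\sigma_{(1,\ell)}=\sigma_{(1,\ell-1)}\sigma_{\ell-1,\ell}$ together with the composition rule $\BB^\dR(\sigma\sigma')=\tau_\sigma(\BB^\dR(\sigma'))\,\BB^\dR(\sigma)$, which should append exactly the two new factors $\Phi^{(2\ell-4)}$ and $\Phi^{(2\ell-3)}$. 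Geometrically, these correspond to $z_1$ passing $z_{\ell-1}$ and then resolving against $z_\ell$; the even-indexed $\Phi^{(2\ell-4)}$ carries the letters $e_{i,\ell-1}$ produced when $z_{\ell-1}$ itself collapses, matching $E^{(2\ell-4)}_0=e_{1,\ell-1}$ and $E^{(2\ell-4)}_1=e_{0,\ell-1}+\sum_{i=2}^{\ell-2}e_{i,\ell-1}$.

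Along the way I would record the vanishing statements that keep the bookkeeping finite. The factors $\GG^\dR_k$ with $k>\ell$ tend to $1$: after $t=z_k u$ every kernel except $\dd t/t$ carries a prefactor $z_k$ that vanishes once its label recedes, and the label $z_{n+1}=1$ decouples in the same way. Hence only $\GG^\dR_2,\ldots,\GG^\dR_\ell$ and $Z^\dR_\ell$ can contribute the letters $e_{i,j}$ with $i,j\le\ell$ appearing in the $E^{(r)}_0,E^{(r)}_1$, which is consistent with the final answer depending on $\ell$ but not on $n$.

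The main obstacle is the interaction between shuffle-regularization and the non-commuting product. Taken in isolation each $\GG^\dR_k$ with $k\ge2$ already tends to $1$, so a naive factor-by-factor evaluation would yield only $\Phi^{(2\ell-3)}$ and miss the remaining $2\ell-4$ associators. The factors $\Phi^{(1)},\ldots,\Phi^{(2\ell-4)}$ arise only from the combined regularized limit: for instance the resolution of the composite letter $e^\ast_{k,0}=\sum_{i<k}e_{k,i}$ into the constituents populating $\Phi^{(2k-2)}$ requires resolving $z_k$ against $0$ and against the lower punctures simultaneously, which is invisible to $\GG_k$ alone. Making this rigorous — justifying the interchange of the nested limits, controlling the $\log z_i$ divergences uniformly so that they are absorbed into the shuffle-regularized values, and verifying that exactly the combinations \eqref{eq:E0E1} survive in the correct left-to-right order — is the crux, and it is here that the braid holonomy of Lemma~\ref{lem:braid:cycle} does the essential work of organizing the degeneration.
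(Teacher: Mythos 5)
Your sketch assembles the correct ingredients (Lemma \ref{lem:12limit} for the nearest-neighbour collision, Lemma \ref{lem:braid:cycle} and the composition rule for the holonomy) and even predicts the correct two-factors-per-step pattern, but the mechanism by which you combine them is not right, and the step you yourself label ``the crux'' is exactly the content of the paper's proof, which you never supply. The whole argument rests on one identity absent from your proposal: ${\cal G}^\dR_{n} = \BB^\dR(\sigma_{(1,\ell-1)})^{-1}\,\sigma_{(1,\ell-1)}\,{\cal G}^\dR_{n}$. Since $\BB^\dR(\sigma_{(1,\ell-1)})$ is \emph{constant} in all $z_i$, the double limit passes through it untouched; in the permuted series $z_1$ is adjacent to $z_\ell$, so every permuted factor tends to $1$ except the one with endpoint $z_1$, whose limit is a genuine nearest-neighbour collision equal to $\Phi^{(2\ell-3)}$ (the composite letter $E_0^{(2\ell-3)}=\tau(e^\ast_{\ell-1,0})$ is built into the permuted series by definition, rather than produced by kernels merging). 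The factors $\Phi^{(1)}\cdots\Phi^{(2\ell-4)}$ are then exactly $\BB^\dR(\sigma_{(1,\ell-1)})^{-1}$ via \eqref{eq:braid:cycle} and \eqref{eq:PhiPhi1}: an exact algebraic prefactor, not the outcome of any limit. Your induction on $\ell$ could be organized along these lines (it amounts to inlining the induction already carried out in the proof of Lemma \ref{lem:braid:cycle}), but only once this conjugation identity is in place.

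Beyond the missing identity, two of your specific claims are wrong. First, for $\ell>2$ the contribution of $Z^\dR_\ell$ cannot be ``read off directly'' by the rescaling $t=z_\ell u$: in the chamber $z_1<z_2<\cdots$ the limit $z_1\to z_\ell$ requires analytic continuation of $\GG^\dR_1$ past $z_2,\ldots,z_{\ell-1}$ (as you note yourself), and that continuation is what carries the holonomy. The merged-kernel computation you describe evaluates the solution canonically normalized in the chamber where $z_1$ sits next to $z_\ell$ --- i.e.\ the permuted factor $\sigma_{(1,\ell-1)}(\GG^\dR_{\ell-1})$ --- which differs from $Z^\dR_\ell$ by precisely $\BB^\dR(\sigma_{(1,\ell-1)})$. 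Second, your final-paragraph diagnosis inverts the true state of affairs: shuffle-regularized limits \emph{are} multiplicative (each coefficient of the product series is a finite sum of products of functions polynomial in the divergent logarithms), and your vanishing statements $\lim\GG^\dR_k=1$ for $k\geq 2$ are correct; consequently \emph{all} the associators sit inside the last factor, $\lim_{z_i\to 0} Z^\dR_\ell = \Phi^{(1)}\cdots\Phi^{(2\ell-3)}$ (an identity in the algebra of braid generators modulo \eqref{eq:infbraid}, which is how letters such as $e_{0,2}$ can appear on the right-hand side), not just $\Phi^{(2\ell-3)}$. There is no ``combined regularized limit'' effect that generates the factors $\Phi^{(1)},\ldots,\Phi^{(2\ell-4)}$; what generates them is the analytic continuation hidden in $Z^\dR_\ell$, and the braid conjugation identity is what makes that explicit and computable. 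Without it, the proposal does not constitute a proof.
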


\begin{proof}
For $\ell = 2$ we can apply Lemma \ref{lem:12limit} to find
\beq
\lim_{\substack{z_i\rightarrow 0\\ i\geq 2}} \, \lim_{z_{1} \rightarrow z_{2}} \, {\cal G}^\dR_{n} = \Phi^\dR\big(E_0^{(1)}, E_1^{(1)}\big) = \Phi^{(1)}
\eeq
since $E_0^{(1)} = e_{1,0}$ and $E_1^{(1)} = e_{1,2}$ by (\ref{eq:E0E1}). Here we have used that
\beq
\lim_{\substack{z_i\rightarrow 0\\ i\geq 2}} \, \lim_{z_{1} \rightarrow z_{2}} \GG_m = 1
\eeq
for all $m\geq 2$, since $\GG_m$ only depends on $z_m,z_{m+1},\ldots,z_n$ and is thus
unaffected by the inner limit $z_1\rightarrow z_2$. The outer limits $z_i\rightarrow 0$ for all $i\geq 2$ then shrink the integration domain to zero size.

Fix $\ell > 2$. Taking the $z_i \rightarrow 0$ limit of  $\lim_{z_1\rightarrow z_\ell} \GG_1$ is difficult to do directly since $z_1$ and $z_{\ell}$ are not adjacent in the ordering (\ref{eq:zorder}) prescribed for real values of $z_i$. However, we can use the braid action to apply Lemma \ref{lem:12limit} also in this case. We use the cycle braid $ \sigma_{(1,\ell-1)} = \sigma_{1,2} \sigma_{2,3}\ldots \sigma_{\ell-2,\ell-1}$ to gradually move $z_1$ to be adjacent to $z_\ell$. In fact, in Appendix~\ref{sec:D}, we show that any choice of~braid that implements the cyclic permutation of $1,2,\ldots,\ell{-}1$ gives rise to the same result. By Lemma~\ref{lem:braid:cycle},
\beq \label{eq:bigphi1}
{\cal G}^\dR_{n} = \BB^\dR \big(\sigma_{(1,\ell-1)} \big)^{-1}\, \sigma_{(1,\ell-1)} \, {\cal G}^\dR_{n} \, ,
\eeq
where we emphasize that $\BB^\dR (\sigma_{(1,\ell-1)})$ does not depend on the variables $z_i$. The permuted generating series, $\sigma_{(1,\ell-1)} \, {\cal G}^\dR_{n}$, is given by (see \eqref{eq:permG1} and \eqref{eq:permG2})
\begin{equation}
\sigma_{(1,\ell-1)} \, {\cal G}^\dR_{n} = \prod_{k=n}^{1} \GG^\dR_k \big[\tau (e_{k,0}^{\ast}), \{ \tau(e_{k,r})\}; \tau(z_{k}) \big]  \, ,
\end{equation}
where $\tau$ acts as the cyclic permutation $\tau(1, 2, \, \ldots, \, \ell{-}1) = (2,3,\ldots,\ell{-}1,1)$ on the indices\footnote{In the case of $e_{k,0}^\ast$, the permutation $\tau$ acts on the indices of each $e_{i,j}$ appearing in the sum $e_{k,0}^\ast = \sum_{i=0}^{k-1}e_{k,i}$.} and the product $\prod_{k=n}^{1} $ is performed in descending order, i.e.\ $\GG^\dR_n\ldots \GG^\dR_2 \GG^\dR_1$. For $k \neq \ell{-}1$,
\beq
\lim_{\substack{z_i\rightarrow 0\\ i\geq 2}} \, \lim_{z_{1} \rightarrow z_{\ell}} \GG^\dR_k \big[\tau (e_{k,0}^{\ast}), \{ \tau(e_{k,r})\}; \tau(z_{k}) \big]  = 1
\eeq
since $\tau(k) \neq 1$. For $k=\ell{-}1$, however, we have $\tau(k)=1$, and $z_1$ is now adjacent to $z_\ell$ in the new ordering. The shuffle-regularized limit can be computed using the same method as Lemma \ref{lem:12limit}:
\beq\label{eq:bigphi2}
\lim_{\substack{z_i\rightarrow 0\\ i\geq 2}} \, \lim_{z_{1} \rightarrow z_{\ell}} \GG_{\ell-1} \big[\tau(e_{k,0}^\ast), \{ e_{1,\tau(r)}\}; z_1 \big] = \Phi\big(E_0^{(2\ell-3)},E_1^{(2\ell-3)}\big)\, ,
\eeq
where we have used that $E_0^{(2\ell-3)} = \tau(e_{k,0}^\ast) = e_{1,0}{+} \sum_{i=2}^{\ell-1}e_{1,i}$ and $E_1^{(2\ell-3)} = e_{1,\ell}$ by (\ref{eq:E0E1}).

Finally, $\BB^\dR (\sigma_{(1,\ell-1)})$ is obtained from the general formula \eqref{eq:braid:cycle} for cycle braids with inverse
\begin{align}
\BB^\dR (\sigma_{(1,\ell-1)})^{-1} &= \prod_{i=1}^{\ell-2} \Phi^\dR\left( \sum_{\substack{j=0\\j\neq 1}}^{i-1} e_{j,1}\, , \, e_{1,i+1} \right) \Phi^\dR\left( e_{1,i+1} \, , \, \sum_{\substack{j=0\\j\neq 1}}^{i-1} e_{j,i+1}\right)  \notag \\
&= \prod_{r=1}^{2\ell-4} \Phi^\dR \big(E_{0}^{(r)}, E_{1}^{(r)} \big) = \Phi^{(1)} \Phi^{(2)} \ldots \Phi^{(2\ell-4)} \, ,
\label{eq:bigphi3}
\end{align}
where we use the identity $\Phi(e_0,e_1)^{-1} = \Phi(e_1,e_0)$ (equation (\ref{eq:PhiPhi1})). As one can see from the last step, we take the products here as ordering their factors left-to-right with increasing $r$. The Lemma follows by multiplying \eqref{eq:bigphi3} and \eqref{eq:bigphi2}, using \eqref{eq:bigphi1} and identifying $E_{0}^{(r)}, E_{1}^{(r)}$ according to (\ref{eq:E0E1}).
\end{proof}
By the $z_i$ independence of $\YY^\dR_\ell$ shown in Lemma \ref{lem:YeY}, the statement (\ref{keyrst}) of the Lemma implies the key step (\ref{simpkey}) towards proving Theorem \ref{thm:main}.

\section{The coaction formula for MPLs of a single variable}
\label{sec:5}
The previous section completed steps (i) and (ii) of Section \ref{sec:3.3} towards the proof of Theorem~\ref{thm:main}. Before proceeding to step (iii) of the proof of Theorem \ref{thm:main} for any number of variables, $n$, we introduce the key ideas and lemmas by studying the special case of $n=1$.

\subsection{Expansion of the Drinfeld associator}
\label{sec:5.1}
The proof of Theorem \ref{thm:main} relies on some properties of the Drinfeld associators, $\Phi^\dR$ and $\Phi^\mot$, that we prove in this section. These series can be written as
\begin{equation}
\Phi^\dR(e_0,e_1) = \sum_W \phi^{-1}(f^\dR_W) P_W,\qquad \Phi^\mot(e_0,e_1) = \sum_{k=0}^\infty\sum_W \phi^{-1}\big((f_2^\mot)^kf^\mot_W \big) P_{(2k)W} \, ,
\label{expphimdr}
\end{equation}
upon converting the de Rham and motivic MZVs into the $f$-alphabet as in Section \ref{sec:2.3}, where $P_W$ and $P_{(2k)W}$ are polynomials in $e_0,e_1$. The parenthesis of $P_{(2k)W}$ ensures that the integer $2k$ is treated as a single letter, and we have $P_{(0)W}= P_{W}$. The sums over $W$ are over all words in odd numbers $\geq 3$ with $f_W=f_a f_b\ldots f_c$ for $W=ab\ldots c$. The choice of an isomorphism $\phi$ affects the expressions for the polynomials $P_W, P_{(2k)W}$. A recent proposal for canonical choices of such polynomials and the $\phi$ isomorphism can be found in \cite{Dorigoni:2024iyt}, though our main results including Theorems \ref{thm:main} and \ref{thm:sv} are unaffected by these choices.

In fact, the Drinfeld associators are group-like (as understood in the original papers by Drinfeld \cite{Drinfeld:1989st, Drinfeld2}, but see also \cite{Bar:1998, BrownTate}), which means that
\beq\label{eq:grouplike}
\delta_\shuffle \Phi = \Phi \otimes \Phi,
\eeq
where $\delta_\shuffle$ can be explicitly written as the de-shuffle coproduct on this associative algebra\footnote{In most references, the group-like property is expressed in terms of the coproduct on the universal enveloping algebra of the free Lie algebra and is defined as $\delta P = P \otimes 1 + 1 \otimes P$ for any element $P$ of the free Lie algebra. But we can identify $\delta$ with $\delta_\shuffle$ by using that the universal enveloping algebra is isomorphic to the free associative algebra on $e_0$ and $e_1$. This follows from the Poincar\'e--Birkoff--Witt Theorem, see for instance \cite{Reutenauer}.}
\beq
\delta_\shuffle e_A = \sum_{B,C} (A,B\shuffle C) e_B \otimes e_C
\label{shcoprod}
\eeq
for words $A,B,C$ in $0$ and $1$, where $e_A = e_a e_b \ldots e_c$ for a word $A=ab\ldots c$. Here we sum over all words $B,C$ including empty ones. The inner product $(A, B\shuffle C)$ picks out those $B,C$ such that $A$ appears in the shuffle product $B\shuffle C$. An important property of $\delta_\shuffle$ is that a polynomial $P$ in the free associative algebra is a Lie polynomial iff
\beq
\delta_\shuffle P = P\otimes 1 + 1 \otimes P\, ,
\eeq
see Appendix \ref{app:Lie} for a review of free Lie algebras and their properties.

\begin{lemma}
The polynomials $P_W$ and $P_{(2k)W}$ appearing in the expressions (\ref{expphimdr}) for $\Phi^\dR$ and $\Phi^\mot$ satisfy
\begin{equation}\label{eqn:deshufflePW}
\delta_\shuffle P_W = \sum_{E,F} (W,E\shuffle F) P_E \otimes P_F
\end{equation}
and
\beq\label{eqn:deshuffleP2W}
\delta_\shuffle P_{(2k)W} = \sum_{k_1+k_2 = k} \sum_{E,F} (W,E\shuffle F) P_{(2k_1)E} \otimes P_{(2k_2)F}\, ,
\eeq
where $W,E,F \in {\cal W}$ are words in the odd numbers (i.e. $3,5,7,\ldots$), and $k,k_1,k_2 \geq 0$.
\end{lemma}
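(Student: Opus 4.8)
The plan is to exploit the group-like property of the Drinfeld associators, equation (\ref{eq:grouplike}), and to match coefficients of linearly independent MZVs on both sides. The key observation is that $\delta_\shuffle$ is a coproduct on the free associative algebra generated by $e_0,e_1$, so it acts only on the polynomial factors $P_W$ and $P_{(2k)W}$, while the MZV coefficients $\phi^{-1}(f^\dR_W)$ and $\phi^{-1}((f_2^\mot)^k f^\mot_W)$ pass through as scalars. Thus, applying $\delta_\shuffle$ to the expansions (\ref{expphimdr}) gives, in the de Rham case,
\[
\delta_\shuffle \Phi^\dR = \sum_W \phi^{-1}(f^\dR_W)\, \delta_\shuffle P_W\, .
\]

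Next I would expand the right-hand side of the group-like condition (\ref{eq:grouplike}) in the same basis, namely $\Phi^\dR \otimes \Phi^\dR = \sum_{E,F} \phi^{-1}(f^\dR_E)\,\phi^{-1}(f^\dR_F)\, P_E \otimes P_F$, and rewrite the product of coefficients. By the shuffle-homomorphism property (\ref{eq:fshuffle}) of $\phi$, one has $\phi^{-1}(f^\dR_E)\,\phi^{-1}(f^\dR_F) = \phi^{-1}(f^\dR_E \shuffle f^\dR_F)$, and the combinatorial identity $f_E \shuffle f_F = \sum_W (W, E\shuffle F)\, f_W$ converts this into $\sum_W (W, E\shuffle F)\,\phi^{-1}(f^\dR_W)$. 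Reorganising the double sum by the resulting word $W$ yields
\[
\Phi^\dR \otimes \Phi^\dR = \sum_W \phi^{-1}(f^\dR_W) \sum_{E,F} (W, E\shuffle F)\, P_E \otimes P_F\, .
\]
Equating the two series and comparing the coefficient of each $\phi^{-1}(f^\dR_W)$ then reads off (\ref{eqn:deshufflePW}). The motivic case runs identically once the commuting generator $f_2$ is tracked: applying $\delta_\shuffle$ to $\Phi^\mot$ produces $\sum_{k,W}\phi^{-1}((f_2^\mot)^k f^\mot_W)\,\delta_\shuffle P_{(2k)W}$, while in $\Phi^\mot \otimes \Phi^\mot$ the $f$-alphabet product (\ref{shuffprod}) gives $(f_2^\mot)^{k_1}f^\mot_E \shuffle (f_2^\mot)^{k_2} f^\mot_F = (f_2^\mot)^{k_1+k_2}\sum_W (W,E\shuffle F)\, f^\mot_W$. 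Matching coefficients of $\phi^{-1}((f_2^\mot)^k f^\mot_W)$ then forces the convolution $\sum_{k_1+k_2=k}$ over the $f_2$-weights and yields (\ref{eqn:deshuffleP2W}).

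The one genuinely non-trivial input, and the step I expect to require the most care, is the \emph{linear independence} of the coefficients $\phi^{-1}(f^\dR_W)$ (respectively $\phi^{-1}((f_2^\mot)^k f^\mot_W)$), which is what licenses passing from an equality of series to an equality of the individual polynomial coefficients. This holds because $\phi$ is an isomorphism and the words $(f_2)^k f_W$ form a basis of the $f$-alphabet algebra of Section \ref{sec:2.3}; I would invoke this as established input rather than reprove it. Everything else is bookkeeping that translates the group-like condition through the $f$-alphabet dictionary, and I would streamline the write-up by presenting the de Rham computation in full and remarking that the motivic one differs only by carrying along the commuting $f_2$ factors.
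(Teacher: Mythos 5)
Your proposal is correct and follows essentially the same route as the paper's proof: apply $\delta_\shuffle$ to the expansion (\ref{expphimdr}), expand the group-like condition (\ref{eq:grouplike}) using the shuffle-homomorphism property (\ref{eq:fshuffle}) of $\phi$, and match coefficients of $\phi^{-1}(f^\dR_W)$ (respectively $\phi^{-1}((f_2^\mot)^k f^\mot_W)$), with the $f_2$ powers producing the convolution over $k_1+k_2=k$. Your explicit appeal to the linear independence of the $f$-alphabet words is the justification the paper leaves implicit in its "matching the coefficients" step, and is a sound addition.
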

\begin{proof}
On the one hand, the de-shuffle coproduct of the Drinfeld associator can be carried out at the level of
\beq
\delta_\shuffle\Phi^\dR = \sum_W \phi^{-1}(f^\dR_W) \delta_\shuffle P_W \, .
\label{matpart1}
\eeq
On the other hand, the group-like property \eqref{eq:grouplike} implies that
\beq
\Phi^\dR\otimes \Phi^\dR = \sum_{E,F} \phi^{-1}(f^\dR_E)\phi^{-1}(f^\dR_F)  P_E\otimes P_F = \sum_{E,F} \phi^{-1}(f^\dR_E \shuffle f^\dR_F)  P_E\otimes P_F\, ,
\label{matpart2}
\eeq
where we used the shuffle product of MZVs (see equation \eqref{eq:fshuffle} in Section \ref{sec:2.3}). Matching the coefficients of $\phi^{-1}(f^\dR_W)$ on the right-hand sides of (\ref{matpart1}) and (\ref{matpart2}) implies the first part \eqref{eqn:deshufflePW} of the Lemma. The second part of the Lemma follows from the analogous calculation for $\Phi^\mot$, because the powers of $f_2$ appearing in $\Phi^\mot$ do not affect the shuffle product of the odd generators $f_{2k+1}$ in the $f$-alphabet (see (\ref{shuffprod}) in Section \ref{sec:2.3}).
\end{proof}

\begin{lemma}\label{lem:liecoef}
The polynomials $P_{(2)}$ and $P_m$ (for $m=3,5,\ldots$) that multiply the Riemann zeta values $\phi^{-1}(f_m)$ in $\Phi^\dR$ and $\Phi^\mot$ are Lie polynomials.
\end{lemma}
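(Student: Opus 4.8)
The plan is to use the characterization recalled just above the statement: a polynomial $P$ in the free associative algebra on $e_0,e_1$ is a Lie polynomial if and only if it is primitive for the de-shuffle coproduct, i.e.\ $\delta_\shuffle P = P\otimes 1 + 1\otimes P$. Since the previous Lemma has already computed $\delta_\shuffle$ on every coefficient polynomial $P_W$ and $P_{(2k)W}$ in terms of the inner product $(W, E\shuffle F)$, the whole proof reduces to specializing the formulas \eqref{eqn:deshufflePW} and \eqref{eqn:deshuffleP2W} to the words relevant here and checking that the right-hand side collapses to the primitive form. The only external input needed is the normalization $P_\emptyset = 1$, which holds because the constant term of the Drinfeld associator is $\Phi = 1 + \ldots$, so the coefficient of $\phi^{-1}(f_\emptyset)=1$ is the identity.

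First I would treat $P_m$ for an odd letter $m\geq 3$. Applying \eqref{eqn:deshufflePW} with the single-letter word $W=m$, the inner product $(m, E\shuffle F)$ forces $|E|+|F|=1$, so the only contributing pairs are $(E,F)=(\emptyset,m)$ and $(E,F)=(m,\emptyset)$, each with coefficient $1$. Hence
\beq
\delta_\shuffle P_m = P_\emptyset \otimes P_m + P_m \otimes P_\emptyset = P_m\otimes 1 + 1\otimes P_m\, ,
\eeq
which is exactly the primitivity condition, so $P_m$ is a Lie polynomial.

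Next I would handle $P_{(2)}$, which is the $k=1$, $W=\emptyset$ case of the motivic coefficients. Applying \eqref{eqn:deshuffleP2W} with $W=\emptyset$, the inner product $(\emptyset, E\shuffle F)$ vanishes unless $E=F=\emptyset$, while the sum over $k_1+k_2=1$ contributes only $(k_1,k_2)\in\{(1,0),(0,1)\}$. Using $P_{(0)\emptyset}=P_\emptyset=1$, this gives
\beq
\delta_\shuffle P_{(2)} = P_{(2)}\otimes 1 + 1\otimes P_{(2)}\, ,
\eeq
so $P_{(2)}$ is a Lie polynomial as well. As a consistency check, the low-weight expansion \eqref{eq:Phiexpand} gives $P_{(2)}=[e_0,e_1]$ and $P_3=[[e_0,e_1],e_0{+}e_1]$, both manifestly Lie.

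There is no substantial obstacle here: the content of the Lemma is entirely packaged in the preceding computation of $\delta_\shuffle$ on the coefficient polynomials, and the present statement is essentially its corollary for words of length $\leq 1$. The only point requiring care is the combinatorics of the shuffle inner product for these short words --- namely that a single letter (or the empty word) admits only the trivial deconcatenations --- together with tracking the normalization $P_\emptyset = 1$. It is worth emphasizing why the argument does not extend to higher-weight coefficients $P_W$ with $|W|\geq 2$: there the right-hand side of \eqref{eqn:deshufflePW} acquires genuine cross terms $P_E\otimes P_F$ with $E,F\neq\emptyset$, reflecting that such coefficients are generally not Lie polynomials.
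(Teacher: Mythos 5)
Your proof is correct and follows essentially the same route as the paper: both deduce primitivity of $P_m$ and $P_{(2)}$ under $\delta_\shuffle$ from the formulas \eqref{eqn:deshufflePW} and \eqref{eqn:deshuffleP2W} and then invoke the characterization of Lie polynomials as primitives. You merely spell out the short-word shuffle combinatorics and the normalization $P_\emptyset=1$ that the paper leaves implicit.
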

\begin{proof}
By \eqref{eqn:deshufflePW} and \eqref{eqn:deshuffleP2W}, the polynomials $P_m$ for $m=3,5,\ldots$ in both $\Phi^\dR$ and $\Phi^\mot$ satisfy $\delta_\shuffle P_m = P_m\otimes 1 + 1 \otimes P_m$, which implies that $P_m$ is a Lie polynomial. Similarly, \eqref{eqn:deshuffleP2W} implies that the polynomial $P_{(2)} = [e_0,e_1]$ appearing in $\Phi^\mot$ satisfies $\delta_\shuffle P_{(2)} = P_{(2)}\otimes 1 + 1 \otimes P_{(2)}$, so this is also a Lie polynomial.
\end{proof}

\begin{remark}
Note that the polynomials $P_{(2k)}$, for $k>1$, that multiply the even Riemann zeta values $\zeta^\mot_{2k}$ in $\Phi^\mot$ with $2k\geq 4$ are \emph{not} Lie polynomials.\footnote{Note that the canonical polynomials $g_k$ of \cite{Dorigoni:2024iyt} with even $k \geq 4$ follow normalization conventions adapted to the element $\zeta_k$ in the $\mathbb Q$ basis of (motivic) MZVs such that the term $g_k = {\rm ad}_x^{k-1}(y)+\ldots$ with the maximum number of letters $x$ appears with unit coefficient. The corresponding $P_{(k)}$ with even $k \geq 4$ in this work, by contrast, are normalized to feature $ {\rm ad}_{e_0}^{k-1}(e_1)$ with coefficients $\zeta_k / (\zeta_2)^{k/2}$, reflecting a $\mathbb Q$ basis of (motivic) MZVs including $(\zeta_2)^{k/2}$ instead of $\zeta_k$.} For example, $P_{(4)}$ satisfies 
\beq
\delta_\shuffle P_{(4)} = P_{(4)} \otimes 1 + P_{(2)} \otimes P_{(2)} + 1 \otimes P_{(4)}\, ,
\eeq
so is not Lie. Indeed $P_{(4)}$ is given by \cite{LeMura}
\beq
P_{(4)} = \frac{2}{5} [e_0,[e_0,[e_0,e_1]]] + \frac{1}{10} [e_1,[e_0,[e_1,e_0]]] -  \frac{2}{5} [e_1,[e_1,[e_1,e_0]]] +  \frac{1}{2} [e_0,e_1] [e_0,e_1]
\eeq
whose last term $\sim [e_0,e_1] [e_0,e_1]$ is manifestly not a Lie polynomial.
\end{remark}

\subsubsection{Ihara product}
The motivic coaction on $\Phi^\mot(e_0,e_1)$ can be computed in two different ways. First, it is induced by the coaction defined on the individual motivic MZVs, as reviewed in Section \ref{sec:2.3} (see \eqref{eq:fcoaction} and \eqref{eq:fdelta}). Second, the Ihara formula also gives a formula for $\Delta \Phi^\mot(e_0,e_1)$, as reviewed in Section \ref{sec:iharaMZV} (see \eqref{eq:iharaphi}). Combining these two formulas for $\Delta \Phi^\mot(e_0,e_1)$ implies some useful properties of the polynomials $P_W, P_{(2k)W}$ appearing in $\Phi^\dR$ and $\Phi^\mot$.

To state these properties, it is helpful to define the \emph{Ihara derivation} on the free Lie algebra generated by $e_0,e_1$. For any Lie polynomial $x$, the derivation $D_x$ is defined by \cite{Ihara1989TheGR, Ihara:stable}
\begin{equation}\label{eq:De0De1}
D_x e_0 = 0\, , \qquad \qquad D_x e_1 = [e_1,x]
\end{equation}
and satisfies the Leibniz property with respect to the Lie bracket
\beq\label{eq:DLeib1}
D_x ([y,z]) = \left[ D_xy, z\right] + \left[y, D_xz \right] \, .
\eeq
The \emph{Ihara product} is then defined by
\begin{equation}\label{eq:circ}
x \circ y = xy - D_y x\, .
\end{equation}
We emphasize that $x\circ y$ is defined for a polynomial $x$ (in $e_0$, $e_1$) and a Lie polynomial $y$. The product $P\circ P'$, for two Lie polynomials, is not itself a Lie polynomial. So repeated applications of $\circ$ to Lie polynomials $P_a,P_b,P_c,\ldots,P_d$ can only be taken in the form
\beq
(\cdots ((P_{a} \circ P_b) \circ P_c) \circ \cdots )\circ P_{d} \, ,
\eeq
by bracketing on the left.

The statement of the following Lemma is equivalent to an observation in \cite{Drummond:2013vz}, and we shall present a proof according to \cite{FB:talk, priv:FB} (also see \cite{priv:LS} for an alternative proof). 

\begin{lemma}\label{lem:drinexp}
The polynomials $P_W$ and $P_{(2k)W}$ appearing in $\Phi^\dR$ and $\Phi^\mot$ in \eqref{expphimdr} satisfy ($a\geq 3$~odd)
\beq
P_{Wa} = P_W \circ P_a\, ,\qquad P_{(2k)Wa} = P_{(2k)W} \circ P_a\, .
\label{stlm54}
\eeq
In particular, this implies the relations
\begin{align}
P_W & = (\cdots ((P_{a} \circ P_b) \circ P_c) \circ \cdots )\circ P_{d}\, , \label{stlm54b}\\
P_{(2k)W} & = (\cdots (((P_{(2k)} \circ P_{a}) \circ P_b) \circ P_c) \circ \cdots )\circ P_{d} \, , \notag
\end{align}
for words $W = abc\cdots d$ in odd letters ($a,b,\ldots,d\geq 3$), and where, by Lemma \ref{lem:liecoef}, the $P_a$ are Lie polynomials.
\end{lemma}
\begin{proof}
At leading orders w.r.t.\ the number of letters $f_a$ (or \textit{coradical degree}) in the expansion of the expression \eqref{eq:eprimephi} for $e_1'$,
\begin{equation}
e_1' = e_1 - \sum_a \phi^{-1}(f^\dR_a) [e_1,P_a] + \cdots\, .
\end{equation}
Using this, the factor, $\Phi^\mot(e_0,e_1')$, in the Ihara formula can likewise be expanded
\begin{equation}\label{eq:phimotexp}
\Phi^\mot(e_0,e_1') = \Phi^\mot(e_0,e_1) - \sum_a \phi^{-1}(f^\dR_a) D_{P_a} \Phi^\mot(e_0,e_1) + \cdots\, ,
\end{equation}
where each term in the ellipsis features at least two letters $f^\dR_a f^\dR_b$ with $a,b$ odd.
For any word $A$ in $3,5,7,\ldots$ and letter $a$, consider the coefficient of $f_A^\mot f_a^\dR$ in the Ihara formula, \eqref{eq:iharaphi}. The left-hand side can be computed using the motivic coaction (\ref{eq:fcoaction}) on the $f$-alphabet. This implies that the coefficient of $f_A^\mot f_a^\dR$ on the left-hand side is $P_{Aa}$. Whereas, on the right-hand side, the coefficient of $f_A^\mot f_a^\dR$ follows from \eqref{eq:phimotexp}, and is given by
\begin{equation}
P_AP_a - D_{P_a}P_A = P_A \circ P_a
\end{equation}
such that
\begin{equation}
P_{Aa} = P_A \circ P_a \, .
\end{equation}
The argument can be straightforwardly extended to the coefficient of $(f_2^\mot)^kf_A^\mot f_a^\dR$ in \eqref{eq:iharaphi} with $k\geq 1$ which is $P_{(2k)Aa}$ on the left-hand side and $P_{(2k)A} \circ P_a,$ on the right-hand side. In view of $P_{(2k)Aa}=P_{(2k)A} \circ P_a$ and 
$P_{(0)A}=P_{A}$, this implies the
statement (\ref{stlm54b}) of the Lemma.
\end{proof}

To describe the inverse series $\Phi^\dR(e_0,e_1)^{-1}$, it is helpful to introduce a new `dual' Ihara product for words $y$ and Lie polynomials $x$ in $e_0, e_1$,
\beq
x{\,\widetilde\circ\,} y = xy + D_x y \, .
\eeq
We define iterated products of Lie polynomials $P_a$ with $\widetilde\circ$ as
\beq
\widetilde{P}_A = P_{a} {\,\widetilde\circ\,} \big(\cdots {\,\widetilde\circ\,} (P_b {\,\widetilde\circ\,} (P_c {\,\widetilde\circ\,} P_d) )\cdots \big)\, ,
\label{nottildeB}
\eeq
for a word $A = a \cdots bcd$, where now we bracket the product on the right.

We will also employ the antipode $\alpha(A)$ for words $A$ in arbitrary alphabets defined by
\beq\label{eq:antipodeA}
\alpha(A) = (-1)^{|A|} A^t\, ,
\eeq
where $|A|$ is the length of the word $A$ and $A^t$ is the reversed word (see \eqref{lie:antipode1} in Appendix \ref{app:Lie}).

\begin{lemma}\label{lem:PhiInv}
The inverse of the series expansion (\ref{expphimdr}) of $\Phi^\dR(e_0,e_1)$ is given by
\beq\label{eq:PhiInv}
\Phi^\dR(e_0,e_1)^{-1} = \sum_{W} \phi^{-1}(f^\dR_{\alpha(W)}) \widetilde{P}_W \, ,
\eeq
where $\widetilde{P}_W$ is defined by \eqref{nottildeB}.
\end{lemma}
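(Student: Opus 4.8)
The plan is to deduce the formula from the group-like property of $\Phi^\dR$ together with the antipode of the free associative algebra on $e_0,e_1$. Since $\Phi^\dR$ is group-like for the de-shuffle coproduct, $\delta_\shuffle\,\Phi^\dR = \Phi^\dR\otimes\Phi^\dR$ (equation (\ref{eq:grouplike})), the antipode axiom gives $S(\Phi^\dR)\,\Phi^\dR = 1$, so that the concatenation inverse is $(\Phi^\dR)^{-1} = S(\Phi^\dR)$, where $S$ is the antipode of the concatenation--deshuffle Hopf algebra. On words this antipode acts precisely as the map $\alpha$ of (\ref{eq:antipodeA}), namely $S(e_A) = (-1)^{|A|}e_{A^t} = \alpha(e_A)$, which I would confirm on the generators $e_0,e_1$ and extend using that $S$ is an anti-homomorphism for concatenation. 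Applying this to the expansion (\ref{drinexp}) and noting that the scalar de Rham MZVs $\phi^{-1}(f^\dR_W)$ are untouched yields the preliminary form
\[
(\Phi^\dR)^{-1} = \sum_W \phi^{-1}(f^\dR_W)\,\alpha(P_W)\, .
\]

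The heart of the proof is then to convert $\alpha(P_W)$, built from the left-nested Ihara products $\circ$, into the right-nested dual products ${\,\widetilde\circ\,}$ defining $\widetilde{P}_W$. The key local identity I would establish is
\[
\alpha(x\circ y) = -\big(y \,{\,\widetilde\circ\,}\, \alpha(x)\big)
\]
for a polynomial $x$ and a Lie polynomial $y$. This rests on three facts: $\alpha$ is an anti-homomorphism, $\alpha(uv)=\alpha(v)\alpha(u)$; the antipode of a primitive (Lie) element is $\alpha(y) = -y$; and $\alpha$ commutes with the Ihara derivation, $\alpha D_y = D_y\alpha$, for Lie $y$. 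The last fact I would obtain by observing that $\alpha D_y\alpha$ is again a derivation (conjugation of a derivation by the algebra anti-automorphism $\alpha$) that agrees with $-D_{\alpha(y)} = -D_{-y} = D_y$ on the generators $e_0,e_1$; here one uses $D_y e_0 = 0$ together with $\alpha D_y\alpha(e_1) = -[e_1,\alpha(y)]$. Inserting the definitions $x\circ y = xy - D_y x$ and $y\,{\,\widetilde\circ\,}\,z = yz + D_y z$ then produces the displayed identity.

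With this in hand I would prove, by induction on the word length, that $\alpha(P_W) = (-1)^{|W|}\widetilde{P}_{W^t}$. Writing $W = W'd$ with last letter $d$ and using the left-nested form $P_W = P_{W'}\circ P_d$ from Lemma \ref{lem:drinexp}, the local identity gives $\alpha(P_W) = -\big(P_d\,{\,\widetilde\circ\,}\,\alpha(P_{W'})\big)$; the inductive hypothesis and the right-nested definition (\ref{nottildeB}) of $\widetilde{P}_{W^t}$ (whose leading letter is $d$) then close the induction. Finally, substituting $\alpha(P_W) = (-1)^{|W|}\widetilde{P}_{W^t}$ into the preliminary form above and reindexing by $V = W^t$ combines the sign $(-1)^{|W|}$ with the reversal into the single coefficient $\phi^{-1}(f^\dR_{\alpha(V)})$, giving exactly (\ref{eq:PhiInv}).

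I expect the main obstacle to be the careful sign bookkeeping in the commutation $\alpha D_y = D_y\alpha$ and in the local identity, together with the point (which must be stated cleanly) that the Ihara derivation is a genuine derivation for the concatenation product, so that conjugating it by the anti-automorphism $\alpha$ again produces a derivation. Should the antipode argument prove delicate, an alternative route is to use $(\Phi^\dR)^{-1} = \Phi^\dR(e_1,e_0)$ from (\ref{eq:PhiPhi1}) and match $\Phi^\dR(e_1,e_0)$ against the right-hand side of (\ref{eq:PhiInv}); this requires analogous control of $P_W(e_1,e_0)$ and appears no simpler.
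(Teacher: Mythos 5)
Your proof is correct, but it takes a genuinely different route from the paper's main proof --- and is in fact closest in spirit to the Remark the paper places immediately after the lemma. The paper's own proof never computes $\alpha(P_W)$: it multiplies $\Phi^\dR$ against the claimed inverse, observes that the lemma is equivalent to the vanishing of the convolution coefficients $\sum_{A,B}\big(W,A\shuffle \alpha(B)\big)P_A\widetilde{P}_B$ for nonempty $W$ (equation (\ref{eq:AcircB})), and proves this by induction on $|W|$, acting with the Ihara derivation $D_{P_i}$ and using the shuffle recursion (\ref{eq:lemmashuffle}); that inductive mechanism is then recycled almost verbatim to prove Lemma \ref{lem:PhiXPhi}. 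You instead identify $(\Phi^\dR)^{-1}$ with the antipode image of $\Phi^\dR$ via group-likeness (this is legitimate, and is the paper's own Appendix formula (\ref{lie:phi-inv})), and reduce everything to the structural identity $\alpha(P_W)=(-1)^{|W|}\widetilde{P}_{W^t}$, proved by induction from the local rule $\alpha(x\circ y)=-\bigl(y{\,\widetilde\circ\,}\alpha(x)\bigr)$. The only delicate ingredient, $\alpha D_y=D_y\alpha$ for Lie $y$, you establish correctly: $\alpha D_y \alpha$ is again a derivation (conjugation of a derivation by the involutive anti-automorphism $\alpha$) and agrees with $D_y$ on $e_0,e_1$, hence equals it; the base case $\alpha(P_a)=-P_a$ and the final reindexing $W\mapsto W^t$, which absorbs $(-1)^{|W|}$ into $f_{\alpha(W)}$, also check out. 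Your argument is essentially the paper's Remark with the swap $\tau\colon e_0\leftrightarrow e_1$ replaced by the antipode $\alpha$: both maps negate each $P_a$ and intertwine $\circ$ with ${\,\widetilde\circ\,}$, and the Remark's conclusion $\tau P_W=\widetilde{P}_{\alpha(W)}$ is precisely your identity (indeed the route via $\Phi^\dR(e_0,e_1)^{-1}=\Phi^\dR(e_1,e_0)$, which you set aside as "no simpler," is exactly what the Remark does). What the paper's main proof buys is a technique that generalizes directly to the conjugation formula of Lemma \ref{lem:PhiXPhi}; what yours buys is a clean closed form for the antipode image of each $P_W$ and no convolution-coefficient bookkeeping.
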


\begin{proof}
Equation \eqref{eq:PhiInv} is equivalent to showing (for nonempty words $W$) that
\beq\label{eq:AcircB}
\sum_{A,B} \big(W,A\shuffle \alpha(B) \big) P_A \widetilde{P}_{B} = 0\, .
\eeq
Clearly this holds when $|W|=1$. In order to establish the validity of (\ref{eq:AcircB}) for words of length $|W| \geq 2$, it is convenient to act on its left-hand side with $D_{P_i}$, which gives
\begin{align}
&\sum_{A,B} \big(W,A\shuffle \alpha(B) \big) \big[ (-P_A\circ P_i + P_A P_i) \widetilde{P}_{B} + P_A (P_i {\,\widetilde\circ\,} \widetilde{P}_B - P_i \widetilde{P}_B) \big] \notag \\
&= \sum_{A,B} \big(W,A\shuffle \alpha(B) \big) \big[  P_A (P_i {\,\widetilde\circ\,} \widetilde{P}_B )
- (P_A\circ P_i ) \widetilde{P}_{B}\big] \notag \\
&= \sum_{A,B} \big(W,A\shuffle \alpha(B) \big) \big[  P_A \widetilde{P}_{iB} 
- P_{Ai} \widetilde{P}_{B}\big]
\, ,
\label{eq:PhiInv-1}
\end{align}
using Lemma \ref{lem:drinexp} in passing to the last line. Now we recall (see \eqref{lie:shuffle} in Appendix \ref{app:Lie})
\begin{equation}\label{eq:lemmashuffle}
Aa \shuffle Bb = (A \shuffle Bb)a + (Aa \shuffle B)b
\end{equation}
so that
\beq
\sum_{A,B} \big(W i ,A\shuffle \alpha(B) \big) P_A \widetilde{P}_B = \sum_{A,B} \big(W ,A\shuffle \alpha(B) \big) \left[ P_{Ai} \widetilde{P}_B - P_A \widetilde{P}_{iB} \right] \, .
\eeq
Comparing with \eqref{eq:PhiInv-1}, 
we find
\beq
D_{P_i}\, \sum_{A,B} \big(W,A\shuffle \alpha(B) \big) P_A \widetilde{P}_{B} = - \sum_{A,B}  \big(W i ,A\shuffle \alpha(B)\big) P_A \widetilde{P}_B
\eeq
such that \eqref{eq:AcircB} follows by induction in the length of the word $W$.
\end{proof}

\begin{remark}
Equivalently, Lemma \ref{lem:PhiInv} follows from the identity $\Phi^\dR(e_0,e_1)^{-1} = \Phi^\dR (e_1,e_0)$, see  \eqref{eq:PhiPhi1}. For the coefficient of $\phi^{-1}(f_a)$, this implies that $P_a(e_1,e_0) = - P_a(e_0,e_1)$. Moreover, let $\tau$ denote the swap $e_0 \leftrightarrow e_1$. By the definition of the Ihara derivation, (\ref{eq:De0De1}),
\beq
D_y x + \tau \left( D_{\tau y} \tau x\right) = [x,y] \, .
\eeq
This implies that,
\beq
\tau (x \circ y) = (\tau y) {\,\widetilde\circ\,} (\tau x) \, .
\eeq
So, using $\tau P_a = - P_a$, we have $\tau P_W = \widetilde P_{\alpha(W)}$, which again gives the Lemma.
\end{remark}

\begin{lemma}\label{lem:PhiXPhi}
Conjugating $e_1$ by $\Phi(e_0,e_1)$ gives the series
\begin{equation}
\Phi^{\dR}(e_0,e_1) \, e_1 \left( \Phi^{\dR}(e_0,e_1)\right)^{-1} = \sum_{W=ab\cdots c} \phi^{-1}\big(f_{\alpha(W)}^\dR\big) D_{a} D_b \cdots D_c e_1\, ,
\end{equation}
where $D_a = D_{P_a}$.
\end{lemma}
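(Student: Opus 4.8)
The plan is to prove the identity by expanding both the associator and its inverse in the $f$-alphabet, collapsing the resulting coefficient products via the shuffle relations of de Rham MZVs, and thereby reducing the statement to a single combinatorial identity in the free associative algebra that I then settle by induction on word length. Note first that the left-hand side $\Phi^\dR(e_0,e_1)\,e_1\,(\Phi^\dR(e_0,e_1))^{-1}$ is exactly the series $e_1'$ of the Ihara formula \eqref{eq:eprimephi}, so the Lemma is a closed form for $e_1'$ in terms of iterated Ihara derivations.

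First I would insert the expansion $\Phi^\dR=\sum_V \phi^{-1}(f^\dR_V)\,P_V$ of Lemma~\ref{lem:drinexp} and the inverse expansion $(\Phi^\dR)^{-1}=\sum_{V'}\phi^{-1}(f^\dR_{\alpha(V')})\,\widetilde P_{V'}$ of Lemma~\ref{lem:PhiInv}. Since $\phi^{-1}$ takes the shuffle product to the product of de Rham MZVs (equation \eqref{eq:fshuffle}), the product of the two coefficient factors collapses to $\phi^{-1}(f^\dR_{V\shuffle\alpha(V')})$, giving
\[
\Phi^\dR e_1 (\Phi^\dR)^{-1} = \sum_U \phi^{-1}(f^\dR_U)\,T_U, \qquad T_U := \sum_{V,V'}(-1)^{|V'|}\,(U, V\shuffle (V')^t)\,P_V\,e_1\,\widetilde{P}_{V'},
\]
where I used $f^\dR_{\alpha(V')}=(-1)^{|V'|}f^\dR_{(V')^t}$. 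After the substitution $W=U^t$ (so that $\phi^{-1}(f^\dR_{\alpha(W)})=(-1)^{|U|}\phi^{-1}(f^\dR_U)$ and $D_W=D_{U^t}$), matching the coefficient of each $\phi^{-1}(f^\dR_U)$ reduces the Lemma to the purely algebraic claim $T_U=(-1)^{|U|}D_{U^t}e_1$ for every word $U=u_1\cdots u_m$ in the odd letters, with $D_{U^t}=D_{P_{u_m}}\cdots D_{P_{u_1}}$. This reduction is legitimate because $\phi$ is an isomorphism onto the $f$-alphabet, so it suffices to match coefficients at the level of the formal $f^\dR_U$.

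The heart of the argument is then the recursion $T_{Ui}=-\,D_{P_i}T_U$, which I would establish as follows. Applying the derivation $D_{P_i}$ to $T_U$ and using the Leibniz rule on each product $P_V\,e_1\,\widetilde P_{V'}$, together with the three rearranged definitions $D_{P_i}P_V=P_VP_i-P_{Vi}$, $\ D_{P_i}\widetilde P_{V'}=\widetilde P_{iV'}-P_i\widetilde P_{V'}$ (these are just $\circ$ and $\widetilde\circ$ read off from Lemmas~\ref{lem:drinexp} and~\ref{lem:PhiInv}), and $D_{P_i}e_1=[e_1,P_i]$, I expect exactly four of the six resulting terms to cancel in pairs: the terms $\pm P_VP_i\,e_1\,\widetilde P_{V'}$ (from the concatenation part of $D_{P_i}P_V$ against the $-P_ie_1$ part of $D_{P_i}e_1$) and the terms $\pm P_V\,e_1\,P_i\widetilde P_{V'}$ (from the $e_1P_i$ part of $D_{P_i}e_1$ against the concatenation part of $D_{P_i}\widetilde P_{V'}$). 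What survives is
\[
D_{P_i}T_U = \sum_{V,V'}(-1)^{|V'|}(U, V\shuffle (V')^t)\big(P_V\,e_1\,\widetilde P_{iV'} - P_{Vi}\,e_1\,\widetilde P_{V'}\big).
\]
On the other side, I would expand $T_{Ui}$ directly and split the shuffle coefficient $(Ui, V\shuffle (V')^t)$ according to whether the final letter $i$ is contributed by $V$ (case $V=V_0 i$) or by $(V')^t$ (case $V'=iV_0'$), using the recursion $Aa\shuffle Bb=(A\shuffle Bb)a+(Aa\shuffle B)b$ from \eqref{eq:lemmashuffle}; this reproduces precisely the negative of the displayed expression, so $T_{Ui}=-D_{P_i}T_U$.

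Finally the induction closes cleanly: the base case $T_\emptyset=e_1$ is immediate, and assuming $T_U=(-1)^{|U|}D_{U^t}e_1$ gives $T_{Ui}=-D_{P_i}\big((-1)^{|U|}D_{U^t}e_1\big)=(-1)^{|Ui|}D_{P_i}D_{U^t}e_1=(-1)^{|Ui|}D_{(Ui)^t}e_1$, using $(Ui)^t=iU^t$. I expect the main obstacle to be the careful bookkeeping in the recursion step, namely tracking the signs $(-1)^{|V'|}$, verifying that exactly the four flanking terms cancel, and correctly pairing the two survivors with the two cases of the shuffle split. This is structurally parallel to the induction in the proof of Lemma~\ref{lem:PhiInv}, the genuinely new feature being the central factor $e_1$ and the cross terms it generates through $D_{P_i}e_1=[e_1,P_i]$.
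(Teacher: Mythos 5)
Your proposal is correct and follows essentially the same route as the paper: both reduce the Lemma, after collapsing the MZV coefficients by the shuffle property, to the combinatorial identity $\sum_{A,B}(W,A\shuffle\alpha(B))\,P_A e_1 \widetilde{P}_B = (-1)^{|W|}D_{c}\cdots D_b D_a e_1$ (your $T_U=(-1)^{|U|}D_{U^t}e_1$), and prove it by induction on word length using the Leibniz expansion $D_i(P_A e_1 \widetilde{P}_B) = -P_{Ai}e_1\widetilde{P}_B + P_A e_1\widetilde{P}_{iB}$ together with the shuffle recursion \eqref{eq:lemmashuffle}. Your write-up simply makes explicit the cancellation bookkeeping that the paper compresses into the phrase ``analogous to the proof of Lemma \ref{lem:PhiInv}.''
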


\begin{proof}
Lemma \ref{lem:PhiInv} implies that
\beq
\Phi^{\dR}(e_0,e_1) \, e_1 \big( \Phi^{\dR}(e_0,e_1)\big)^{-1} = \sum_W \phi^{-1}(f^\dR_W) \sum_{A,B} \big(W,A\shuffle \alpha(B) \big) P_A e_1 \widetilde{P}_B\, .
\eeq
So Lemma \ref{lem:PhiXPhi} then follows if we can show that (for a word $W = ab \cdots c \in {\cal W}$)
\beq\label{eq:AcircBD}
\sum_{A,B} \big(W,A\shuffle \alpha(B) \big) P_A e_1 \widetilde{P}_B = (-1)^{|W|} D_{c}\cdots D_b D_a e_1\, .
\eeq
But clearly \eqref{eq:AcircBD} holds for $|W|=1$. Its validity at length $|W|\geq 2$ can be established by acting on the left-hand side of \eqref{eq:AcircBD} with $D_i=D_{P_i}$ which gives
\begin{align}
D_i \big( P_A e_1 \widetilde{P}_B \big) &= - (P_A P_i - D_i P_A) e_1 P_B + P_A e_1 (P_i \widetilde{P}_B + D_i \widetilde{P}_B) \notag \\
&= - P_{Ai} e_1 P_B + P_A e_1 \widetilde{P}_{iB}\, ,
\label{dipa1b}
\end{align}
where we recall that $D_i e_1 = [e_1,P_i]$. Applying the identity \eqref{eq:lemmashuffle} once again, \eqref{eq:AcircBD} then follows by induction, analogous to the proof of Lemma \ref{lem:PhiInv}.
\end{proof}

\subsection{Properties of the zeta generating series}
\label{sec:5.2}

The formal MZV generating series, $\MM$, defined in Section \ref{sec:2.3} is group-like.

\begin{lemma}\label{lem:Minv}
An inverse of $\MM$ in (\ref{defmser}) is given by the series
\begin{equation}
\MM^{-1} = \sum_{W} 
\phi^{-1} (f_{\alpha(W)}) M_{W} \, ,
\end{equation}
where we sum over all words $W$ in the odd numbers ($3,5,7,\ldots$).
\end{lemma}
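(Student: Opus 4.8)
The plan is to verify directly that $\MM \cdot \bigl(\sum_W \phi^{-1}(f_{\alpha(W)}) M_W\bigr) = 1$, and to recognize the vanishing of the off-diagonal coefficients as the antipode axiom of the shuffle Hopf algebra. Since $\MM = 1 + (\text{higher weight})$ is unipotent in the grading by weight, a one-sided concatenation-inverse is automatically two-sided, so it suffices to check one product. First I would expand the product and collect the coefficient of a fixed word $M_W$. Using $M_A M_B = M_{AB}$, this coefficient is $\sum_{W=AB} \phi^{-1}(f_A)\,\phi^{-1}(f_{\alpha(B)})$, where the sum runs over all factorizations $W = AB$ into (possibly empty) words in the odd letters. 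For $W=\emptyset$ this equals $1$, as desired; the content is to show it vanishes for every nonempty $W$.

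Next I would use that $\phi^{-1}$ converts shuffle products in the $f$-alphabet into ordinary products of MZVs, equation \eqref{eq:fshuffle}, so that $\phi^{-1}(f_A)\,\phi^{-1}(f_{\alpha(B)}) = \phi^{-1}(f_A \shuffle f_{\alpha(B)})$ and the coefficient of $M_W$ becomes $\phi^{-1}\bigl(\sum_{W=AB} f_A \shuffle f_{\alpha(B)}\bigr)$. The bracketed expression is exactly the image of $f_W$ under $m\circ(\mathrm{id}\otimes S)\circ \Delta_{\mathrm{dec}}$ in the shuffle Hopf algebra on words in $\{3,5,7,\ldots\}$, where $\Delta_{\mathrm{dec}}$ is deconcatenation, $m$ is the shuffle product, and the antipode is the signed reversal $S(f_B) = f_{\alpha(B)} = (-1)^{|B|} f_{B^t}$ with $\alpha$ as in \eqref{eq:antipodeA}. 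By the antipode axiom this equals the counit $\epsilon(f_W)$, which is $0$ for nonempty $W$, and applying the algebra homomorphism $\phi^{-1}$ gives $0$ as required.

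The only genuinely structural input is the identity $\sum_{W=AB} f_A \shuffle f_{\alpha(B)} = \epsilon(f_W)$, which I would either cite as the standard fact that the antipode of the shuffle Hopf algebra is the signed reversal (see e.g.\ Reutenauer), or prove by a short induction on $|W|$ using $Aa \shuffle Bb = (A\shuffle Bb)a + (Aa\shuffle B)b$ (equation \eqref{eq:lemmashuffle}), entirely parallel to the induction carried out in the proof of Lemma \ref{lem:PhiInv}. Conceptually this is just the statement that the group-like series $\MM$ (group-like for $\delta_\shuffle$, as recalled around \eqref{eq:Mcoaction}) has concatenation-inverse $\MM^{-1} = S(\MM)$, where the antipode of the completed Hopf algebra $(\,\QQ\langle M_3,M_5,\ldots\rangle,\ \text{concatenation},\ \delta_\shuffle\,)$ acts by $S(M_W) = (-1)^{|W|}M_{W^t} = M_{\alpha(W)}$; evaluating $S(\MM) = \sum_W \phi^{-1}(f_W) M_{\alpha(W)}$ and relabelling $W\leftrightarrow\alpha(W)$ reproduces the claimed series. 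I do not anticipate a serious obstacle here: the only place requiring care is the bookkeeping of signs and word-reversals under the involution $\alpha$, and, crucially, the change of alphabet and Ihara product that complicated the analogous Lemma \ref{lem:PhiInv} are absent, because the coefficients of $\MM$ multiply purely by the shuffle product of \eqref{eq:fshuffle}.
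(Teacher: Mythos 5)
Your proposal is correct and follows essentially the same route as the paper: expand the product coefficient-by-coefficient, use the shuffle-product property \eqref{eq:fshuffle} of $\phi^{-1}$ to combine the two factors into $\phi^{-1}$ of a sum of shuffles over deconcatenations, and kill that sum for nonempty words via the antipode identity of the shuffle Hopf algebra (the paper's equation \eqref{lie:antipode2}, i.e.\ exactly the signed-reversal antipode axiom you invoke). The only cosmetic differences are that you check the product in the opposite order and add the (valid but unneeded) remark that unipotence makes a one-sided inverse two-sided, whereas the paper simply cites both mirror antipode identities.
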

\begin{proof}
By the antipode identity (see equation (\ref{lie:antipode2}) in Appendix \ref{app:Lie})
\begin{equation}\label{eq:Minvlem}
\sum_{W=AB} \phi^{-1} (f_{\alpha(A)}) \phi^{-1} (f_B) = \sum_{W=AB} \phi^{-1} (f_{\alpha(A)\shuffle B}) = 0
\end{equation}
for $W\neq \emptyset$. It follows that $\MM^{-1}\MM = \MM\MM^{-1} = 1$. 
\end{proof}

It will be helpful to recall the following notation from Appendix \ref{app:Lie}:

\begin{definition}
\label{deftlbr}
Write
\beq
\ell[a,b,c,\ldots,d] = [ [\ldots [[a,b],c], \ldots ],d]
\label{totlbrk}
\eeq
for the total left-bracketing of a word $abc\ldots d$ in some symbols. 
\end{definition}
The total left-bracketing (\ref{totlbrk}) of a word, $aW$, can be expanded as
\begin{equation}\label{eq:leftbracketing}
\ell[a,W] = \sum_{A,B} \big(W,\alpha(A)\shuffle B \big) AaB
\end{equation}
for a letter, $a$, and word $W$.

\begin{lemma}\label{lem:MXM}
Let $X$ be a polynomial in the zeta generators $M_k$ or the braid generators $e_0, e_1$. Then conjugating $X$ by $\MM$ gives the series
\begin{equation}
\MM^{-1} X \MM = \sum_W \phi^{-1}(f_W) \, \ell[X, M_W] \, .
\end{equation}
\end{lemma}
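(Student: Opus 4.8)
The plan is to expand the conjugation directly from the definitions and then recognize the result through the left-bracketing identity \eqref{eq:leftbracketing}. First I would substitute the explicit inverse from Lemma \ref{lem:Minv} together with the definition \eqref{defmser} of $\MM$, giving
\beq
\MM^{-1} X \MM = \sum_{A,B} \phi^{-1}(f_{\alpha(A)})\, \phi^{-1}(f_B)\, M_A X M_B\, ,
\eeq
where the sums run over all words $A,B$ in the odd numbers and $\alpha$ is the antipode \eqref{eq:antipodeA}. Since the coefficients $\phi^{-1}(f_{\alpha(A)})$ and $\phi^{-1}(f_B)$ are MZVs, they are scalars that commute past the $M$-generators, and the shuffle-product property \eqref{eq:fshuffle} of $\phi$ (equivalently, the group-likeness of $\MM$) combines them into $\phi^{-1}(f_{\alpha(A)}\shuffle f_B)=\phi^{-1}(f_{\alpha(A)\shuffle B})$.

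Next I would expand $f_{\alpha(A)\shuffle B} = \sum_W (W,\alpha(A)\shuffle B)\, f_W$ and exchange the order of summation to collect the terms by the word $W$:
\beq
\MM^{-1} X \MM = \sum_W \phi^{-1}(f_W) \sum_{A,B} (W,\alpha(A)\shuffle B)\, M_A X M_B\, .
\eeq
The inner sum is precisely the right-hand side of the left-bracketing identity \eqref{eq:leftbracketing}, read in the alphabet of zeta generators with the single inserted letter taken to be $X$ and the word being the letters of $W$ promoted to the corresponding $M_a$. Hence $\sum_{A,B}(W,\alpha(A)\shuffle B)\, M_A X M_B = \ell[X,M_W]$, and substituting this identity yields the claimed formula. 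A quick sanity check at low coradical degree confirms the normalization: the $W=\emptyset$ term reproduces $X$ and the $W=(3)$ term reproduces $\phi^{-1}(f_3)[X,M_3]$, matching $\MM^{-1}X\MM = X + \phi^{-1}(f_3)[X,M_3]+\cdots$.

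The one point requiring care — and the only place the argument is not a pure rewriting of definitions — is the justification that \eqref{eq:leftbracketing}, stated for a single letter, remains valid with the \emph{polynomial} $X$ occupying that slot. This holds because both sides of \eqref{eq:leftbracketing} are homogeneous of degree one in the inserted letter: the nested commutator $\ell[a,W]$ contains $a$ exactly once, and each monomial $AaB$ on the right contains $a$ exactly once. One may therefore regard \eqref{eq:leftbracketing} as an identity in the free associative algebra obtained by adjoining one extra free generator $a$ to the ambient algebra; the substitution $a \mapsto X$, fixing the $M_k$, is the image under an algebra homomorphism and so preserves the identity. Crucially, no commutation relations between $X$ and the $M_k$ are invoked, which is why the conclusion holds verbatim whether $X$ is a polynomial in the zeta generators $M_k$ or in the braid generators $e_0,e_1$. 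All manipulations take place at the level of formal series, with only finitely many terms contributing at each fixed word length, so convergence is not an issue.
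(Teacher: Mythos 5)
Your proof is correct and follows essentially the same route as the paper's: expand $\MM^{-1}X\MM$ using Lemma \ref{lem:Minv}, combine the MZV coefficients via the shuffle product \eqref{eq:fshuffle}, and identify the inner sum with the left-bracketing identity \eqref{eq:leftbracketing}. The only difference is that you spell out the exchange of summations and explicitly justify applying \eqref{eq:leftbracketing} with the polynomial $X$ in the letter slot (via linearity in the inserted letter and the universal property of the free algebra) --- details the paper's terse proof leaves implicit, and your justification is sound.
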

\begin{proof}
We again use equation \eqref{eq:Minvlem} from the previous Lemma to write
\beq
\MM^{-1} X \MM = \sum_{A,B} \phi^{-1} (f_{\alpha(A)\shuffle B})  \, M_A X M_B\, .
\eeq
But by the expansion of the left-bracketing, \eqref{eq:leftbracketing}, this gives the Lemma.
\end{proof}

\subsection{Towards proving Theorem \ref{thm:main}: the single-variable case} 
\label{sec:5.3}

We now use the results of Sections \ref{sec:5.1} and \ref{sec:5.2} to prove our coaction formula (\ref{thm2.1eq}) for the special case of MPLs in one variable, which is a crucial step towards the inductive proof of Theorem \ref{thm:main}.

\begin{lemma}\label{lem:conjugation_one}

Define an adjoint action of the $M_a$ generators on the braid generators by
\beq \label{eq:e1Ma}
[e_0,M_a]=0  \, , \qquad
[e_1,M_a] = - [e_1,P_a]\, , 
\eeq
where we recall that $P_a = \left.\Phi(e_0,e_1)\right|_{\phi^{-1}(f_a)}$ are the Lie polynomials in ${\rm Lie}\langle e_0,e_1\rangle$ appearing in the expansion (\ref{expphimdr}) of the Drinfeld associator. Then, the following two conjugations of $e_1$ are identical:
\begin{equation}
\left( \MM^\dR\right)^{-1}e_1\MM^\dR = \Phi^{\dR}(e_0,e_1) \,e_1 \big( \Phi^{\dR}(e_0,e_1)\big)^{-1}\, .
\label{conjMphi}
\end{equation}
\end{lemma}
\begin{proof}
By Lemma \ref{lem:MXM}, the left-hand side has the following expansion in MZVs,
\begin{equation}
\left( \MM^\dR\right)^{-1}e_1\MM^\dR = \sum_{W=ab\cdots c} \phi^{-1}\big(f_W^\dR\big) \ell[e_1,M_a,M_b,\ldots,M_c]\, ,
\end{equation}
where $\ell[e_1,M_a,M_b,\ldots] = [\ldots[[e_1,M_a],M_b],\ldots]$ is the total left bracketing. By swapping all the brackets (i.e.\ $[e_1, M_a] = - [M_a, e_1]$, etc.), we can express this instead in terms of right bracketings:
\begin{equation}\label{proof:conj:MM}
\left( \MM^\dR\right)^{-1}e_1\MM^\dR = \sum_{W=ab\cdots c} (-1)^{|W|}\, \phi^{-1}\big(f_W^\dR\big)  [M_c, [\cdots , [M_b, [M_a, e_1]]\cdots ]]
\end{equation}
On the other hand, the right-hand side has the following expansion in MZVs, by Lemma \ref{lem:PhiXPhi},
\begin{equation}\label{proof:conj:DD}
\Phi^{\dR}(e_0,e_1) \, e_1 \big( \Phi^{\dR}(e_0,e_1)\big)^{-1} = \sum_{W=ab\cdots c} (-1)^{|W|}\, \phi^{-1}\big(f_{W}^\dR\big) D_{P_c}  \cdots D_{P_b} D_{P_a} e_1\, ,
\end{equation}
where the derivations $D_{P_i}$ are defined by \eqref{eq:De0De1}. The definition of the adjoint action, \eqref{eq:e1Ma}, then implies that the coefficients of any given MZV, $\phi^{-1}(f_W^\dR)$, are equal in the two series. Indeed one can deduce $ [M_c, [\cdots , [M_b, [M_a, e_1]]\cdots ]] = D_{P_c}  \cdots D_{P_b} D_{P_a} e_1$ from
\beq\label{proof:conj:MD}
[M_a, e_1 ]  = D_{P_a} e_1 = [e_1, P_a] \, ,\qquad\text{and} \qquad [M_a, e_0] = D_{P_a} e_0 = 0\, .
\eeq
But both $[M_a,\cdot]$ and $D_{P_a}$ are derivations on the free Lie algebra generated by $e_0, e_1$.\footnote{For any $[x,y]$ in the free Lie algebra, $D_P [x,y] = [D_P x, y] + [x, D_P y]$ and $[M_a,[x,y]] = [[M_a,x],y] + [x,[M_a,y]]$, by the Jacobi identity among nested brackets.} So \eqref{proof:conj:MD} implies that $[M_a, P] = D_{P_a} P$ for any Lie polynomial, $P$. In particular, the nested bracketings of $[M_a,\cdot]$ appearing in the series \eqref{proof:conj:MM} are equal to the nested applications of $D_{P_a}$ appearing in the series \eqref{proof:conj:DD}.
\end{proof}

\begin{prop}
\label{prop:1thm}
Recall the adjoint action of the $M_a$ generators on $e_0,e_1$ in (\ref{eq:e1Ma}). Then,
\begin{equation}
\Delta \GG^\mot[e_0,e_1;z] = \left(\MM^{\dR}\right)^{-1} \GG^\mot[e_0,e_1;z] \,\MM^\dR \GG^\dR[e_0,e_1;z] \, .
\end{equation}
\end{prop}
\begin{proof}
Since the zeta generators $M_a$ commute with $e_0$,
\begin{equation}\label{eq:1thmproof1}
\left(\MM^{\dR}\right)^{-1} \GG^\mot [e_0,e_1;z] \,\MM^\dR = \GG^\mot \Big[e_0,\left( \MM^\dR\right)^{-1}e_1\MM^\dR;z\Big] \, .
\end{equation}
Moreover, we recall the Ihara formula for $n=1$ (equation \eqref{eq:1ihara} in Section \ref{sec:iharaMZV})
\begin{equation}\label{eq:1thmproof2}
\Delta \GG^\mot[e_0,e_1;z] = \GG^\mot[e_0,e_1';z]\GG^\dR[e_0,e_1;z] \, ,
\end{equation}
where
\begin{equation}
e_1' = \Phi^\dR(e_0,e_1)\,e_1\,\Phi^\dR(e_0,e_1)^{-1} \, .
\end{equation}
Comparing \eqref{eq:1thmproof2} with \eqref{eq:1thmproof1}, we see that the Proposition follows as a consequence of the Ihara formula and the statement (\ref{conjMphi}) of Lemma \ref{lem:conjugation_one}.
\end{proof}

\begin{remark} 
The computation in Lemma \ref{lem:conjugation_one} for length-one words $W=a$ could have been used to \emph{discover} the definition of the adjoint action, (\ref{eq:e1Ma}), that is needed for Proposition \ref{prop:1thm} to hold. Hence, the non-trivial achievement in this section is to demonstrate that the choice of $[e_i,M_a]$ in (\ref{eq:e1Ma}) dictated by $|W|=1$ is sufficient for (\ref{conjMphi}) to hold at arbitrary $|W|\geq 2$.
\end{remark}

Indeed, the Ihara formula, \eqref{eq:1thmproof2}, plays \emph{two} roles in the proof of Proposition \ref{prop:1thm}. The first role is to see directly that Lemma \ref{lem:conjugation_one} above implies the Proposition by using the Ihara formula for the coaction of MPLs. The second role is more indirect: we use the Ihara formula for the coaction of MZVs in Section \ref{sec:5} to prove the formula for the Drinfeld associator in Lemma \ref{lem:drinexp}. This is what leads, via Lemma \ref{lem:PhiXPhi}, to the above Lemma \ref{lem:conjugation_one}. 

\begin{remark}
\label{imprmk}
The Lie polynomials $P_a = \left.\Phi(e_0,e_1)\right|_{\phi^{-1}(f_a)}$  of degrees $a\geq 11$ depend on the choice of the $f$-alphabet isomorphism $\phi$. The definition of the adjoint action of the zeta generators, $[e_1,M_a] = - [e_1,P_a]$ (equation \eqref{eq:e1Ma}, above), therefore depends on the choice of isomorphism $\phi$. The generating series, $\mathbb M$, exhibits a compensating dependence on the choice of isomorphism via the MZV coefficients $\phi^{-1}(f_W)$ in its expansion (\ref{defmser}). By the joint effort of both $\phi$-dependences, the conjugation series $(\MM^{\dR})^{-1} \GG^\mot [e_0,e_1;z] \,\MM^\dR$ is independent on the choice of $f$-alphabet isomorphism, and Proposition \ref{prop:1thm} holds for any choice of isomorphism~$\phi$ satisfying the defining properties, (\ref{normf}) to (\ref{eq:fdelta}), reviewed in Section \ref{sec:2.3}.
\end{remark}

\section{The coaction formula in the multi-variable case}
\label{sec:6}

The goal of this section is to prove Lemma \ref{lem:DIcon} (equation (\ref{keyeq}) in Section \ref{sec:3.1}), which implies the new coaction formula, Theorem \ref{thm:main}, for MPLs of multiple variables (Theorem \ref{thm:coactionB}, below). Lemma \ref{lem:DIcon} rewrites the action of the zeta generators $(\MM^\dR)^{-1} e_{1,\ell} \MM^\dR$ as a conjugation of the braid generators $e_{1,\ell}$ in $\GG^\mot_1$ by the series $\GG^\dR_n\cdots\GG^\dR_2 Z^\dR_\ell$, for each of $\ell=2,3,\ldots,n{+}1$. For this purpose, we need to recall the linear combinations of braid generators, $E_0^{(r)}$ and $E_1^{(r)}$, defined in \eqref{eq:E0E1}, and derived in Lemma \ref{lem:bigphiprod} from the action of the braid group on MPLs:
\begin{align}
E_0^{(2a-1)} &= e_{1,0} + \sum_{ i=2}^a e_{1,i} \, , & E_0^{(2a)} & = e_{1,a+1} \, ,\label{eq:E0E1-repeat} \\
E_1^{(2a-1)} &= e_{1,a+1} \, , & E_1^{(2a)} &= e_{0,a+1}+\sum_{i=2}^a e_{i,a+1} \, . \notag  
\end{align}
In terms of these $E_0^{(r)}$ and $E_1^{(r)}$, Lemma \ref{lem:PhiXPhi}, above, is a formula for conjugating $E_1^{(1)}$ with the Drinfeld associator $\Phi^\dR(E_0^{(1)},E_1^{(1)})$. Lemma \ref{lem:DIcon} is a generalisation of Lemma \ref{lem:PhiXPhi} to all $n\geq 1$. 

\subsection{Products of Drinfeld associators}
\label{sec:6.1}

For any Lie polynomial $P$ in the free Lie algebra $\mathfrak{g} = {\rm Lie}\left<e_0,e_1\right>$, let us write $P^{(r)}$ for the Lie polynomial affected by the replacements $e_0 \rightarrow E_0^{(r)}$ and $e_1 \rightarrow E_1^{(r)}$. Treating the $E_0^{(r)}$ and $E_1^{(r)}$ as formal variables, they themselves define a (graded) free Lie algebra
\beq
\mathfrak{G}_n = \text{Lie}\left<\left.E_0^{(r)},E_1^{(r)}\right| 1\leq r \leq 2n{-}1\right> \, .
\eeq
By analogy with the definition of Ihara derivations in Section \ref{sec:5.1}, we can define derivations also on $\mathfrak{G}_n$. For some $r$ and a Lie polynomial $P \in \mathfrak{g}$, define the generalised Ihara derivation~as
\beq \label{eq:genihara1}
D_{P^{(r)}} E_0^{(r)} = 0 \, , \qquad D_{P^{(r)}} E_1^{(r)} = [E_1^{(r)}, P^{(r)}] \, .
\eeq
The action of $D_{P^{(r)}}$ can be extended to a Lie derivation on the whole of $\mathfrak{G}_n$. For $r<s$, and two Lie polynomials $P,Q \in \mathfrak{g}$, define 
\beq \label{eq:genihara2}
D_{P^{(r)}} Q^{(s)} = [ Q^{(s)} , P^{(r)}]\, , \qquad D_{P^{(s)}} Q^{(r)} = 0 \, .
\eeq

\begin{lemma}\label{lem:commute}
For distinct $r\neq s$, and any Lie polynomials $P,Q \in \mathfrak{g}$, the generalised Ihara derivations commute on $\mathfrak{G}_n$:
\beq
[D_{P^{(r)}}, D_{Q^{(s)}}] = 0\, .
\eeq
\end{lemma}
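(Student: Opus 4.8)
The plan is to reduce everything to a check on generators. Since the commutator of two derivations of a Lie algebra is again a derivation, $[D_{P^{(r)}},D_{Q^{(s)}}]$ is a derivation of $\mathfrak{G}_n$; and because $\mathfrak{G}_n$ is free on the symbols $E_0^{(t)},E_1^{(t)}$, such a derivation vanishes as soon as it vanishes on each of these generators. So it suffices to show $D_{P^{(r)}}D_{Q^{(s)}}E_\epsilon^{(t)}=D_{Q^{(s)}}D_{P^{(r)}}E_\epsilon^{(t)}$ for every $\epsilon\in\{0,1\}$ and every $t$. As the statement is antisymmetric under exchanging the two derivations, I would assume $r<s$ throughout.

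Before the case analysis I would isolate two structural consequences of the defining relations (\ref{eq:genihara1})--(\ref{eq:genihara2}). First, a \emph{higher-index} derivation annihilates every generator of strictly lower index, so $D_{Q^{(s)}}$ kills all copy-$t$ generators with $t<s$ and in particular annihilates any copy-$r$ polynomial $P^{(r)}$ with $r<s$; this is the asymmetry that decouples the two copies. Second, restricted to its own copy the generalised Ihara derivation reproduces the ordinary Ihara derivation of Section \ref{sec:5.1} under $e_0\mapsto E_0^{(s)}$, $e_1\mapsto E_1^{(s)}$, giving $D_{Q^{(s)}}P^{(s)}=(D_QP)^{(s)}$, while on any \emph{other} copy $t\neq r$ the derivation $D_{P^{(r)}}$ either annihilates everything ($t<r$) or agrees with a single inner derivation built from $P^{(r)}$ ($t>r$), so that it can be pulled through brackets of copy-$t$ elements by the Leibniz rule.

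With these in hand I would run the case analysis on the location of $t$ relative to $r<s$: namely $t<r$, $t=r$, $r<t<s$, $t=s$, and $t>s$. In the three ``separated'' cases $t<r$, $t=r$, $r<t<s$, one of the two composites is manifestly zero, because the relevant intermediate image lands in a copy that the outer derivation annihilates by the first structural fact, and the Leibniz rule then forces the other composite to vanish as well. The crux is the two ``overlapping'' cases $t=s$ and $t>s$, where both derivations act nontrivially: here I would expand the inner derivation by Leibniz, discard the term in which $D_{Q^{(s)}}$ meets the lower-copy polynomial $P^{(r)}$ (zero by the annihilation fact), and then recognise the surviving term as the opposite composite---immediately for $t=s$ from the inner-derivation property, and after a single application of the Jacobi identity for $t>s$.

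I expect the main obstacle to be organisational rather than conceptual: one must track carefully which copy each intermediate expression inhabits after a derivation has been applied, and invoke the self-copy compatibility $D_{Q^{(s)}}P^{(s)}=(D_QP)^{(s)}$ together with the sign conventions of (\ref{eq:De0De1}) consistently. The conceptual heart is light by comparison---it is precisely the ``higher index annihilates lower index'' rule that removes the would-be obstructing terms, after which every case collapses to a single Jacobi (or Leibniz) manipulation.
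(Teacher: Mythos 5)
Your proposal is correct and follows essentially the same route as the paper's proof: reduce everything to single copies via the derivation property, then run a case analysis on the position of the copy index relative to $r<s$, with the separated cases killed by the ``higher index annihilates lower index'' rule and the overlapping cases ($t=s$ and $t>s$) handled by Leibniz plus the vanishing of $D_{Q^{(s)}}P^{(r)}$ and a single Jacobi identity. If anything, your case list is slightly more complete than the paper's, whose four explicit checks (on $R^{(q)}$, $R^{(r)}$, $R^{(s)}$, $R^{(t)}$ with $q<r<s<t$) leave the intermediate case $r<t<s$ implicit, whereas you dispose of it by the same annihilation-plus-Leibniz argument.
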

\begin{proof}
Fix some $q,r,s,t$ with $q<r<s<t$, and some Lie polynomials $P,Q,R \in \mathfrak{g}$. Then clearly
\beq
[D_{P^{(r)}}, D_{Q^{(s)}}] R^{(q)} = 0 \, .
\eeq
Also,
\beq
[D_{P^{(r)}}, D_{Q^{(s)}}] R^{(r)} = -D_{Q^{(s)}} \left(D_{P^{(r)}}R^{(r)}\right) =  0
\eeq
because $D_{P^{(r)}}R^{(r)}$ is a Lie polynomial in $E_0^{(r)},E_1^{(r)}$, and $s>r$. Similarly,
\beq
[D_{P^{(r)}}, D_{Q^{(s)}}] R^{(s)} = [D_{Q^{(s)}}R^{(s)},P^{(r)}] - [D_{Q^{(s)}}R^{(s)},P^{(r)}] =  0\, .
\eeq
Finally,
\beq
[D_{P^{(r)}}, D_{Q^{(s)}}] R^{(t)} = [[R^{(t)},P^{(r)}],Q^{(s)}] + [[P^{(r)},Q^{(s)}],R^{(t)}] + [[Q^{(s)},R^{(t)}],P^{(r)}] = 0\, ,
\eeq
which vanishes by the Jacobi identity. Having checked these 4 cases, we see that the Lemma follows, using the derivation property, for all Lie polynomials in $\mathfrak{G}_n$.
\end{proof}

Write $\Phi^{(r)}$ for $\Phi^\dR(E_0^{(r)}, E_1^{(r)})$ (as in Section \ref{sec:4.3}). Then we are interested in the products
\beq
\Phi^{\{ \ell \}} = 
\Phi^{(1)} \Phi^{(2)}\ldots \Phi^{(2\ell-3)}\, ,
\label{1to2ell3}
\eeq
and we find that

\begin{lemma}\label{lem:phiellX}
For $\ell \geq 1$, the conjugation by the series $\Phi^{\{ \ell \}} $ admits the following expansion in terms of MZVs
\beq
\Phi^{\{ \ell \}} E_1^{(2\ell-3)} \left(\Phi^{\{ \ell \}}\right)^{-1} = \sum_{W=ab\ldots c} \phi^{-1}(f_{\alpha(W)}^\dR) D^{ \{ \ell \} }_a D^{ \{ \ell \} }_b \ldots D^{ \{ \ell \} }_c \, E_1^{(2\ell-3)}\, ,
\eeq
where
\beq
D^{ \{ \ell \} }_a = \sum_{r=1}^{2\ell-3} D_{P_a^{(r)}}\, .
\label{defdda}
\eeq
Here, the sum is over words $W$ in the odd numbers ($3,5,7,\ldots$) and $P^{(r)}_a \in \mathfrak{g}$ are the Lie polynomials defined by the Drinfeld associator with arguments $E_0^{(r)},E_1^{(r)}$ (Lemma \ref{lem:liecoef}). Recall that $E_1^{(2\ell-3)} = e_{1,\ell}$. 
\end{lemma}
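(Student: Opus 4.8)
The plan is to prove the identity by a descending induction that peels the Drinfeld associators off the product $\Phi^{\{\ell\}}=\Phi^{(1)}\Phi^{(2)}\cdots\Phi^{(2\ell-3)}$ one at a time, starting from the innermost factor $\Phi^{(2\ell-3)}$ — the one whose variables $E_0^{(2\ell-3)},E_1^{(2\ell-3)}$ coincide with those of the conjugated letter $E_1^{(2\ell-3)}$. Writing $N=2\ell-3$ and $G_m=\Phi^{(m)}\Phi^{(m+1)}\cdots\Phi^{(N)}$, I would prove by downward induction on $m=N,N-1,\dots,1$ that
\[
G_m\,E_1^{(N)}\,G_m^{-1}=\sum_{W=ab\cdots c}\phi^{-1}\big(f_{\alpha(W)}^\dR\big)\,D_a^{[m]}D_b^{[m]}\cdots D_c^{[m]}\,E_1^{(N)},\qquad D_a^{[m]}=\sum_{r=m}^{N}D_{P_a^{(r)}}.
\]
The base case $m=N$ is exactly Lemma \ref{lem:PhiXPhi} read in the single copy $r=N$ (with $e_0\to E_0^{(N)}$, $e_1\to E_1^{(N)}$), and the case $m=1$ is the assertion of the Lemma, since $G_1=\Phi^{\{\ell\}}$ and $D_a^{[1]}=D_a^{\{\ell\}}$.

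The engine of the inductive step is a mild strengthening of Lemma \ref{lem:PhiXPhi}: for \emph{any} element $X$ built solely from the generators $E_0^{(s)},E_1^{(s)}$ with $s>m$, I claim
\[
\Phi^{(m)}\,X\,(\Phi^{(m)})^{-1}=\sum_{W=ab\cdots c}\phi^{-1}\big(f_{\alpha(W)}^\dR\big)\,D_{P_a^{(m)}}D_{P_b^{(m)}}\cdots D_{P_c^{(m)}}\,X.
\]
This is proved by repeating verbatim the argument behind Lemma \ref{lem:PhiXPhi}: expand $(\Phi^{(m)})^{-1}$ by Lemma \ref{lem:PhiInv} and run the word-length induction behind the combinatorial identity \eqref{eq:AcircBD}. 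The only input that needs checking is that on such higher-copy $X$ the generalised Ihara derivation acts by the adjoint $D_{P_i^{(m)}}X=[X,P_i^{(m)}]$ — exactly the structural role that $D_ie_1=[e_1,P_i]$ plays in the original proof — which is precisely what the cross-term rule \eqref{eq:genihara2} is designed to guarantee, and is the content used in the proof of Lemma \ref{lem:commute}; the shuffle/deshuffle bookkeeping is then identical.

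In the inductive step I would write $G_m=\Phi^{(m)}G_{m+1}$ and feed the inductive hypothesis, whose right-hand side lives entirely in copies $>m$, into the strengthened formula. This yields a double sum over a copy-$m$ derivation word $V$ and a derivation word $W$ in copies $>m$, with coefficient $\phi^{-1}(f_{\alpha(V)}^\dR)\,\phi^{-1}(f_{\alpha(W)}^\dR)$. Two facts collapse this into the single total-derivation series at level $m$. First, Lemma \ref{lem:commute} says every copy-$m$ derivation commutes with every higher-copy one, so inside a word $U$ of $D^{[m]}=D^{(m)}+D^{[m+1]}$ the copy-$m$ factors and the factors from copies $>m$ may be reshuffled freely. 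Second, the coefficients obey $\sum_{U\in V\shuffle W}\phi^{-1}(f_{\alpha(U)}^\dR)=\phi^{-1}(f_{\alpha(V)}^\dR)\,\phi^{-1}(f_{\alpha(W)}^\dR)$, which follows from the shuffle property \eqref{eq:fshuffle} once one notes that $\alpha$ distributes over the shuffle (reversal of words distributes over the shuffle, and the sign $(-1)^{|U|}$ is multiplicative in $|U|=|V|+|W|$). Matching the two expansions term by term closes the induction.

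The main obstacle is the strengthened conjugation formula, i.e.\ establishing that conjugation by $\Phi^{(m)}$ of a higher-copy element is reproduced \emph{exactly} by the iterated copy-$m$ generalised Ihara derivations, rather than by the naive adjoint action that would also move the $E_0^{(m)},E_1^{(m)}$ appearing in $\Phi^{(m)}$. This is where the precise cross-term convention \eqref{eq:genihara2}, and the commutativity it produces (Lemma \ref{lem:commute}), are indispensable: already at first order in the MZVs the single-letter check pins the cross-term down to the adjoint $[\,\cdot\,,P_i^{(m)}]$, and any other choice would fail to match the honest conjugation. Once this is in place, the assembly via commuting derivations and the shuffle identity is routine bookkeeping.
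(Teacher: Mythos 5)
Your proof is correct and takes essentially the same route as the paper: both arguments rest on the same three ingredients --- Lemma \ref{lem:PhiXPhi} rerun with $e_1$ replaced by a higher-copy Lie series $X$ (justified by the cross-term rule \eqref{eq:genihara2}, i.e.\ $D_{P_i^{(m)}}X=[X,P_i^{(m)}]$), the commutativity of cross-copy derivations from Lemma \ref{lem:commute}, and the shuffle property of the MZV coefficients --- assembled by induction on the number of associator factors. The only cosmetic difference is that the paper packages the assembly as a two-copy combination identity \eqref{eq:ellproof3} applied repeatedly, whereas you organize it as an explicit downward induction on $G_m=\Phi^{(m)}\cdots\Phi^{(2\ell-3)}$ and spell out the coefficient identity $\sum_{U\in V\shuffle W}\phi^{-1}(f^\dR_{\alpha(U)})=\phi^{-1}(f^\dR_{\alpha(V)})\,\phi^{-1}(f^\dR_{\alpha(W)})$ that the paper leaves implicit.
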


\begin{proof}
First, by the definition of the generalised Ihara derivations, \eqref{eq:genihara1}, Lemma \ref{lem:PhiXPhi} implies that
\beq\label{eq:ellproof1}
\Phi^{(L)} E_1^{(L)} \left(\Phi^{(L)}\right)^{-1} = \sum_{W=ab\ldots c} \phi^{-1}(f^\dR_{\alpha(W)}) D_{P_a^{(L)}} D_{P_b^{(L)}}\ldots D_{P_c^{(L)}} E_1^{(L)}\, ,
\eeq
where we write $L = 2\ell{-}3$. 

Second, fix some $r$. Let $X$ be some Lie series in the $E_0^{(s)}, E_1^{(s)}$ for all $s>r$. By the definition of the generalised Ihara derivations, \eqref{eq:genihara2}, note that
\beq
D_{P_a^{(r)}} X = [X,P_a^{(r)}] \, .
\eeq
Then, by the proof of Lemma \ref{lem:PhiXPhi} (i.e.\ with $e_1$ replaced by $X$, etc.), it follows that
\beq\label{eq:ellproof2}
\Phi^{(r)} X  (\Phi^{(r)} )^{-1} = \sum_{W=ab\ldots c} \phi^{-1} (f^\dR_{\alpha(W)} ) D_{P_a^{(r)}} D_{P_b^{(r)}}\ldots D_{P_c^{(r)}} X\, .
\eeq

Third, fix some $q,r$ with $q<r$, and take $X$ as above. Then by two applications of \eqref{eq:ellproof2}, and using the shuffle property of MZVs,
\begin{align}
\Phi^{(q)}\Phi^{(r)} X \left(\Phi^{(q)}\Phi^{(r)}\right)^{-1} &= \sum_{W} \phi^{-1}\left(f^\dR_{\alpha(W)} \right) \sum_{\substack{A = a\ldots b \\ B = c\ldots d}}(W,A\shuffle B)  \\
&\quad 
\times D_{P_a^{(q)}}\ldots D_{P_b^{(q)}} D_{P_c^{(r)}}\ldots D_{P_d^{(r)}} X \, .\notag
\end{align}
Note that in the sum over shuffles, every letter of $W$ appears precisely once in either $A$ or~$B$. Here we can apply Lemma \ref{lem:commute}: the $D_{P_a^{(q)}}$ do not commute among themselves, but we are free to commute each of the $D_{P_c^{(r)}}$ with each of the $D_{P_a^{(q)}}$. It follows that
\beq\label{eq:ellproof3}
\Phi^{(q)}\Phi^{(r)} X \left(\Phi^{(q)}\Phi^{(r)}\right)^{-1} = \! \sum_{W=a\ldots c}  \! \phi^{-1} (f^\dR_{\alpha(W)}) \left( D_{P_a^{(q)}} {+} D_{P_a^{(r)}} \right) \ldots \left( D_{P_c^{(q)}} {+} D_{P_c^{(r)}} \right) X\, .
\eeq
Combined with \eqref{eq:ellproof1}, repeated applications of \eqref{eq:ellproof3} gives the Lemma.
\end{proof}

\subsection{Proof of Theorem \ref{thm:main}} 
\label{sec:6.2}
The proof for our coaction formula, Theorem \ref{thm:main}, now follows from the foregoing Lemmas in Section \ref{sec:6.1} and Section \ref{sec:5.2}. First, we can define an adjoint action of the zeta generators $M_a$ (for $a = 3,5,7,\ldots$) by
\begin{align}
[E_0^{(\ell)}, M_a]  = 0\, , \qquad
[E_1^{(\ell)}, M_a]  = -\left[E_1^{(\ell)}, \sum_{r=1}^{2\ell-3} P_a^{(r)}\right] \, ,
\label{eq:adaction} 
\end{align}
where $E_0^{(1)} = e_{1,0}$ and $E_1^{(2\ell-3)} = e_{1,\ell}$. This action of zeta generators can also be obtained by specializing the more general results of \cite{GTLeila} to specific choices of point numbering, pants decomposition and elements of the Grothendieck-Teichm\"uller group in the terminology of the reference. In our case, the choice in (\ref{eq:adaction}) is inspired by Lemma \ref{lem:phiellX}, and implies the following identity:

\begin{lemma}\label{lem:proof3}
With the adjoint action defined above and $\Phi^{\{ \ell \}}$ defined by (\ref{1to2ell3}),
\begin{equation}
\Phi^{\{ \ell \}} E_1^{(2\ell-3)} \left(\Phi^{\{ \ell \}}\right)^{-1} = \left(\MM^\dR\right)^{-1} E_1^{(2\ell-3)} \MM^\dR\, .
\end{equation}
\end{lemma}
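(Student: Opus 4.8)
The plan is to mirror the proof of Lemma~\ref{lem:conjugation_one}, replacing the single Ihara derivation by the generalised ones of Section~\ref{sec:6.1}. First I would expand both sides of the claimed identity in the $f$-alphabet. By Lemma~\ref{lem:phiellX}, the left-hand side is
\[
\Phi^{\{ \ell \}} E_1^{(2\ell-3)} \left(\Phi^{\{ \ell \}}\right)^{-1} = \sum_{W=ab\cdots c} \phi^{-1}\big(f_{\alpha(W)}^\dR\big)\, D^{\{ \ell \}}_a D^{\{ \ell \}}_b \cdots D^{\{ \ell \}}_c \, E_1^{(2\ell-3)}\, ,
\]
while by Lemma~\ref{lem:MXM}, applied with $X = E_1^{(2\ell-3)}$ and the adjoint action \eqref{eq:adaction}, the right-hand side is
\[
\left(\MM^\dR\right)^{-1} E_1^{(2\ell-3)} \MM^\dR = \sum_{W=ab\cdots c} \phi^{-1}\big(f_W^\dR\big)\, \ell\big[E_1^{(2\ell-3)}, M_a, M_b, \ldots, M_c\big]\, .
\]
So it suffices to match the coefficient of each $\phi^{-1}(f_W^\dR)$ in the two series.

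Second, exactly as in \eqref{proof:conj:MM}, I would rewrite the total left-bracketing on the right-hand side in terms of iterated adjoint actions: since $[Y,M_b] = -[M_b,Y]$, one finds $\ell[E_1^{(2\ell-3)},M_a,\ldots,M_c] = (-1)^{|W|}\,[M_c,[\cdots,[M_a,E_1^{(2\ell-3)}]\cdots]]$, the $(-1)^{|W|}$ times the composite of the derivations $[M_a,\cdot],\ldots,[M_c,\cdot]$ with $[M_a,\cdot]$ applied first. On the left-hand side I would reindex the sum by the reversed word, using $\alpha(W) = (-1)^{|W|}W^t$, so that after relabelling both series acquire the common shape $\sum_W (-1)^{|W|}\phi^{-1}(f_W^\dR)$ times the same composite of operators applied, in the same order, to $E_1^{(2\ell-3)}$. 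The identity therefore reduces to the statement that the derivation $[M_a,\cdot]$ and the derivation $D^{\{\ell\}}_a$ coincide on the subalgebra of $\mathfrak{G}_n$ reached by iterating them on $E_1^{(2\ell-3)}$.

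Third, to prove this agreement I would argue as in Lemma~\ref{lem:conjugation_one}: both $[M_a,\cdot]$ (extended by the Jacobi identity) and $D^{\{\ell\}}_a = \sum_{r=1}^{2\ell-3} D_{P_a^{(r)}}$ (a sum of the generalised Ihara derivations of Lemma~\ref{lem:commute}) are Lie derivations, so it is enough to check that they agree on each generator $E_0^{(r)}, E_1^{(r)}$ with $r\leq 2\ell-3$. The step I expect to be the real obstacle is evaluating $D^{\{\ell\}}_a$ on these generators. Here I would use the two features of the generalised Ihara derivations from Section~\ref{sec:6.1}: that $D_{P_a^{(s)}}$ annihilates any element built from levels $r<s$, and that on an element of higher level it acts by $[\,\cdot\,,P_a^{(s)}]$ (the rule used in the proof of Lemma~\ref{lem:phiellX}). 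Consequently the sum $D^{\{\ell\}}_a E_1^{(r)} = \sum_{s=1}^{2\ell-3} D_{P_a^{(s)}} E_1^{(r)}$ truncates to $\sum_{s=1}^{r} [E_1^{(r)}, P_a^{(s)}] = [E_1^{(r)}, \sum_{s=1}^{r} P_a^{(s)}]$, which is independent of $\ell$ and is precisely the value prescribed by \eqref{eq:adaction}; likewise $D^{\{\ell\}}_a E_0^{(r)} = 0 = [M_a, E_0^{(r)}]$ on the independent $E_0$-generators.

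Finally, the remaining bookkeeping difficulty is that the combinations $E_0^{(r)}, E_1^{(r)}$ of \eqref{eq:E0E1-repeat} are not algebraically independent: for instance $E_0^{(2a)} = E_1^{(2a-1)}$, and $E_0^{(2a-1)}$ is a linear combination of $E_0^{(1)}$ and the $E_1^{(2i-3)}$. I would therefore verify the generator-wise agreement only on the genuinely independent letters and extend by linearity, confirming that both derivations respect these coincidences. Granting the generator-wise agreement, the derivation property propagates it to every iterated bracket, so the two rewritten series coincide term by term. This proves the Lemma, and, through the reduction of Section~\ref{sec:3.1}, completes the proof of Theorem~\ref{thm:main}.
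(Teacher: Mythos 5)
Your proposal reproduces the paper's own proof in its essentials: the paper likewise expands the left-hand side with Lemma~\ref{lem:phiellX} and the right-hand side with Lemma~\ref{lem:MXM}, records the agreement $D^{\{\ell\}}_a E_1^{(2\ell-3)} = -[E_1^{(2\ell-3)},M_a]$ (its equation~\eqref{adnvar}), and concludes by observing that both $D^{\{\ell\}}_a$ and ${\rm ad}_{M_a}$ are Lie derivations on $\mathfrak{G}_n$. Your truncation computation $D^{\{\ell\}}_a E_1^{(r)} = \big[E_1^{(r)},\sum_{s=1}^{r}P_a^{(s)}\big]$, together with its $\ell$-independence, is exactly the generator-wise content of that last step, which the paper leaves implicit; making it explicit is a genuine improvement in rigor.

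Where you depart from the paper is your final ``dependency'' step, and there your plan is simultaneously unnecessary and harder than you suggest. The comparison of the two series takes place in the \emph{free} Lie algebra $\mathfrak{G}_n$, where all of the $E_0^{(r)},E_1^{(r)}$ are independent letters by fiat: Lemma~\ref{lem:phiellX} is an identity there, the adjoint action is read off on those free letters, and only after the series have been matched letter by letter does one substitute the braid-algebra combinations \eqref{eq:E0E1-repeat} into both sides of the resulting identity --- an algebra homomorphism, so the identity survives. In that reading no coincidences ever need to be respected; note, however, that consistency with $D^{\{\ell\}}_a$ then forces $[E_0^{(r)},M_a] = -\big[E_0^{(r)},\sum_{s=1}^{r-1}P_a^{(s)}\big]$, which is nonzero for $r\geq 2$, so your assertion that $D^{\{\ell\}}_a E_0^{(r)} = 0 = [M_a,E_0^{(r)}]$ is correct only for $E_0^{(1)}$. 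Conversely, if you insist on working inside the braid algebra with only the independent letters, the compatibility you defer is not bookkeeping: already for $E_0^{(3)} = e_{1,0}+e_{1,2}$ it amounts to $[e_{1,0},P_b^{(1)}]+[e_{1,0}+e_{1,2},P_b^{(2)}]=0$, which (using that $e_{0,1}+e_{0,2}+e_{1,2}$ is central in the subalgebra it generates) is a special-derivation/hexagon-type property of the associator polynomials $P_b$ --- true, but not a consequence of the formal rules \eqref{eq:genihara1}--\eqref{eq:genihara2}; it holds by Jacobi alone for $P_3$ and requires the deeper structure of the $P_b$ in general. So either adopt the free-algebra framing, in which case your proof closes exactly as the paper's does, or be prepared to invoke those nontrivial properties of the $P_b$ to finish your last step.
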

\begin{proof}
Lemma \ref{lem:phiellX} gives
\beq\label{eq:proof31}
\Phi^{\{ \ell \}} E_1^{(2\ell-3)} \left(\Phi^{\{ \ell \}}\right)^{-1} = \sum_{A=ab\ldots c} \phi^{-1}(f^\dR_{\alpha(A)}) D^{ \{ \ell \} }_aD^{ \{ \ell \} }_b \ldots D^{ \{ \ell \} }_c E_1^{(2\ell-3)} \, ,
\eeq
where $D_a^{\{ \ell \}}$ is defined by (\ref{defdda}). On the other hand, Lemma \ref{lem:MXM} implies that
\beq\label{eq:proof32}
(\MM^\dR)^{-1} E_1^{(2\ell-3)} \MM^\dR = \sum_{A=ab\ldots c} \phi^{-1}(f^\dR_A) \, \ell[E_1^{(2\ell-3)},M_a,M_b,\ldots,M_c] \, .
\eeq
However, by the definition of the generalised Ihara derivations,
\beq
D_a^{\{ \ell \}} E_1^{(2\ell-3)} = \left[E_1^{(2\ell-3)}, \sum_{r=1}^{2\ell-3} P_a^{(r)}\right] = - [E_1^{(2\ell-3)}, M_a] \, .
\label{adnvar}
\eeq
Moreover, ${\rm ad}_{M_a}$ is itself a Lie derivation on $\mathfrak{G}_n$. So, comparing \eqref{eq:proof32} with \eqref{eq:proof31}, the Lemma follows.
\end{proof}

Recall from Section \ref{sec:2.4} that the Ihara formula for the motivic coaction reads
\begin{equation}\label{eq:Iharaagain}
\Delta \GG^\mot_1[\{e_{1,i}\};z_1] = \GG^\mot_1[\{e'_{1,i}\};z_1] \GG^\dR_1[\{e_{1,i}\};z_1] \, ,
\end{equation}
where $\GG^\mot_1$ features a change of alphabet  from $e_{1,\ell}$ to $e'_{1,\ell}$ (see (\ref{ykconj}) and (\ref{zkconj})) for $\ell=2,3,\ldots,n{+}1$, whereas $e'_{1,0} = e_{1,0}$. However, we can now show that

\begin{lemma}\label{lem:DIcon}
The change of alphabet to
\begin{equation}
e'_{1,\ell} = 
Z^\dR_\ell
\, e_{1,\ell}\,
(Z^\dR_\ell)^{-1}\, , \qquad
Z^\dR_\ell = \lim_{z_1\rightarrow z_\ell} \GG_1^\dR[ \{e_{1,i} \} ;z_1]
\label{chalphb}
\end{equation}
for $\ell=2,3,\ldots,n{+}1$ can equivalently be written as
\begin{equation}
e'_{1,\ell} = (\MM^\dR \GG_n^\dR \cdots \GG_2^\dR)^{-1}\, e_{1,\ell}\, (\MM^\dR \GG_n^\dR \cdots \GG_2^\dR)\, .
\end{equation}
\end{lemma}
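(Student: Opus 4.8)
The plan is to reduce the claim to the key identity (\ref{keyeq}) and then assemble it from the three ingredients established in Sections \ref{sec:4} and \ref{sec:6.1}. Writing $\YY^\dR_\ell = \GG^\dR_n \cdots \GG^\dR_2 Z^\dR_\ell$ as in (\ref{def:yell}), the asserted equality of the two expressions for $e'_{1,\ell}$ is, after multiplying the target on the left by $\GG^\dR_n \cdots \GG^\dR_2$ and on the right by its inverse, equivalent to
\beq
\YY^\dR_\ell \, e_{1,\ell} \, (\YY^\dR_\ell)^{-1} = (\MM^\dR)^{-1} e_{1,\ell} \, \MM^\dR \, ,
\eeq
which is precisely the key identity (\ref{keyeq}). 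So it suffices to prove this single conjugation identity for each $\ell = 2, \ldots, n{+}1$.

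First I would observe that the left-hand side is constant in the variables $z_2, \ldots, z_n$: by Lemma \ref{lem:YeY}, $\partial_k(\YY^\dR_\ell e_{1,\ell}(\YY^\dR_\ell)^{-1}) = 0$ for all $2 \le k \le n$, and $\YY^\dR_\ell$ carries no dependence on $z_1$ after the limit $z_1 \rightarrow z_\ell$. Being constant, the conjugate may be evaluated in the degeneration $z_i \rightarrow 0$ for $i \ge 2$. By Lemma \ref{lem:bigphiprod}, $\lim_{z_i \rightarrow 0} \YY^\dR_\ell = \Phi^{(1)} \cdots \Phi^{(2\ell-3)} = \Phi^{\{\ell\}}$, and since each $\Phi^{(r)}$ is invertible by (\ref{eq:PhiPhi1}) the limit passes through the conjugation to give
\beq
\YY^\dR_\ell \, e_{1,\ell} \, (\YY^\dR_\ell)^{-1} = \Phi^{\{\ell\}} \, e_{1,\ell} \, (\Phi^{\{\ell\}})^{-1} \, .
\eeq
Recalling from (\ref{eq:E0E1}) that $e_{1,\ell} = E_1^{(2\ell-3)}$, the right-hand side is exactly the object computed in Lemma \ref{lem:proof3}, which, for the adjoint action (\ref{eq:adaction}), equals $(\MM^\dR)^{-1} E_1^{(2\ell-3)} \MM^\dR = (\MM^\dR)^{-1} e_{1,\ell} \MM^\dR$. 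Chaining the two displayed equalities establishes (\ref{keyeq}) and hence, reversing the opening reduction, the Lemma.

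The genuine content of this argument lives in the three prior Lemmas, so within this proof the only points requiring care are bookkeeping ones. I would make sure the conjugation directions are tracked correctly in the opening reduction --- the inverse $(\MM^\dR \GG^\dR_n \cdots \GG^\dR_2)^{-1}$ reverses to $(\GG^\dR_2)^{-1} \cdots (\GG^\dR_n)^{-1}(\MM^\dR)^{-1}$, so that the $\GG^\dR_k$ factors cancel against those assembled into $\YY^\dR_\ell$ and leave the clean $\MM^\dR$-conjugation on the right --- and that the $z_i \rightarrow 0$ limit of the product is the product of the limits, which is legitimate precisely because the full conjugate is already known to be independent of $z_2,\ldots,z_n$ and each associator factor is invertible. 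No new estimates or identities are needed beyond what Lemmas \ref{lem:YeY}, \ref{lem:bigphiprod} and \ref{lem:proof3} supply.
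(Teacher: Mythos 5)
Your proposal is correct and follows essentially the same route as the paper's own proof: reduce the Lemma to the key identity (\ref{keyeq}), use Lemmas \ref{lem:YeY} and \ref{lem:bigphiprod} to identify $\YY^\dR_\ell \, e_{1,\ell}\, (\YY^\dR_\ell)^{-1}$ with $\Phi^{\{\ell\}} E_1^{(2\ell-3)} (\Phi^{\{\ell\}})^{-1}$, and conclude via Lemma \ref{lem:proof3}. The paper states this in compressed form, while you additionally spell out the constancy-then-degeneration argument and the conjugation bookkeeping, which are exactly the implicit steps in the paper's one-line invocation of those Lemmas.
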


\begin{proof}
The Lemma is equivalent to showing that
\begin{equation}
(\MM^\dR)^{-1} e_{1,\ell} \MM^\dR = (\GG^\dR_n\cdots\GG^\dR_2 Z^\dR_\ell)\, e_{1,\ell}\, (\GG^\dR_n\cdots\GG^\dR_2 Z^\dR_\ell)^{-1}\, .
\end{equation}
By Lemmas \ref{lem:YeY} and \ref{lem:bigphiprod}, the right-hand side is equal to $\Phi^{\{ \ell \}} E_1^{(2\ell-3)} \left(\Phi^{\{ \ell \}}\right)^{-1}$, and we also recall that $E_1^{(2\ell-3)} = e_{1,\ell}$ by \eqref{eq:E0E1}. But then the result follows from Lemma \ref{lem:proof3}, above.
\end{proof}

\begin{remark} 
In Lemma \ref{lem:DIcon}, all three ingredients of our results, highlighted in Section \ref{sec:3.3}, come together: (i) the $z_i$-independence of $(\mathbb G_n^\dR\cdots \mathbb G_2^\dR Z^\dR_\ell) e_{1,\ell} (\mathbb G_n^\dR\cdots \mathbb G_2^\dR Z^\dR_\ell)^{-1}$, (ii) the use of the braid group action to rewriting this conjugation using Drinfeld associators, and (iii) the replacement of these Drinfeld associators by the group-like series $\mathbb M^\dR$ in zeta generators.
\end{remark}

\begin{theorem}\label{thm:coactionB}
With the adjoint action defined above, the motivic coaction acts on the generating series $\GG_1^\mot$ as
\begin{equation}\label{eq:thmB}
\Delta \GG_1^\mot = (\HH^\dR_n)^{-1}\, \GG_1^\mot \, \HH^\dR_n \, \GG_1^\dR\, ,
\end{equation}
where
\begin{equation}
\HH_{n} = \MM \, \GG_n  \cdots \GG_{2}  \, .
\label{exphn}
\end{equation}
\end{theorem}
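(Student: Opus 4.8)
The plan is to assemble the theorem directly from the multivariate Ihara formula together with Lemma \ref{lem:DIcon}, since all the genuinely hard work has already been done in the preceding sections. I would begin from the Ihara formula \eqref{eq:Iharaagain}, namely $\Delta \GG^\mot_1[\{e_{1,i}\};z_1] = \GG^\mot_1[\{e'_{1,i}\};z_1]\, \GG^\dR_1[\{e_{1,i}\};z_1]$, in which the first factor carries the change of alphabet $e_{1,\ell}\mapsto e'_{1,\ell}$ defined in \eqref{chalphb}, while $e'_{1,0}=e_{1,0}$.

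The first step is to invoke Lemma \ref{lem:DIcon} to replace this change of alphabet by a conjugation. Writing $\HH^\dR_n = \MM^\dR\GG^\dR_n\cdots\GG^\dR_2$ as in \eqref{exphn}, Lemma \ref{lem:DIcon} gives $e'_{1,\ell} = (\HH^\dR_n)^{-1}\,e_{1,\ell}\,\HH^\dR_n$ for every $\ell=2,\ldots,n{+}1$. I would then check that the remaining generator $e_{1,0}$ is fixed by the same conjugation: it commutes with all braid generators appearing in $\GG^\dR_2,\ldots,\GG^\dR_n$ by the second relation in \eqref{eq:easteast1}, and with every zeta generator $M_a$ in $\MM^\dR$ by the relation $[E_0^{(1)},M_a]=0$ in the adjoint action \eqref{eq:adaction} (recalling $E_0^{(1)}=e_{1,0}$). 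Hence $(\HH^\dR_n)^{-1}e_{1,0}\HH^\dR_n = e_{1,0}=e'_{1,0}$, so the conjugation formula $e'_{1,i}=(\HH^\dR_n)^{-1}e_{1,i}\HH^\dR_n$ holds uniformly for all $i$.

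The second step is the observation that conjugating each letter individually is the same as conjugating the whole generating series. Since $\GG^\mot_1[\{e_{1,i}\};z_1]=\sum_W e_{W^t}\,G^\mot(W;z_1)$ is a sum of concatenation products of the $e_{1,i}$, substituting $e_{1,i}\mapsto(\HH^\dR_n)^{-1}e_{1,i}\HH^\dR_n$ into any word produces a telescoping product in which the interior factors $\HH^\dR_n(\HH^\dR_n)^{-1}$ cancel, leaving $(\HH^\dR_n)^{-1}e_{W^t}\HH^\dR_n$. Summing over $W$ therefore gives $\GG^\mot_1[\{e'_{1,i}\};z_1]=(\HH^\dR_n)^{-1}\GG^\mot_1[\{e_{1,i}\};z_1]\,\HH^\dR_n$. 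Feeding this back into the Ihara formula yields $\Delta\GG^\mot_1=(\HH^\dR_n)^{-1}\,\GG^\mot_1\,\HH^\dR_n\,\GG^\dR_1$, which is exactly \eqref{eq:thmB}.

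I do not expect a serious obstacle at this final stage, precisely because Lemmas \ref{lem:YeY}, \ref{lem:bigphiprod}, \ref{lem:proof3} and \ref{lem:DIcon} have already reduced the multivariate statement to a clean conjugation identity. The one point that deserves care, rather than difficulty, is the telescoping step: it relies on the conjugating series $\HH^\dR_n$ being \emph{the same} for every letter, which is what makes the uniform rewriting of Lemma \ref{lem:DIcon} (with a single series $\HH^\dR_n$ independent of $\ell$) the crucial input. It is also worth stressing that the adjoint action \eqref{eq:adaction} is exactly the prescription needed to make $\HH^\dR_n$ act by conjugation consistently on all the $e_{1,\ell}$ simultaneously, so the theorem should be recorded as holding for this prescribed action of the zeta generators.
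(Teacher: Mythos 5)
Your proposal is correct and follows essentially the same route as the paper's own proof: starting from the multivariate Ihara formula \eqref{eq:Iharaagain}, using Lemma \ref{lem:DIcon} to rewrite the change of alphabet as conjugation by $\HH^\dR_n$, verifying via \eqref{eq:easteast1} and the adjoint action \eqref{eq:adaction} that $e_{1,0}$ commutes with $\HH^\dR_n$, and then telescoping the conjugation out of the generating series. The paper records the telescoping step as its equation \eqref{eq:coactionB2}, but the argument is identical in substance to yours.
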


\begin{proof}
Lemma \ref{lem:DIcon}, above, shows that the change of alphabet (\ref{chalphb}) is given by
\begin{equation}
e'_{1,\ell} = (\HH^\dR_n)^{-1} e_{1,\ell} \HH^\dR_n\, .
\end{equation}
Clearly for two $i,j> 1$,
\beq\label{eq:coactionB2}
e'_{1,i}e'_{1,j} = (\HH^\dR_n)^{-1} e_{1,i}e_{1,j} \, \HH^\dR_n\, .
\eeq
Moreover, $e_{1,0}$ (which coincides with $e_{1,0}^\ast$ in (\ref{adaptser})) commutes with $\GG_n^\dR \cdots \GG_{2}^\dR$, by \eqref{eq:easteast1}. Also, $e_{1,0}  = E_0^{(1)}$ (equation \eqref{eq:E0E1-repeat}), so by the definition of the adjoint action, \eqref{eq:adaction}, $e_{1,0} $ commutes with $\MM^\dR$. We conclude that
\beq
[e_{1,0},\HH^\dR_n] = 0\, ,
\eeq
and, together with \eqref{eq:coactionB2}, it follows that the conjugation by the series $\HH^\dR_n$ can be `pulled out' of the MPL generating series:
\begin{equation}
\GG^\mot_1[\{e'_{1,i}\};z_1] 
= (\HH^\dR_n)^{-1}  \,
\GG^\mot_1[\{e_{1,i}\};z_1] 
\, \HH^\dR_n\, .
\end{equation}
The Theorem then follows from the Ihara formula, \eqref{eq:Iharaagain}.
\end{proof}

\begin{remark}
We again emphasize that the conjugation by the series $\mathbb M^\dR$ in the main formula, equation \eqref{eq:thmB}, does \emph{not} depend on the choice of the $f$-alphabet isomorphism $\phi$ that defines the MZVs $\phi^{-1}(f^\dR_W)$. This is because the choice of $\phi$ also changes the Lie polynomials $P_a^{(r)}$ that appear in the definition, equation \eqref{eq:adaction}, of the action of the zeta generators $M_a$. 
\end{remark}

Finally, we note that in order to expand the coaction formula, \eqref{eq:thmB}, given by our Theorem, it is necessary to expand the inverse series $(\MM^\dR)^{-1}$ and $(\GG_k^\dR)^{-1}$ (for $2\leq k \leq n$). The inverse series $(\MM^\dR)^{-1}$ is given by Lemma \ref{lem:Minv}. A similar argument gives
\begin{lemma}\label{lem:GMinverse}
The series inverse of $\GG^\dR({\cal A};z)$, for some alphabet ${\cal A} = \{a_1,\ldots,a_n\}$, is
\beq
\left( \GG^\dR \left[\begin{matrix} e_{a_1} & \cdots & e_{a_n} \\ a_1 & \cdots & a_n \end{matrix} ; z \right] \right)^{-1} = \sum_W e_{W^t}\,G^\dR (\alpha(W); z) = \sum_W (-1)^{|W|} e_{W}\,G^\dR (W; z)\, ,
\label{invsgg}
\eeq
where the sum is over words $W \in {\cal A}^\times$, including the empty word, and $W^t$ denotes the reversed word.

\end{lemma}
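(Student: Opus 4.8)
The plan is to follow the strategy of Lemma~\ref{lem:Minv} essentially verbatim, replacing the shuffle Hopf algebra of MZVs in the $f$-alphabet by the shuffle Hopf algebra of de Rham MPLs at the fixed argument $z$. The only structural inputs required are that the de Rham MPLs inherit the shuffle-product identity \eqref{eq:Gshuffle}, so that $W \mapsto G^\dR(W;z)$ is a character of the shuffle algebra on words in ${\cal A}$, together with the antipode identity \eqref{lie:antipode2} already invoked there. Before the main computation I would record two elementary bookkeeping facts about reversed words: concatenation reverses the word order, $e_{A^t} e_{B^t} = e_{(BA)^t}$, and $\alpha(W^t) = (-1)^{|W|} W$ follows directly from the definition \eqref{eq:antipodeA}. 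The latter already yields the second equality in \eqref{invsgg}, since reindexing $\sum_W e_{W^t} G^\dR(\alpha(W);z)$ by $W \mapsto W^t$ turns it into $\sum_W e_W \, G^\dR(\alpha(W^t);z) = \sum_W (-1)^{|W|} e_W \, G^\dR(W;z)$.

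The substance is then to check that the first expression is a two-sided series inverse by direct multiplication. Writing $\GG^\dR = \sum_B e_{B^t} G^\dR(B;z)$ and the candidate inverse as $\sum_A e_{A^t} G^\dR(\alpha(A);z)$, the product collects, for each word $W$, the coefficient of $e_{W^t}$. Using $e_{B^t} e_{A^t} = e_{(AB)^t}$ and the shuffle identity \eqref{eq:Gshuffle}, this coefficient is
\begin{equation}
\sum_{W=AB} G^\dR(B;z)\, G^\dR(\alpha(A);z) = \sum_{W=AB} G^\dR\big(\alpha(A)\shuffle B;\, z\big)\, ,
\end{equation}
where $G^\dR$ is extended $\mathbb{Q}$-linearly to formal combinations of words. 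The antipode identity \eqref{lie:antipode2} gives $\sum_{W=AB} \alpha(A)\shuffle B = 0$ for $W\neq\emptyset$ --- this is exactly $m\circ(S\otimes\mathrm{id})\circ\Delta = \eta\circ\epsilon$ for the deconcatenation coproduct with $S=\alpha$, the prefix $A$ playing the role of the antipoded factor and $\shuffle$ being commutative. Hence every coefficient vanishes except that of the empty word, which is $G^\dR(\emptyset;z)^2 = 1$, so the candidate is a right inverse. The analogous computation with the two factors in the opposite order invokes the mirror axiom $m\circ(\mathrm{id}\otimes S)\circ\Delta = \eta\circ\epsilon$ and shows it is also a left inverse; alternatively one observes that $\GG^\dR$ is group-like for the deshuffle coproduct (by \eqref{eq:Gshuffle}) and that its inverse is therefore its antipode, which again is $\sum_W e_{W^t} G^\dR(\alpha(W);z)$.

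I do not expect a genuine obstacle here: the mathematical content is entirely supplied by the shuffle identity and the antipode identity, both already established. The only step that demands care is the bookkeeping of word reversals --- verifying $e_{A^t}e_{B^t} = e_{(BA)^t}$, tracking the sign in $\alpha$, and correctly matching the resulting sum $\sum_{W=AB}$ (prefix versus suffix) to the appropriate one-sided antipode axiom.
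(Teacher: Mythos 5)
Your proposal is correct and takes essentially the same route as the paper's proof: multiply $\GG^\dR$ by the candidate inverse, merge the MPL coefficients with the shuffle identity \eqref{eq:Gshuffle}, and annihilate the coefficient of every nonempty word via the antipode identity \eqref{lie:antipode2}. The additional bookkeeping you include (the reindexing argument for the second equality in \eqref{invsgg} and the left-inverse/group-like remark) is accurate but goes slightly beyond what the paper records, which only carries out the one-sided product.
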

\begin{proof}
Using the shuffle product $G(A;z) G(B;z) = G(A\shuffle B;z)$ of MPLs (equation (\ref{eq:Gshuffle})), the following product
involving the series in \eqref{invsgg} simplifies to
\beq
\GG({\cal A};z) \, 
\sum_W e_{W^t}\,G(\alpha(W); z) = \sum_C e_{C^t} \sum_{C = BA} G(A \shuffle \alpha(B) ;z)\, ,
\eeq
where $\alpha(W) = (-1)^{|W|} W^t$ is the antipode for the free Hopf algebra (\eqref{lie:antipode1} in Appendix \ref{app:Lie}), and the sum is over $C \in {\cal A}^\times$. However, for all non-empty words $C$,
\beq
\sum_{C = BA} G(A \shuffle \alpha(B) ;z) = 0
\eeq
by the antipode identity for the free Hopf algebra (equation \eqref{lie:antipode2} in Appendix \ref{app:Lie}).
\end{proof}

\section{The single-valued map}
\label{sec:7}

As discussed in Section \ref{sec:mdr}, the algebra of de Rham MPLs, ${\cal P}^\dR$, is a Hopf algebra which is equipped with the shuffle product (\ref{eq:Gshuffle}), to be denoted by $\mu$ in this section, a coproduct, and an antipode, $S$, to be introduced in Section \ref{sec:7.1}, below. The antipode can be used to construct the single-valued map of MPLs using the motivic coaction, $\Delta$, as reviewed in Section \ref{sec:7.2}. We use our coaction formula, Theorem \ref{thm:main}, to find formulas for the action of the antipode, $S$, and prove our formula for the single-valued map, Theorem \ref{thm:sv}, in Section \ref{sec:7.proof}. An alternative proof by direct matching with the construction of single-valued MPLs in \cite{DelDuca:2016lad} can be found in Section~\ref{sec:7.3}.

The algebra of motivic MPLs, ${\cal P}^\mot$, is a Hopf algebra comodule (see Section \ref{sec:mdr}). We can define a projection $\Pi^\dR : {\cal P}^\mot \rightarrow {\cal P}^\dR$, by replacing motivic MPLs, $G^\mot$, with the corresponding de Rham MPLs, $G^\dR$,
\beq
\Pi^\dR\big( G^\mot(A;z) \big) =  G^\dR(A;z) \, ,
\label{drproj}
\eeq
for some word $A \in {\cal A}^\times$ in the alphabet ${\cal A}$, see Section 4.3 of \cite{Francislecture}. In particular, $\Pi^\dR((2\pi i)^\mot)=0$, and the de Rham projection removes terms of $G^\mot$ that correspond to monodromies, i.e. terms with at least one power of $(2\pi i)^\mot$.

The coaction $\Delta Z^\mot$, for some motivic MPL $Z^\mot$, is a sum of terms of the form $X^\mot Y^\dR$, involving both motivic and de Rham, which cannot be multiplied. To define multiplication we again use the projection $\Pi^\dR$.

\begin{definition}\label{rem:not}
Write $\mu$ for multiplication on ${\cal P}^\dR$. Then
\beq
\mu \circ (\Pi^\dR \otimes 1) \circ \Delta
\eeq
defines a homomorphism from ${\cal P}^\mot$ to ${\cal P}^\dR$. Here $\mu \circ (\Pi^\dR \otimes 1)$ acts on some term $X^\mot Y^\dR$ as
\beq
\mu \circ (\Pi^\dR \otimes 1): X^\mot Y^\dR \mapsto X^\dR \cdot Y^\dR.
\eeq
\end{definition}

\subsection{The antipode of $\MM$ and $\GG_k$}
\label{sec:7.1}
The antipode, $S$, for the Hopf algebra of de Rham MPLs is determined by the antipode identity \cite{Goncharov:2001iea}
\beq\label{sv:antipode0}
\mu \circ (S \circ \Pi^\dR \otimes 1) \circ \Delta G^\mot(A;z) = 0 \, , \ \ \ \ A \neq \emptyset\, ,
\eeq
where $S$ acts just on the de Rham projection (\ref{drproj}) of the motivic entry in 
$\Delta G^\mot(A;z)$. (See \ref{rem:not} for more comments on the notation used here.) In terms of our generating series, $\GG_1^\mot$ (Section \ref{sec:2.2}), the antipode identity reads
\beq\label{sv:antipode1}
\mu \circ (S \circ \Pi^\dR \otimes 1)  \circ \Delta \GG^\mot_1 = 1 
\eeq
since the leading term, $1$, of the generating series, $\GG^\mot_1= 1+\ldots$, is left unchanged by the map: $\mu \circ (S \circ \Pi^\dR \otimes 1)  \circ \Delta 1=1$.

\begin{lemma}\label{lem:SGHGH}
The antipode acts on the de Rham version of the generating series $\GG_1$ as
\beq\label{sv:SGHGH}
S \GG^\dR_1 = \HH^\dR_n (\GG^\dR_1)^{-1} (\HH^\dR_n)^{-1} \, ,
\eeq
where the product $\HH_{n}$ of generating series in MPLs and MZVs is defined by (\ref{exphn}).
\end{lemma}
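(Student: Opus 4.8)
The plan is to read off $S\GG_1^\dR$ by feeding our coaction formula, Theorem \ref{thm:main}, into the antipode identity (\ref{sv:antipode1}). Write $m = \mu \circ (S \circ \Pi^\dR \otimes 1)$ for the map ${\cal P}^\mot \otimes {\cal P}^\dR \to {\cal P}^\dR$, so that (\ref{sv:antipode1}) reads $m(\Delta \GG_1^\mot) = 1$. Substituting $\Delta \GG_1^\mot = (\HH_n^\dR)^{-1} \GG_1^\mot \HH_n^\dR \GG_1^\dR$, the whole argument reduces to evaluating $m$ on this right-hand side and then inverting the resulting identity for $S\GG_1^\dR$.

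To evaluate $m$ on the right-hand side, the key observation is that among the four series in the product $(\HH_n^\dR)^{-1} \GG_1^\mot \HH_n^\dR \GG_1^\dR$ only $\GG_1^\mot$ carries motivic periods, while $(\HH_n^\dR)^{-1}$, $\HH_n^\dR$ and $\GG_1^\dR$ contribute only de Rham periods to the second tensor factor. Hence, after expanding the product as a series in the (non-commuting) braid and zeta generators, every monomial has as its motivic content a single coefficient $G^\mot(W;z_1)$ coming from the $e_{W^t}$-term of $\GG_1^\mot$. Applying $\Pi^\dR \otimes 1$ replaces this by $G^\dR(W;z_1)$, and applying $S$ to this single de Rham factor gives $S(G^\dR(W;z_1))$; there are no anti-homomorphism reorderings to worry about precisely because each term carries exactly one motivic factor. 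Finally $\mu$ collapses the coefficient into one de Rham period. Using that the de Rham shuffle algebra ${\cal P}^\dR$ is commutative, I can slide each period coefficient next to the generator word it came from without disturbing the (non-commuting) order of those words. Recognising the four resulting sub-series --- namely $(\HH_n^\dR)^{-1}$, $\sum_W e_{W^t}\, S(G^\dR(W;z_1)) = S\GG_1^\dR$, $\HH_n^\dR$ and $\GG_1^\dR$ --- then yields
\[
m\big( (\HH_n^\dR)^{-1}\GG_1^\mot \HH_n^\dR \GG_1^\dR \big) = (\HH_n^\dR)^{-1}\, (S\GG_1^\dR)\, \HH_n^\dR\, \GG_1^\dR \, .
\]

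Equating this to $1$ from (\ref{sv:antipode1}) and solving gives $S\GG_1^\dR = \HH_n^\dR (\GG_1^\dR)^{-1} (\HH_n^\dR)^{-1}$, which is the claim; here the inverses $(\GG_1^\dR)^{-1}$ and $(\HH_n^\dR)^{-1}$ exist as formal series by Lemmas \ref{lem:Minv} and \ref{lem:GMinverse}. The main obstacle, and the step deserving the most care, is the factorisation in the previous paragraph: one must track simultaneously that $S$ is an algebra anti-homomorphism on ${\cal P}^\dR$ (harmless here only because each monomial carries a single motivic period), that the period coefficients commute whereas the generators do not, and that $\Pi^\dR$ together with the bookkeeping convention $X^\mot = X^\mot\otimes 1$, $X^\dR = 1\otimes X^\dR$ correctly identifies which series feeds the first versus the second tensor slot. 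Once this is set up cleanly, the insertion of $S\GG_1^\dR$ into exactly the slot vacated by $\GG_1^\mot$ is immediate.
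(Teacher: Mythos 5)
Your proof is correct and follows essentially the same route as the paper's: substitute Theorem \ref{thm:main} into the antipode identity (\ref{sv:antipode1}), observe that $S\circ\Pi^\dR$ acts only on the single motivic factor $\GG_1^\mot$ so that $\mu \circ (S \circ \Pi^\dR \otimes 1) \circ \Delta \GG_1^\mot = (\HH_n^\dR)^{-1}(S\GG_1^\dR)\HH_n^\dR\GG_1^\dR$, and then solve for $S\GG_1^\dR$ using the series inverses. The paper states the middle step in one line; your careful justification of it (one motivic period per monomial, commutativity of ${\cal P}^\dR$ allowing coefficients to be slid past the non-commuting generator words) is exactly the bookkeeping implicit there.
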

\begin{proof}
Recall that, by Theorem \ref{thm:main},
\beq
\Delta \GG^\mot_1 = \left( \HH_n^\dR\right)^{-1}   \GG_1^\mot  \HH^\dR_n \GG_1^\dR.
\eeq
So
\beq
\mu \circ (S \circ \Pi^\dR \otimes 1) \circ \Delta \GG^\mot_1 = \left( \HH_n^\dR\right)^{-1} \left( S \GG_1^\dR \right) \HH^\dR_n \GG_1^\dR.
\eeq
Then, by the antipode identity, \eqref{sv:antipode1},
\beq
1 = \left( \HH^\dR_n \right)^{-1} \left( S \GG^\dR_1 \right) \HH^\dR_n \GG^\dR_1 \, ,
\eeq
which implies the Lemma by multiplying through by the appropriate series inverses.
\end{proof}

\begin{remark}
Note that, in the case of $n=1$, we have $\HH_{1} = \MM$, and (\ref{sv:SGHGH}) reduces to
\beq
S\GG^\dR[e_0,e_1;z]   =\MM^\dR \GG^\dR[e_0,e_1;z]^{-1} (\MM^\dR)^{-1}\, .
\label{S1var}
\eeq
Expanding the generating series, we see that, for some word $A \in \{0,1\}^\times$,
\beq
S G^\dR(A;z)=  G^\dR\big(\alpha(A);z\big)+\ldots,
\eeq
where the ellipsis comprises products of one-variable de Rham MPLs $G^\dR(B;z)$ of weight $|B|\leq |A|{-}3$, multiplied by de Rham MZVs $\phi^{-1}(f^\dR_W)$, with $W \in \{ 3,5,\ldots \}^\times$. Here, we have used Lemma \ref{lem:GMinverse} for the series expansion of $(\GG^\dR)^{-1}$, and $\alpha$ is the antipode on the free Hopf algebra of words (see (\ref{eq:antipodeA}) or Appendix~\ref{app:Lie})). Similarly, again using Lemma \ref{lem:GMinverse}, expanding equation (\ref{sv:SGHGH}) for $n>1$ with more general letters $A\in \{ 0,1,z_2,\ldots,z_n\}^\times$ gives
\beq
S G^\dR(A;z_1)=  G^\dR\big(\alpha(A);z_1\big) + \ldots
\, ,
\eeq
where the ellipsis gather the corrections due to the conjugation by $\HH^\dR_n=1+\ldots$ in (\ref{sv:SGHGH}). These terms take the form of a $z_1$-dependent polylogarithm, $G^\dR(B;z_1)$, at weight $|B|\leq |A|-1$, multiplied by de Rham MPLs in $\leq n{-}1$ variables and de Rham MZVs. Hence, the conjugation by the series $\HH^\dR_n$ in (\ref{sv:SGHGH}) interpolates between the antipode $\alpha$ on words and the antipode $S$ on MPLs. 
\end{remark}

We also find formulas for the action of $S$ on de Rham MZVs. 
In terms of the $f$-alphabet, the motivic coaction $\Delta$ on MZVs was given in Section \ref{sec:2.3} (equations \eqref{eq:fcoaction} and \eqref{eq:fdelta}). Also, the multiplication $\mu$ corresponds to the shuffle product in the $f$-alphabet representation (equation  \eqref{eq:fshuffle}). The antipode identity for a de Rham MZV, $\phi^{-1}(f_W^\dR)$, is then
\beq\label{sv:motivicS}
\mu \circ (S \circ \Pi^\dR \otimes 1)  \circ \Delta\, \phi^{-1}\big( f_W^\mot\big) = 0\, ,
\eeq
where the antipode $S$ acts only on the de Rham projection of the motivic entry of $ \Delta\, \phi^{-1}\big( f_W^\mot\big)$. 

\begin{lemma}\label{lem:SM}
The antipode acts on $\MM^\dR$, the generating series of de Rham MZVs, as
\beq\label{eq:SM}
S\MM^\dR = (\MM^\dR)^{-1} \, .
\eeq
\end{lemma}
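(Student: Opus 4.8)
The plan is to mirror the proof of Lemma \ref{lem:SGHGH}, replacing Theorem \ref{thm:main} by the much simpler coaction identity for the zeta generating series. Recall from \eqref{eq:Mcoaction} that $\Delta \MM^\mot = \MM^\mot \MM^\dR$; unpacking the suppressed tensor product together with the deconcatenation coaction \eqref{eq:fcoaction} in the $f$-alphabet, this reads
\[
\Delta \MM^\mot = \sum_{A,B} \phi^{-1}(f_A^\mot)\, \phi^{-1}(f_B^\dR)\, M_A M_B\, ,
\]
with the motivic factor $\phi^{-1}(f_A^\mot)$ in the first tensor slot, the de Rham factor $\phi^{-1}(f_B^\dR)$ in the second, and $M_A M_B = M_{AB}$ by concatenation of the bookkeeping generators.

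The key step is then to apply the homomorphism $\mu \circ (S \circ \Pi^\dR \otimes 1)$ of Remark \ref{rem:not} to this identity. On the MZV coefficients, $\Pi^\dR$ sends $\phi^{-1}(f_A^\mot)$ to $\phi^{-1}(f_A^\dR)$, the antipode $S$ then acts on this de Rham factor, the second slot is left untouched, and $\mu$ multiplies the results. Since $S$, $\Pi^\dR$ and $\Delta$ all act purely on the MZV coefficients while the $M_A$ merely record the grading, the right-hand side collapses to
\[
\mu \circ (S \circ \Pi^\dR \otimes 1) \circ \Delta \MM^\mot = \bigg(\sum_A S\big(\phi^{-1}(f_A^\dR)\big) M_A\bigg)\bigg(\sum_B \phi^{-1}(f_B^\dR) M_B\bigg) = (S\MM^\dR)\, \MM^\dR\, .
\]

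It remains to evaluate the same expression word by word and see that it equals $1$. Collecting the coefficient of $M_W$ for each factorisation $W = AB$, the accompanying scalar is exactly $\mu \circ (S\circ\Pi^\dR\otimes 1)\circ\Delta\,\phi^{-1}(f_W^\mot)$, which vanishes for $W \neq \emptyset$ by the antipode identity \eqref{sv:motivicS} and equals $1$ for the empty word $W = \emptyset$ (where $M_\emptyset = 1$). Hence $(S\MM^\dR)\,\MM^\dR = 1$, and multiplying on the right by $(\MM^\dR)^{-1}$ (Lemma \ref{lem:Minv}) yields \eqref{eq:SM}. The only point requiring any care — and the sole, minor, obstacle — is justifying that the coefficient-wise identity \eqref{sv:motivicS} reassembles correctly into the group-like generating series, i.e.\ that $S$ distributes over $\MM^\dR$ through the noncommuting $M_A$ without interference; this is immediate from the fact that the coaction acts only on the coefficients, leaving the concatenation structure of the $M_A$ to encode precisely the deconcatenation sum that the antipode identity evaluates.
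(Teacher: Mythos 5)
Your proposal is correct and follows essentially the same route as the paper's own proof: both apply the homomorphism $\mu \circ (S \circ \Pi^\dR \otimes 1)$ to the coaction identity $\Delta \MM^\mot = \MM^\mot \MM^\dR$ of \eqref{eq:Mcoaction} and invoke the antipode identity \eqref{sv:motivicS} to conclude $(S\MM^\dR)\,\MM^\dR = 1$, whence $S\MM^\dR = (\MM^\dR)^{-1}$. Your word-by-word expansion in the $f$-alphabet merely spells out explicitly the coefficient-level cancellation that the paper's three-line computation leaves implicit.
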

\begin{proof}
Recall that  $\Delta \MM^\mot= \MM^\mot \MM^\dR$ (equation \eqref{eq:Mcoaction} in Section \ref{sec:2.3}). So the antipode identity, \eqref{sv:motivicS}, becomes
\begin{align}
1 &=\mu \circ (S \circ \Pi^\dR \otimes 1)  \circ\Delta \MM^\mot \notag \\
&= \mu \circ (S \circ \Pi^\dR \otimes 1) \MM^\mot \MM^\dR
\notag \\
&= (S \MM^\dR) \MM^\dR \, ,
\end{align}
which implies the Lemma.
\end{proof}

\begin{remark}
Using the expansion of the series $\MM^{-1}$, Lemma \ref{lem:Minv}, we see that this Lemma implies that $S$ acts on de Rham MZVs as
\beq
S \phi^{-1}(f^\dR_W)= \phi^{-1}(f^\dR_{\alpha(W)}) \,,
\eeq
for some word $W \in \{3,5,7,\ldots\}^\times$. Or, more explicitly, for $i_1,i_2,\ldots,i_r \in 2\mathbb N{+}1$,
\beq
S \phi^{-1}(f_{i_1} f_{i_2}\ldots f_{i_r} )^\dR =(-1)^r \phi^{-1}(f_{i_r} \ldots f_{i_2} f_{i_1})^\dR \, .
\eeq
using the definition of the antipode $\alpha$ on the free Hopf algebra (see equation (\ref{eq:antipodeA}) or Appendix~\ref{app:Lie}).
\end{remark}

\subsection{The sv map and the antipode}
\label{sec:7.2}
Single-valued MPLs are closely related to the antipode $S$ and the coaction $\Delta$. To explain this connection, we will introduce the \emph{signed complex conjugate} of an MPL,
\beq
\widetilde{G^\dR}(A;z) \coloneqq (-1)^{|A|} \, \overline{G^\dR(A;z)} \, ,
\label{tildop.01}
\eeq
where $\overline{G^\dR(A;z)}$ is the complex conjugate, and $|A|$ is the length of $A$ (i.e.\ the weight of $G^\dR(A;z)$). Write $\widetilde{\GG_k^\dR}$ for the series obtained from $\GG_k^\dR$ by applying $\,\widetilde{\cdot}\,$ to the MPLs in the series. Then, using again Lemma \ref{lem:GMinverse}, we see that, 
\beq\label{sv:tildeGinv}
\widetilde{\GG_k^{-1}} = \overline{\GG_k^t}\, .
\eeq
Moreover, taking special values of de Rham MPLs, we obtain the signed conjugates of MZVs,
\beq
\widetilde{\phi^{-1}(f_W^\dR)} = (-1)^{|W|} \, \phi^{-1}(f_W^\dR)\, ,
\label{tildop.03}
\eeq
for any de Rham MZV $\phi^{-1}(f^\dR_W)$, where we use that de Rham MZVs are real. Equivalently, in terms of the generating series, we can write
\beq\label{sv:tildeMinv}
\widetilde{(\MM^\dR)^{-1}} = (\MM^\dR)^t \, ,
\eeq
where we again use the expansion of the series $(\MM^\dR)^{-1}$ (Lemma \ref{lem:Minv}).

Let $\widetilde{S}$ denote the action of the antipode, $S$, followed by $\,\widetilde{\cdot}\,$. Then the \emph{single-valued map} on motivic MPLs admits the combinatorial construction \cite{Brown:2013gia, DelDuca:2016lad} (also see \cite{Charlton:2021uhu})
\beq\label{sv:sv}
\sv = \mu \circ 
(\widetilde S \circ \Pi^\dR \otimes 1)
\circ \Delta\, ,
\eeq
where $\widetilde{S}$ acts only on the de Rham projection of the motivic part coming from the coaction~$\Delta$. As written, this is a map from ${\cal P}^\mot$ to ${\cal P}^\dR$. We abuse notation and suppress a final map to (i) convert the de Rham period obtained from (\ref{sv:sv}) into a motivic period and to (ii) evaluate the latter to a complex function via the period map (see Remark \ref{rem:per}). The complex function, $\sv\, G^\mot(A;z)$, obtained by applying \eqref{sv:sv} to a motivic MPL, is single-valued in all its variables (i.e.\ $z$ and also the non-constant letters in $A$) and satisfies the same holomorphic differential equation as $G(A;z)$ (equation (\ref{coact.09})).

\begin{remark}\label{rem:per}
The period map, ${\rm per}$, evaluates a motivic period to obtain a complex function (or complex number), and we write the result by dropping superscripts: ${\rm per} \, G^\mot(A;z) = G(A;z)$ and ${\rm per} \, \zeta^\mot_{n_1,\ldots,n_r} = \zeta_{n_1,\ldots,n_r}$. We suppress the period map when writing ${\rm sv}$ in \eqref{sv:sv}. Note that, to apply ${\rm per}$ in equation \eqref{sv:sv}, we must also choose an algebra homomorphism from ${\cal P}^\dR$ to ${\cal P}^\mot$. However, it is known that the result does not depend on this choice \cite{Brown:2013gia} (and see \textit{Remark 3.6}).
\end{remark}

\begin{remark}
\label{svdrm}
The ${\rm sv}$ map we define here, in \eqref{sv:sv} differs slightly from the single-valued maps defined in  \cite{Brown:2013gia, Francislecture, Brown:2018omk}. These references define a map from ${\cal P}^\dR$ to ${\cal P}^\mot$ (which can then be composed with the period map, whereas \eqref{sv:sv} defines a map from ${\cal P}^\mot$ to ${\cal P}^\dR$. However, the map in these references can be written in a form that is very similar to \eqref{sv:sv}: it is given by $\mu \circ (\widetilde{S} \otimes 1) \circ \Delta'$, where $\Delta'$ is the coproduct $\Delta': {\cal P}^\dR \rightarrow {\cal P}^\dR \otimes {\cal P}^\dR$. The coproduct $\Delta'$ can be defined by formulas almost identical to that for the coaction in (\ref{GBcoact}), except with replacing the $\mot$ superscripts on both sides by $\dR$. Indeed, the coproduct on ${\cal P}^\dR$ is related to the motivic coaction by
\beq
\Delta' \circ \Pi^\dR = (\Pi^\dR \otimes 1) \circ \Delta\, ,
\eeq
so that our definition \eqref{sv:sv} is equivalent to the earlier one.
\end{remark}

\begin{remark}\label{rem:mult}
It is known that the single-valued map (\ref{sv:sv}) is a homomorphism, and in particular, that it is multiplicative: 
\beq
\sv\,(X^\mot Y^\mot) = (\sv\, X^\mot) \,(\sv\,Y^\mot)\,,
\eeq
for arbitrary motivic MPLs $X^\mot$ and $Y^\mot$. Indeed, this follows from the multiplicative property of the motivic coaction, $\Delta$,
\beq\label{sv:DeltaXY}
\Delta (X^\mot Y^\mot) = \Delta (X^\mot) \Delta (Y^\mot)\, ,
\eeq
where, on the right-hand side, we take products separately in ${\cal P}^\dR$ and ${\cal P}^\mot$. This property, \eqref{sv:DeltaXY}, can be taken as a defining property of a coaction. Moreover, the antipode $S$, and $\widetilde{S}$, all respect multiplication in ${\cal P}^\dR$. A more detailed account on the multiplicativity of the single-valued map can for instance be found in appendix B of \cite{DelDuca:2016lad}.
\end{remark}

\subsection{Proof of the sv map formula}\label{sec:7.proof}
Applied to our generating series $\GG_1$ of MPLs, the ${\rm sv}$ map, equation \eqref{sv:sv} gives a formula for a generating series of single-valued MPLs: $\sv\,\GG^\mot_1$. This leads to our formula for the single-valued map (Theorem \ref{thm:sv}), given as Theorem \ref{thm:sv7}, below. First, note that the single-valued map, equation \eqref{sv:sv}, can also be applied to motivic MZVs, and to generating series of MZVs.\footnote{Alternatively, single-valued MZVs could be obtained by taking (regularized) $z\rightarrow 1$ limits of single-valued MPLs. It is surprising that the single-valued map is consistent with all the relations among motivic MZVs. However, it follows from Theorem 1.1 and Corollary 5.4 of \cite{Brown:2013gia} that the single-valued map of motivic MZVs is well-defined and commutes with evaluation of MPLs.} We find:

\begin{lemma}\label{lem:svMMM}
A generating series $\sv\,\MM^\mot$ of single-valued MZVs is given by
\beq\label{sv:svMMM}
\sv\, \MM^\mot= \MM^t \MM \, .
\eeq
\end{lemma}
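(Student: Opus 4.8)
The plan is to evaluate $\sv\, \MM^\mot$ directly from the combinatorial definition of the single-valued map in \eqref{sv:sv}, namely $\sv = \mu \circ (\widetilde S \circ \Pi^\dR \otimes 1) \circ \Delta$, by substituting the generating series $\MM^\mot$ and simplifying each of the three operations ($\Delta$, then $\widetilde S \circ \Pi^\dR$ on the left factor, then $\mu$) using identities already proved for $\MM$.

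First I would apply the coaction. The decisive input is that $\MM$ is especially simple under the motivic coaction: equation \eqref{eq:Mcoaction} gives $\Delta \MM^\mot = \MM^\mot \MM^\dR$, i.e.\ $\MM^\mot \otimes \MM^\dR$ in the notation that suppresses $\otimes$, with no change of alphabet of the kind appearing for the Drinfeld associator in \eqref{eq:iharaphi}. This absence of an alphabet twist is exactly what makes the statement clean.

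Next I would push $\widetilde S \circ \Pi^\dR$ through the first tensor factor. The de Rham projection gives $\Pi^\dR \MM^\mot = \MM^\dR$, reducing the expression to $\mu \circ (\widetilde S \otimes 1)(\MM^\dR \otimes \MM^\dR)$. The antipode then acts by Lemma \ref{lem:SM}, $S \MM^\dR = (\MM^\dR)^{-1}$, and the signed complex conjugation by \eqref{sv:tildeMinv}, $\widetilde{(\MM^\dR)^{-1}} = (\MM^\dR)^t$. Multiplying the two factors with $\mu$ yields $(\MM^\dR)^t \MM^\dR$, which is the asserted $\MM^t \MM$ once the suppressed period map is applied and one uses that de Rham MZVs are real, so that the word-reversal $(\cdot)^t$ carries no extra complex conjugate.

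Because all three ingredients are already established, this is a bookkeeping computation rather than a deep argument, and I do not expect a genuine obstacle. The only point requiring care is tracking which factor each operator hits and the resulting order of the product: $\widetilde S \circ \Pi^\dR$ acts on the \emph{left} (motivic) entry of $\Delta \MM^\mot$ and the identity on the \emph{right} (de Rham) entry, so $\mu$ produces $(\MM^\dR)^t$ on the left and $\MM^\dR$ on the right rather than the reverse. The natural consistency check is that this matches the $n=1$ specialization of Theorem \ref{thm:sv}, $\sv\, \MM^\mot = \overline{\MM^t}\,\MM$, with the conjugation bar dropping out precisely by reality of de Rham MZVs; this is the one place a sign or transposition slip could hide.
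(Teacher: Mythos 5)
Your proposal is correct and follows essentially the same route as the paper's own proof: apply $\Delta \MM^\mot = \MM^\mot \MM^\dR$, use Lemma \ref{lem:SM} together with \eqref{sv:tildeMinv} to get $\widetilde{S}\MM^\dR = (\MM^\dR)^t$, and multiply to obtain $\MM^t\MM$. The points you flag for care (which factor the antipode hits, and reality of de Rham MZVs) are handled implicitly in the paper in exactly the way you describe.
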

\begin{proof}
The coaction $\Delta \MM^\mot = \MM^\mot \MM^\dR$ in Section \ref{sec:2.3} implies that
\beq
\sv \,\MM^\mot =  \mu \circ (\widetilde S \circ \Pi^\dR \otimes 1)\circ \Delta \MM^\mot 
=  \mu \Big( \big(  \widetilde{S}\MM^\dR \big) \otimes \MM^\dR \Big) \, .
\eeq
Then, by Lemma \ref{lem:SM} and equation \eqref{sv:tildeMinv}, $\widetilde{S} \MM^\dR = (\MM^\dR)^t$, so that
\beq
\sv \,\MM^\mot =   \mu \big( (\MM^\dR)^t\otimes  \MM^\dR \big) = \MM^t \MM\, .
\eeq
\end{proof}

\begin{remark}
This Lemma agrees with the $f$-alphabet representation \cite{Schnetz:2013hqa, Brown:2013gia}
\beq 
{\rm sv} \, f^\mot_W = \sum_{W=AB} f_{A^t} \shuffle f_B
\, , \qquad
{\rm sv} \, (f_{i_1} f_{i_2}\ldots f_{i_r})^\mot = \sum_{j=0}^r f_{i_j}\ldots f_{i_2} f_{i_1} \shuffle f_{i_{j+1}} \ldots f_{i_r}
\eeq
of the single-valued map given in \cite{Schnetz:2013hqa, Brown:2013gia}, where $i_1,i_2,\ldots,i_r \in 2\mathbb N{+}1$. By a slight abuse of notation, we do not distinguish between the above sv-map and its composition $\phi \circ {\rm sv} \circ \phi^{-1}$ with the $f$-alphabet isomorphism.
\end{remark}

\begin{theorem}\label{thm:sv7}
The single-valued map acts on the generating series $\GG_1$ as
\begin{equation}\label{thm:sv:eq}
\sv\, \GG^\mot_1 = \left(\sv \,\HH^\mot_n\right)^{-1} \,\overline{\GG_1^t}\, \left(\sv\, \HH^\mot_n\right) \GG_1\, ,
\end{equation}
where $\HH_{n} = \MM \, \GG_n \cdots \GG_{2}$, and its single-valued version is given by
\beq\label{thm:sv:claim}
\sv\, \HH^\mot_n =  \overline{\HH}_n^{\,t} \, \HH_n \, .
\eeq
\end{theorem}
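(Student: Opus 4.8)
The plan is to establish the two identities \eqref{thm:sv:eq} and \eqref{thm:sv:claim} simultaneously by induction on the number of variables $n$, feeding off the definition $\sv = \mu \circ (\widetilde S \circ \Pi^\dR \otimes 1) \circ \Delta$, the coaction formula of Theorem \ref{thm:main}, and the antipode formulas of Lemmas \ref{lem:SGHGH} and \ref{lem:SM}. In the base case $n=1$ one has $\HH_1 = \MM$, so \eqref{thm:sv:claim} is exactly Lemma \ref{lem:svMMM}, and \eqref{thm:sv:eq} follows from the computation in the third paragraph below specialised to $\HH_1 = \MM$.

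First I would prove the claim \eqref{thm:sv:claim}. Because $\sv$ is multiplicative (Remark \ref{rem:mult}), the definition \eqref{defsvH} gives $\sv\,\HH_n^\mot = (\sv\,\MM^\mot)(\sv\,\GG_n^\mot)\cdots(\sv\,\GG_2^\mot)$ as an identity of generating series. For each $k \geq 2$ the series $\GG_k$ is the ``$\GG_1$'' of the reduced $(n{-}k{+}1)$-variable problem in the variables $z_k,\ldots,z_n$, whose conjugating series is $\HH^{(k)} = \MM\,\GG_n\cdots\GG_{k+1}$; the inductive hypothesis then yields both $\sv\,\GG_k^\mot = (\sv\,\HH^{(k)})^{-1}\,\overline{\GG_k^t}\,(\sv\,\HH^{(k)})\,\GG_k$ and $\sv\,\HH^{(k)} = \overline{(\HH^{(k)})^t}\,\HH^{(k)}$. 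Writing $T_k = \sv\,\HH^{(k)}$ and using $\HH^{(k)} = \HH^{(k+1)}\GG_{k+1}$ one obtains the recursion $T_{k-1} = \overline{\GG_k^t}\,T_k\,\GG_k$, so that each factor telescopes against the running product: $T_k\cdot\sv\,\GG_k^\mot = \overline{\GG_k^t}\,T_k\,\GG_k = T_{k-1}$. Starting from $T_n = \sv\,\MM^\mot = \MM^t\MM$ and iterating down to $k=2$ collapses the product to $T_1 = \overline{\HH_n^t}\,\HH_n$, which is \eqref{thm:sv:claim}.

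Next I would prove the formula \eqref{thm:sv:eq} by applying $\sv$ to $\Delta\GG_1^\mot = (\HH_n^\dR)^{-1}\,\GG_1^\mot\,\HH_n^\dR\,\GG_1^\dR$ from Theorem \ref{thm:main}. Since the only motivic periods reside in $\GG_1^\mot$, the operator $\widetilde S \circ \Pi^\dR$ acts on that factor alone while the surrounding de Rham series remain fixed under $\mu$, giving
\beq
\sv\,\GG_1^\mot = (\HH_n^\dR)^{-1}\,\big(\widetilde S\,\GG_1^\dR\big)\,\HH_n^\dR\,\GG_1^\dR .
\eeq
Lemma \ref{lem:SGHGH} supplies $S\,\GG_1^\dR = \HH_n^\dR(\GG_1^\dR)^{-1}(\HH_n^\dR)^{-1}$, and since $\widetilde{\,\cdot\,}$ is an order-preserving algebra homomorphism on generating series, $\widetilde S\,\GG_1^\dR = \widetilde{\HH_n^\dR}\,\overline{\GG_1^t}\,\widetilde{(\HH_n^\dR)^{-1}}$. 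Applying \eqref{sv:tildeGinv} and \eqref{sv:tildeMinv} factorwise to $\HH_n^{-1} = \GG_2^{-1}\cdots\GG_n^{-1}\MM^{-1}$ gives $\widetilde{(\HH_n^\dR)^{-1}} = \overline{\HH_n^t}$ and hence $\widetilde{\HH_n^\dR} = (\overline{\HH_n^t})^{-1}$. Substituting and passing to the period map to drop de Rham superscripts collapses the expression to $(\overline{\HH_n^t}\HH_n)^{-1}\,\overline{\GG_1^t}\,(\overline{\HH_n^t}\HH_n)\,\GG_1$, which is \eqref{thm:sv:eq} once \eqref{thm:sv:claim} is used to write $\overline{\HH_n^t}\HH_n = \sv\,\HH_n^\mot$.

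The main obstacle is the tensor-factor bookkeeping underlying the displayed equation of the third paragraph: one must justify that under $\mu \circ (\widetilde S \circ \Pi^\dR \otimes 1)$ the de Rham conjugation $(\HH_n^\dR)^{-1}(\cdots)\HH_n^\dR$ is left intact while $\widetilde S$ acts only on $\GG_1^\mot$. This rests on the fact that both $S$ and the signed conjugation $\widetilde{\,\cdot\,}$ act on the \emph{commutative} period coefficients of the generating series and therefore preserve the order of the noncommuting braid and zeta generators; the antipode, although an anti-homomorphism on ${\cal P}^\dR$, thus reduces to an order-preserving map on the series. A secondary point demanding care is checking that $\GG_k$ genuinely instantiates the hypotheses of the theorem at the smaller value $n{-}k{+}1$ — matching alphabet, adjoint action of the zeta generators, and conjugating series $\HH^{(k)}$ — so that the induction is well founded and the telescoping of the second paragraph is legitimate.
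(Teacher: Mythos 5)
Your proposal is correct and follows essentially the same route as the paper's proof: the direct computation of $\sv\,\GG_1^\mot$ from Theorem \ref{thm:main}, Lemma \ref{lem:SGHGH} and the identities \eqref{sv:tildeGinv}, \eqref{sv:tildeMinv} reproduces the paper's equation \eqref{thm:sv:1}, and your telescoping recursion $T_{k-1}=\overline{\GG_k^t}\,T_k\,\GG_k$ with $T_n=\sv\,\MM^\mot=\MM^t\MM$ is exactly the paper's joint induction in the series $\HH_{n,k}=\MM\,\GG_n\cdots\GG_{k+1}$ (your $\HH^{(k)}$), merely phrased as induction on the number of variables rather than a downward induction on $k$. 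Both points you flag as requiring care — that $S$ and $\widetilde{\,\cdot\,}$ act only on the commutative period coefficients, and that $\GG_k$ instantiates the theorem for the reduced $(n{-}k{+}1)$-variable problem — are handled the same way (and equally implicitly) in the paper.
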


\begin{proof}
To compute the series
\beq
\sv\, \GG^\mot_1 =   \mu \circ (\widetilde S \circ \Pi^\dR \otimes 1) \circ \Delta \GG^\mot_1
\eeq
we use our main Theorem \ref{thm:main} for the coaction, which gives
\beq
\mu \circ (\widetilde S \circ \Pi^\dR \otimes 1) \circ \Delta  \GG^\mot_1 = \left( \HH^\dR_n\right)^{-1} \big( \widetilde{S}\, \GG_1^\dR \big)\, \HH^\dR_n\, \GG_1^\dR\, .
\eeq
The next step is to apply Lemma \ref{lem:SGHGH} for the antipode followed by \eqref{sv:tildeGinv} and \eqref{sv:tildeMinv},
\beq
\widetilde{S} \GG^\dR_1 = \widetilde{\HH^\dR_n}\, \widetilde{(\GG^{\dR}_1)^{-1}}\, \widetilde{(\HH_n^\dR)^{-1}}
= \overline{((\HH^\dR_n)^{-1}) ^t}\, \overline{(\GG^{\dR}_1)^{t}}\, 
\overline{(\HH_n^\dR)^{t}} \, ,
\eeq
such that, 
\beq\label{thm:sv:1}
\sv\, \GG^\mot_1 = \big( \overline{\HH}_n^{\,t} \HH_n \big)^{-1}\, \overline{\GG^{t}_1}\, \big( \overline{\HH}_n^{\,t} \HH_n \big) \GG_1\, .
\eeq
So the first part of the Theorem, \eqref{thm:sv:eq}, follows if we can prove \eqref{thm:sv:claim}. In fact, both \eqref{thm:sv:claim} and \eqref{thm:sv:eq}, now follow by a joint induction.

To see this induction, we introduce the shorthand
\beq
\HH_{n,k} = \MM \, \GG_n  \cdots \GG_{k+1}  \, , \qquad 1\leq k \leq n 
\label{defhnk}
\eeq
for a reduced variant of the product $\HH_{n}$ that omits $\GG_{2} ,\GG_{3} ,\ldots,\GG_{k} $, where, in particular, $\HH_{n,1} = \HH_n$ and $\HH_{n,n} = \MM$. Note that, after replacing $n$ by $n{-}k{+}1$ and relabelling the subscripts, \eqref{thm:sv:1} is equivalent to
\beq\label{thm:sv:2}
\sv\, \GG^\mot_k = \big( \overline{\HH}^{\,t}_{n,k} \,\HH_{n,k} \big)^{-1}\, \overline{\GG^{t}_k}\, \big( \overline{\HH}^{\,t}_{n,k} \,\HH_{n,k} \big) \GG_k\, .
\eeq
We claim that, with $n$ fixed and $1\leq k \leq n$,
\beq\label{sv:ind1}
\sv\, \GG^\mot_k = \left(\sv\, \HH^\mot_{n,k}\right)^{-1} \,\overline{\GG_k^t}\, \left(\sv\, \HH^\mot_{n,k}\right) \GG_k
\eeq
and
\beq\label{sv:ind2}
\sv\, \HH^\mot_{n,k} =\overline{\HH}^{\,t}_{n,k} \,\HH_{n,k}\, .
\eeq
These recover the Theorem for $k=1$. For $k=n$, we have $\HH_{n,n} = \MM$, and \eqref{sv:ind2} follows from $\sv\, \MM^\mot = \MM^t \MM$, see Lemma \ref{lem:svMMM}. Moreover, for $k=n$, \eqref{sv:ind1} then follows from \eqref{thm:sv:2}. Now suppose that \eqref{sv:ind1} and \eqref{sv:ind2} have been shown for all $k\geq \ell {+} 1$. Then consider
\beq
\HH_{n,\ell} = \HH_{n,\ell+1} \GG_{\ell+1}\, .
\eeq
Using multiplicativity of the single-valued map and \eqref{sv:ind1}, with $k=\ell{+}1$, we find
\beq
\sv\, \HH^\mot_{n,\ell} = \sv\, \HH^\mot_{n,\ell+1} \,\,\sv\, \GG^\mot_{\ell+1} = \overline{\GG}_{\ell+1}^{\,t}\, \left(\sv\, \HH^\mot_{n,\ell+1}\right) \GG_{\ell+1}\, .
\eeq
By \eqref{sv:ind2}, with $k=\ell{+}1$, this implies that
\beq
\sv\, \HH^\mot_{n,\ell} = \overline{\HH}^{\,t}_{n,\ell} \,\HH_{n,\ell} \, .
\eeq
So \eqref{sv:ind2} holds for $k=\ell$. Moreover, this implies \eqref{sv:ind1} for $k=\ell$ using equation \eqref{thm:sv:2}. In particular, the Theorem (case $k=1$) follows.
\end{proof}

\subsection{Alternative proof via change of alphabet}
\label{sec:7.3}

We shall here present an alternative proof of Theorem \ref{thm:sv} by directly matching the expression (\ref{thm2.2eq}) for the generating series of single-valued MPLs with their construction in \cite{DelDuca:2016lad}. The result of the reference on the generating series of single-valued MPLs in any number of variables is equivalent to
\beq
\sv\, \GG^\mot_1[e_{1,0} ,\{e_{1,\ell}\};z_1] =   \overline{\GG^t_1}\big[e_{1,0}  ,\{ \widehat e_{1,\ell} \} ; z_1 \big] \, \GG_1[e_{1,0} ,\{e_{1,\ell}\};z_1]
\label{svalt.01}
\eeq
and involves a change of alphabet for the
braid generators $e_{1,\ell}$ with $\ell = 2,\ldots,n{+}1$ similar to the multivariate Ihara formula (\ref{ykconj})\footnote{This is equivalent to equation (3.60) of \cite{DelDuca:2016lad}, where the generating series in the reference are reversed in comparison to ours (that is why we do not have $(\sv \, Z_\ell)^{-1}e_{1,\ell}  (\sv \, Z_\ell) $).},
\beq
\widehat e_{1,\ell} = (\sv \, Z^\mot_\ell) \,e_{1,\ell} \, (\sv \, Z^\mot_\ell)^{-1} \, .
\label{svalt.02}
\eeq
The motivic associators $Z^\mot_\ell$ are defined by (\ref{zkconj}) as shuffle-regularized limits $z_1 \rightarrow z_\ell$ of $ \GG^\mot_1$, and the proof of (\ref{svalt.01}) in \cite{DelDuca:2016lad} relies on the fact that the single-valued map commutes with shuffle regularization. We emphasize that the complex conjugation and the reversal prescription $^t$ of the series $\overline{\GG^t_1}[e_{1,0}  ,\{ \widehat e_{1,\ell} \} ; z_1 ]$ on the right-hand side of (\ref{svalt.01}) does not apply to the single-valued MPLs and the Lie words in braid generators obtained from the expansion of the associators in (\ref{svalt.02}). Instead, the expansion is
\beq
\overline{\GG^t_1}\big[e_{1,0}  ,\{ \widehat e_{1,\ell} \} ; z_1 \big] = 1 + \sum_{r=1}^\infty 
\sum_{i_1,\ldots,i_r=0 \atop {i_1,\ldots,i_r \neq 1 }}^{n+1}  \widehat e_{1,i_1} \ldots \widehat e_{1,i_r}
\overline{G(z_{i_1},\ldots,z_{i_r};z_1)}
\label{svalt.00}
\eeq
according to (\ref{defgg})
with $\widehat e_{1,0} = e_{1,0}$ and all the $\widehat e_{1,\ell}$ with $\ell \geq 2$ as in (\ref{svalt.02}). 

With the expansion (\ref{svalt.00}) in mind, we can rewrite the statement of Theorem \ref{thm:sv} in the alternative form
\beq
\sv\, \GG^\mot_1[e_{1,0} ,\{e_{1,\ell}\};z_1] =   \overline{ \GG_1}^t\big[e_{1,0}  \, ,\,\{(\sv \,\HH^\mot_n)^{-1}e_{1,\ell} \sv \,\HH^\mot_n \} ; z_1 \big] \, \GG_1[e_{1,0} ,\{e_{1,\ell}\};z_1]
\label{svalt.03}
\eeq
upon inserting $1= ( \sv \,\HH^\mot_n) (\sv \,\HH^\mot_n)^{-1}$ between any pair of braid generators $e_{1,i}$ and using the consequence $(\sv \,\HH^\mot_n)^{-1}\,e_{1,0}\, \sv \,\HH^\mot_n = e_{1,0}$ of the fact that $e_{1,0}$ commutes with both zeta generators (see Section \ref{sec:6.2}) and the braid generators in all the constituents $\sv \,\GG^\mot_2,\ldots, \sv \,\GG^\mot_n$ of $\sv \,\HH^\mot_n$. The same type of manipulations were used in Section \ref{sec:3.1} to attain the alternative form (\ref{outprf.04}) of the motivic coaction.

The leftover task in the present proof is to match the form (\ref{svalt.03}) of the Theorem with the established formulation (\ref{svalt.01}) of the single-valued map. This matching can be done at the level of the letters by showing
\beq
 (\sv \, Z^\mot_\ell) \, e_{1,\ell} \,  (\sv \, Z^\mot_\ell)^{-1} = (\sv \,\HH^\mot_n)^{-1} \, e_{1,\ell} \, \sv \,\HH^\mot_n
 \label{svalt.04}
\eeq
for all of $\ell = 2,\ldots,n{+}1$, or equivalently
\beq
(\sv \, \MM^\mot)^{-1} \, e_{1,\ell} \, \sv\, \MM^\mot = \sv\, (\GG^\mot_n\cdots\GG^\mot_2 Z^\mot_\ell)\, e_{1,\ell}\, \sv\,(\GG^\mot_n\cdots\GG^\mot_2 Z^\mot_\ell)^{-1}\, .
 \label{svalt.05}
\eeq
This in turn is a consequence of (\ref{keyeq}) (which was proven as Lemma \ref{lem:DIcon} in Section \ref{sec:6.2})
under the single-valued map\footnote{As a map from ${\cal P}^\dR$ to ${\cal P}^\mot$, see \cite{Brown:2013gia, Francislecture} and Remark \ref{svdrm}, followed by the period map.}
$(\MM^\dR, \,\GG^\dR_i,\,  Z^\dR_\ell) \rightarrow (\sv \, \MM^\mot, \, \sv \, \GG^\mot_i,\, \sv \, Z^\mot_\ell)$. 
More precisely, deducing (\ref{svalt.05}) from Lemma \ref{lem:DIcon} makes use of the fact that the single-valued map preserves functional identities among motivic MZVs and MPLs and their de Rham projections \cite{Brown:2013gia, DelDuca:2016lad, Brown:2018omk}.

\appendix

\section{Identities from free Lie algebras}
\label{app:Lie}

For some set ${\cal S}$, consider the set of words ${\cal W} = {\cal S}^\times$. A word, $A \in {\cal W}$ is an ordered set $A = a_1 a_2 \cdots a_m$ of elements $a_i$ of ${\cal S}$. Taking arbitrary finite linear combinations of words gives the \emph{free algebra} $k\!\left<{\cal S} \right>$ over ${\cal S}$, with product given by concatenation of words, which we denote $AB$, for two words $A$ and $B$. A natural inner product on $k\!\left<{\cal S} \right>$ is given by
\beq\label{lie:inner}
(A,B) = \left\{\begin{matrix} 1 & \qquad \text{if}~A = B \, , \\ 0 & \qquad \text{if}~A\neq B \, .\end{matrix} \right.
\eeq
We can also define a second product on $k\!\left<{\cal S} \right>$: the \emph{shuffle product}. For a letter $a \in {\cal S}$ and $\emptyset$ the empty word, define $a\shuffle \emptyset = \emptyset \shuffle a = a$. For some words $A,B$ and letters $a,b$, the shuffle product is inductively defined by
\beq\label{lie:shuffle}
(Aa) \shuffle (Bb) = (A \shuffle Bb) a + (Aa\shuffle B) b\, ,
\eeq
for example
\beq
aa' \shuffle bb' = aa'bb'+aba'b+baa'b' + abb'a'+bab'a' + bb'aa' \, .
\eeq
The shuffle product together with the deconcatenation coproduct\footnote{I.e.\ the coproduct $\delta A = \sum_{A=BC} B\otimes C$ as opposed to the de-shuffle coproduct in (\ref{deshuffagain}) below.} gives $k\!\left<{\cal S} \right>$ the structure of a Hopf algebra, with antipode defined by
\beq\label{lie:antipode1}
\alpha(A) = (-1)^{|A|} A^t \, ,
\eeq
where $|A|$ is the length of the word $A$ and $A^t$ is the reversed word. This implies the following identities for $A\neq \emptyset$
\beq\label{lie:antipode2}
\sum_{B,C} (A,BC) \alpha(B) \shuffle C = 0\, ,\qquad \sum_{B,C} (A,BC) B \shuffle \alpha(C) = 0 \, .
\eeq
We refer to these as the \emph{antipode identities}.

\subsection{Lie polynomials} 

A polynomial $P \in k \!\left<{\cal S} \right>$ is a \emph{Lie polynomial} if it can be written as a sum of Lie monomials, given by total bracketings of letters by the commutator
\beq
[a,b] = ab - ba \, .
\eeq
For instance, $P = [a,b] + [[a,b],b]$ is a Lie polynomial. Write
\beq
\ell[a,b,c,\ldots,d] = [[\ldots[[a,b],c],\ldots],d]
\eeq
for the total \emph{left bracketing} of some letters $a,b,c,\ldots,d$. For a letter, $a$, and a word, $A$, the left bracketing of the word $aA$ is given by
\beq\label{lie:left}
\ell[a, A] = \sum_{B,C} (A,\alpha(B)\shuffle C)\, BaC \, .
\eeq
Indeed, by \eqref{lie:shuffle}, we have that
\beq
\sum_{B,C} (bA,\alpha(B)\shuffle C)\, BaC = \sum_{B,C} (A,\alpha(B)\shuffle C)\, B[a,b]C \, ,
\eeq
and so \eqref{lie:left} follows by induction. $P$ is a Lie polynomial if and only if \cite{Reutenauer}
\beq
(A\shuffle B, P) = 0 \, , \ \ \ \ A,B \neq \emptyset \, .
\eeq
This condition can equivalently be expressed as
\beq
\delta_\shuffle P = P\otimes 1 + 1 \otimes P\, ,
\label{deshuffagain}
\eeq
where $\delta_\shuffle A = \sum_{B,C} (A,B\shuffle C) B\otimes C$ is the de-shuffle coproduct. 

\subsection{Infinite series} 

Consider formal infinite series of the form
\beq
\Psi = 1 + \sum_A \psi(A) A\, ,
\eeq
where the sum is over all non-empty words $A \in {\cal W}$ and $\psi(A)$ are some coefficients. Such a series is called \emph{group-like} if the coefficients satisfy
\beq
\psi(A)\psi(B) = \sum_C (C,A\shuffle B) \psi(C)
\eeq
for all $A,B \in {\cal W}$.
In other words, using the de-shuffle coproduct, a group-like series satisfies
\beq
\delta_\shuffle \Psi = \Psi \otimes \Psi \, .
\eeq
For such a group-like series, $\Psi$, its inverse series (with respect to the concatenation product)~is
\beq\label{lie:phi-inv}
\Psi^{-1} = 1 + \sum_A \psi(A) \alpha(A) \, ,
\eeq
where the antipode $\alpha(A)$ is given by (\ref{lie:antipode1}).
Indeed, the antipode identity, \eqref{lie:antipode2}, implies that $\Psi\Psi^{-1} = \Psi^{-1}\Psi = 1$. Moreover, for some letter $a$, the conjugation of $a$ by a group-like series is itself an infinite Lie series,
\beq\label{lie:phi-conj}
\Psi^{-1} a \Psi = \sum_{B,C} \psi \big(\alpha(B)\shuffle C \big) B a C = \sum_A \psi(A) \ell[a, A]\,,
\eeq
which follows from \eqref{lie:left}.

\section{The pure braid group and the main theorem}
\label{sec:D}

We use the action of braids on MPLs at a key step, Lemma \ref{lem:bigphiprod}, in the proof of our first main theorem. There, we use the series $\BB^\dR(\sigma_{(ab)})$ given by (\ref{eq:braid:cycle}) which implements the braid $\sigma_{(a,b)} = \sigma_{a,a+1}\cdots \sigma_{b-1,b}$, which induces a cyclic permutation $(ab)$ on the indices. Other braids also implement the same cyclic permutation $(ab)$ of the indices. These braids differ from $\sigma_{(a,b)}$ by elements of the \emph{pure braid group}. However, it can be seen that acting with elements of the pure braid group does \emph{not} lead to different formulas. Indeed, consider the pure braid $\sigma = \sigma_{i,i+1} \sigma_{i,i+1}$. The series that implements this pure braid is
\begin{align}
\BB^\dR(\sigma) = \left(\tau_{\sigma_{i,i+1}} \BB^\dR(\sigma_{i,i+1})\right) \BB^\dR(\sigma_{i,i+1})
&= \Phi^\dR\left(\sum_{j=0}^{i-1} e_{j,i},e_{i,i+1}\right) \Phi^\dR\left(e_{i,i+1},\sum_{j=0}^{i-1} e_{j,i+1}\right)  \\ 
&\quad \times \Phi^\dR\left(\sum_{j=0}^{i-1} e_{j,i+1},e_{i,i+1}\right) \Phi^\dR\left(e_{i,i+1},\sum_{j=0}^{i-1} e_{j,i}\right) \, .
\notag
\end{align}
However, using $\Phi(e_0,e_1)\Phi(e_1,e_0) = 1$, we see that
\beq
\BB^\dR(\sigma) = 1 \, .
\eeq
In fact, all pure braids can be generated from braids of the form
\beq
\sigma_{((a,b))} = \sigma_{(ab)}\sigma_{(ba)} =  (\sigma_{a,a+1}\cdots \sigma_{b-1,b}) (\sigma_{b-1,b} \cdots \sigma_{a,a+1})\, ,
\eeq
which `wraps' strand $a$ clockwise around strands $a{+}1,\ldots,b$ and then returns it to its original position. By the same type of argument used in the proof of Lemma \ref{lem:braid:cycle}, we find
\beq
\BB^\dR(\sigma_{(ba)}) = \prod_{i=a}^{b-1} \Phi^\dR\left(  \sum_{j=0}^{i-1} e_{j,b} \, , \, e_{b,i} \right) \Phi^\dR\left( e_{b,i} \, , \, \sum_{j=0}^{i-1} e_{j,i} \right) \, ,
\eeq
where the order of multiplication is left-to-right from $i=a$ to $i=b-1$. Combining this with Lemma \ref{lem:braid:cycle}, we find that
\beq
\BB^\dR(\sigma_{((a,b))}) = \tau_{\sigma_{(ab)}}\BB^\dR(\sigma_{(ba)}) \BB^\dR(\sigma_{(ab)}) = 1\, ,
\eeq
where we again use $\Phi(e_0,e_1)\Phi(e_1,e_0) = 1$. In other words, in the proof of Lemma \ref{lem:bigphiprod}, we can use \emph{any} braid that corresponds to the cyclic permutation $(1,2,\ldots,\ell{-}1)$ and find the same result. Hence, the expression (\ref{keyrst}) for the limit of ${\cal G}_n^\dR$ and the resulting action (\ref{eq:adaction}) of zeta generators do not depend on our choice $\sigma_{(1,\ell-1)} = \sigma_{1,2} \sigma_{2,3}\ldots \sigma_{\ell-2,\ell-1}$ of cycle braid.

\newpage

\providecommand{\href}[2]{#2}\begingroup\raggedright\endgroup


\end{document}